\pdfoutput=1

\documentclass[twocolumn, prx, aps, 10pt, floatfix, superscriptaddress, nofootinbib]{revtex4-2}

\usepackage[english]{babel}
\usepackage[dvipsnames,table]{xcolor}

\usepackage{amsmath,amsthm,amssymb}
\usepackage{mathtools}
\usepackage{physics}

\DeclareMathOperator*{\argmin}{arg\,min}
\DeclareMathOperator*{\argmax}{arg\,max}

\usepackage[export]{adjustbox}
\usepackage{tikz,pgfplots}
\pgfplotsset{compat=newest}

\usetikzlibrary{pgfplots.groupplots}
\usetikzlibrary{graphs,graphs.standard}
\usetikzlibrary{quantikz2}
\usetikzlibrary{tikzmark}
\usetikzlibrary{fadings,shapes.arrows,shadows,intersections}
\usepackage{float}
\usepackage{algorithmicx}
\usepackage{algpseudocode}
\usepackage{subcaption}

\usepackage{capt-of}
\usepackage{caption}

\DeclareMathOperator{\cost}{cost}

\usepackage{array}
\usepackage{booktabs}

\theoremstyle{plain}
\newtheorem{theorem}{Theorem}
\newtheorem{corollary}{Corollary}
\newtheorem{proposition}{Proposition}

\theoremstyle{definition}
\newtheorem{lemma}{Lemma}

\newcommand{\C}[0]{\mathbb{C}}

\newcommand{\Span}[1]{\operatorname{span}\!\left(#1\right)}

\newcommand{\Nof}[1]{\mathcal{N}({#1})}
\newcommand{\Ball}[2]{B_{#1}(#2)}
\newcommand{\Pof}[1]{\Pi_{\mathcal{N}({#1})}}

\newcommand{\inducedG}[1][G]{\mathcal G_{\mathrm{VC}}(#1)}

\usepackage[colorlinks=true,linkcolor=blue,citecolor=blue,urlcolor=blue]{hyperref}
\usepackage[nameinlink,noabbrev]{cleveref}

\newcounter{algbox}
\renewcommand{\thealgbox}{\arabic{algbox}}
\crefname{algbox}{algorithm}{algorithms}
\Crefname{algbox}{Algorithm}{Algorithms}

\crefname{appendix}{appendix}{appendices}
\Crefname{appendix}{Appendix}{Appendices}

\begin{document}

\title{Iterative quantum algorithms for the minimum vertex cover problem based on continuous-time quantum walks}
\date{\today}

\author{Ruben Pariente Bassa}
\affiliation{SINTEF AS, Department of Mathematics and Cybernetics, Oslo, Norway}

\author{Finley A. Quinton}
\affiliation{Department of Industrial Economics and Technology Management, NTNU, Trondheim}

\author{Franz G. Fuchs}
\affiliation{SINTEF AS, Department of Mathematics and Cybernetics, Oslo, Norway}
\affiliation{University of Oslo, Department of Mathematics, Oslo, Norway}

\author{Pascal Halffmann}
\affiliation{Fraunhofer Institute for Industrial Mathematics ITWM, Department of Financial Mathematics, Kaiserslautern, Germany}

\begin{abstract}
    We introduce a constraint-preserving hybrid quantum-classical greedy framework for the minimum vertex cover problem, which extends directly to maximum independent set by bitwise complementation.
    The framework uses projected Pauli-X terms whose sum preserves the feasible subspace and acts within it exactly as the adjacency matrix of a layered graph of feasible covers.
    This graph is connected, so every feasible cover is linked to the configuration containing all vertices by a sequence of allowed single-vertex flips.
    Starting from this configuration, the corresponding continuous-time quantum walk propagates amplitude into layers containing progressively smaller covers.
    We rank vertices using either their marginal cover probabilities or the expected cover size obtained after fixing each candidate vertex in the cover, and use these rankings to guide recursive greedy reductions.
    Across several random-graph families, with walk times fixed using independent calibration ensembles, the quantum-informed algorithms achieve lower mean approximation ratios and solve a larger fraction of instances optimally than their corresponding classical greedy baselines.
    The conditioned-energy strategy performs best on the tested instances and retains algorithmic performance close to the exact continuous-time limit under low-depth Trotterisation.
    For bounded-degree graphs, each Trotter layer has circuit depth independent of system size, and the framework requires neither penalty terms nor variational training.
\end{abstract}

\maketitle

\onecolumngrid
\vspace{-7\baselineskip}
\begin{center}
  \captionsetup{hypcap=false}
    \centering
    \begin{adjustbox}{width=\textwidth}
    \begin{tikzpicture}[
    >=Latex,
    node distance=1.6cm and 1.5cm,
    box/.style={
        draw,
        rounded corners=2pt,
        align=center,
        minimum width=3.2cm,
        minimum height=1.1cm,
        font=\small
    },
    smallbox/.style={
        draw,
        rounded corners=2pt,
        align=center,
        minimum width=3.2cm,
        minimum height=1.1cm,
        font=\small
    },
    vtx/.style={circle, draw, inner sep=1.2pt, minimum size=14pt, font=\scriptsize},
    chosen/.style={circle, draw, fill=black!20, inner sep=1.2pt, minimum size=14pt, font=\scriptsize},
    removed/.style={circle, draw=black!35, text=black!40, fill=black!8, inner sep=1.2pt, minimum size=14pt, font=\scriptsize},
    statenode/.style={
        rounded rectangle,
        draw=blue!60!black,
        fill=blue!6,
        inner xsep=1.pt,
        inner ysep=1.pt,
        font=\ttfamily\scriptsize
    },
    statechosen/.style={
        rounded rectangle,
        draw=blue!60!black,
        fill=blue!18,
        inner xsep=1.pt,
        inner ysep=1.pt,
        font=\ttfamily\scriptsize
    },
    edge/.style={line width=0.6pt},
    fadedge/.style={line width=0.6pt, draw=black!30},
    scorebar/.style={fill=black!70, draw=none},
    inducedge/.style={blue!55!black, line width=0.75pt},
    qrw/.style={red!70!black, line width=1.0pt},
    lab/.style={font=\scriptsize},
    weightlab/.style={font=\scriptsize\itshape}
]

\node[box] (state) {CTQW on induced graph\\$\mathcal{G}_{\mathrm{VC}}(G)$};

\node[smallbox, right=of state] (measure) {Vertex scores\\$s_i \in \{P_i,E_i\}$};

\node[smallbox, right=of measure] (select) {Greedy selection\\$v^\ast=\arg\max/\min_i s_i$};

\node[smallbox, right=of select] (reduce) {Reduce instance\\$G\leftarrow G \setminus\{v^\ast\}$};

\draw[->] (state) -- (measure);
\draw[->] (measure) -- (select);
\draw[->] (select) -- (reduce);

\def\loopheight{1.0}
\def\labelheight{0.8}

\draw[->, rounded corners=8pt]
    ([yshift=0.15cm]reduce.north) -- ++(0,\loopheight)
    -- ($(state.north)+(0,\loopheight+0.15)$)
    -- ([yshift=0.15cm]state.north);

\node[lab] at ($(state.north)!0.5!(reduce.north)+(0,\labelheight)$)
    {repeat until no edges remain/classically solvable};

\begin{scope}[shift={(-1.5,2.75)}]
\begin{adjustbox}{scale=0.4}
\node[vtx] (g1) at (-1.0,0.7) {};
\node[vtx] (g2) at (0.2,1.2) {};
\node[vtx] (g3) at (1.0,0.2) {};
\node[vtx] (g4) at (0.1,-0.9) {};
\node[vtx] (g5) at (-1.2,-0.5) {};

\draw[edge] (g1)--(g2);
\draw[edge] (g2)--(g3);
\draw[edge] (g1)--(g3);
\draw[edge] (g3)--(g4);
\draw[edge] (g4)--(g5);
\draw[edge] (g5)--(g1);
\draw[edge] (g2)--(g4);
\end{adjustbox}
\end{scope}

\begin{scope}[shift={(-2.,-2.0)}]



\def\xy{2,1.}
\def\xdist{1}
\def\ydist{.75}

\coordinate (rootvc) at (\xy);

\node[statechosen] (s11111) at (rootvc) {11111};

\node[statenode] (s01111) at ($(rootvc)+(-2*\xdist,-\ydist)$) {01111};
\node[statenode] (s10111) at ($(rootvc)+(-\xdist,-\ydist)$) {10111};
\node[statenode] (s11011) at ($(rootvc)+(0,-\ydist)$) {11011};
\node[statenode] (s11101) at ($(rootvc)+(\xdist,-\ydist)$) {11101};
\node[statenode] (s11110) at ($(rootvc)+(2*\xdist,-\ydist)$) {11110};

\node[statenode] (s01101) at ($(rootvc)+(-4*\xdist/3,-2*\ydist)$) {01101};
\node[statenode] (s11010) at ($(rootvc)+(0,-2*\ydist)$) {11010};
\node[statenode] (s10110) at ($(rootvc)+(4*\xdist/3,-2*\ydist)$) {10110};

\draw[inducedge] (s11111)--(s01111);
\draw[inducedge] (s11111)--(s10111);
\draw[inducedge] (s11111)--(s11011);
\draw[inducedge] (s11111)--(s11101);
\draw[inducedge] (s11111)--(s11110);

\draw[inducedge] (s01111)--(s01101);
\draw[inducedge] (s11101)--(s01101);

\draw[inducedge] (s10111)--(s10110);
\draw[inducedge] (s11110)--(s10110);

\draw[inducedge] (s11011)--(s11010);
\draw[inducedge] (s11110)--(s11010);

\end{scope}

\begin{scope}[shift={(5.2,-2.1)}]

\node[vtx,fill=red!90!green] (b1) at (-1.0,0.7) {1};
\node[vtx,fill=red!30!green] (b2) at (0.2,1.2) {2};
\node[vtx,fill=green]        (b3) at (1.0,0.2) {3};
\node[vtx,fill=red!60!green] (b4) at (0.1,-0.9) {4};
\node[vtx,fill=red]          (b5) at (-1.2,-0.5) {5};

\draw[edge] (b1)--(b2);
\draw[edge] (b2)--(b3);
\draw[edge] (b3)--(b4);
\draw[edge] (b4)--(b5);
\draw[edge] (b5)--(b1);
\draw[edge] (b1)--(b3);
\draw[edge] (b2)--(b4);

\end{scope}

\begin{scope}[shift={(9.8,-2.1)}]

\node[vtx]     (c1) at (-1.0,0.7) {1};
\node[vtx]     (c2) at (0.2,1.2) {2};
\node[removed] (c3) at (1.0,0.2) {$v^\ast$};
\node[vtx]     (c4) at (0.1,-0.9) {4};
\node[vtx]     (c5) at (-1.2,-0.5) {5};

\draw[edge]    (c1)--(c2);
\draw[fadedge] (c2)--(c3);
\draw[fadedge] (c1)--(c3);
\draw[fadedge] (c3)--(c4);
\draw[edge]    (c4)--(c5);
\draw[edge]    (c5)--(c1);
\draw[edge]    (c2)--(c4);

\node[vtx] (d1) at (3.8,0.7) {1};
\node[vtx] (d2) at (5.0,1.2) {2};
\node[vtx] (d4) at (4.9,-0.9) {4};
\node[vtx] (d5) at (3.6,-0.5) {5};

\draw[edge] (d1)--(d2);
\draw[edge] (d4)--(d5);
\draw[edge] (d5)--(d1);
\draw[edge] (d2)--(d4);

\end{scope}

\end{tikzpicture}
    \end{adjustbox}
    \captionof{figure}{
    Schematic of the iterative quantum algorithm.
    The input graph $G$ defines the induced feasible-state graph $\inducedG$, on which a constraint-preserving quantum walk prepares a state supported entirely on valid vertex covers. 
    Observables and conditioned cost evaluations of the CTQW state produce per-vertex scores $s_i\in\{P_i,E_i\}$ on the original graph, where $P_i$ is the marginal probability that vertex $i$ belongs to the cover and $E_i$ is the expected cover size obtained from a walk in which vertex $i$ is fixed in the cover.
    This converts global information in the feasible configuration space into locally accessible greedy decision signals.
    A vertex is then fixed according to the chosen rule, the graph is reduced accordingly, and the procedure is repeated on the reduced instance.
    In this way, the method combines constrained quantum exploration with recursive classical reduction, without penalty terms or variational optimisation.
    }
    \label{fig:recursive_vertex_fixing}
  \captionsetup{hypcap=true}
\end{center}
\twocolumngrid

\section{Introduction and related work}

The minimum vertex cover (MVC) and maximum independent set (MIS) problems are canonical NP-complete problems in combinatorial optimisation, dating back to Karp's seminal work on polynomial-time reductions~\cite{karp2009reducibility}. Given a graph $G=(V,E)$, MVC asks for a minimum-size subset of vertices incident to every edge, while MIS asks for a maximum-size subset of pairwise non-adjacent vertices. These problems are complementary: selecting a vertex in an independent set is equivalent to excluding it from the corresponding vertex cover. Thus, if $x_i\in\{0,1\}$ indicates whether vertex $i$ is included in a vertex cover and $y_i=1-x_i$ indicates inclusion in the complementary independent set, then
\begin{equation*}
    |\mathrm{MVC}| + |\mathrm{MIS}| = |V|.
\end{equation*}
This duality makes MVC and MIS a natural testbed for classical approximation algorithms, heuristic methods, and quantum optimisation techniques.

We formulate the minimum vertex cover problem as
\begin{equation}
    \min_{x\in\{0,1\}^{|V|}} \sum_{i \in V} c_i x_i
    \quad \text{s.t.} \quad
    x_i + x_j \ge 1
    \;\; \forall (i,j) \in E,
\label{eq:mvc}
\end{equation}
where $x_i=1$ indicates that vertex $i$ is selected in the cover and
$c_i\ge 0$ denotes its weight.
Equivalently, we identify a vertex cover $C\subseteq V$ with its indicator string $C\in\{0,1\}^{|V|}$, where $C_i=1$ if and only if $i\in C$.
For an algorithm returning a cover $C$, we define the approximation ratio as $r(C):=|C|/\tau(G)$, where $\tau(G)$ is the minimum vertex-cover size; thus $r(C)=1$ is optimal, and larger values indicate worse solutions.
The constraint enforces that every edge has at least one endpoint in the cover.
Unless stated otherwise, all MVC instances considered in this work are unweighted, so $c_i\equiv 1$ and the cost of a cover $C\subseteq V$ is $|C|$.
Classically, MVC admits a factor-2 approximation via maximal matchings, linear-programming relaxations, and primal-dual methods~\cite{vazirani2001approximation};
under the Unique Games Conjecture, a standard assumption in hardness-of-approximation theory, no polynomial-time algorithm can achieve an approximation factor of $2-\epsilon$ for any fixed $\epsilon>0$~\cite{khot2002power}.
Greedy algorithms remain important in practice because they are simple, scaleable, and often competitive. Largest-degree-first rules select vertices that cover many edges at once, while smallest-degree-first rules exploit low-degree structure either by excluding low-degree vertices from the cover or by selecting their neighbours~\cite{greedisgood,krysta2024ultimate}. More broadly, greedy ideas underlie repair heuristics, local improvement methods, primal heuristics in mixed-integer optimisation, and metaheuristics such as the greedy randomized adaptive search procedure (GRASP)~\cite{FeoResende1995}. Their main limitation is that the decision rule is typically local, depending only on degrees or nearby graph structure, and therefore may miss global correlations that determine high-quality solutions~\cite{BangJensen2004}.

Quantum optimisation provides several routes to incorporate such global structure. A common approach maps the constrained problem in \cref{eq:mvc} to a quadratic unconstrained binary optimisation (QUBO) problem or Ising Hamiltonian by adding penalty terms for violated constraints.
This strategy has been used in adiabatic quantum computation, quantum annealing, the quantum approximate optimisation algorithm (QAOA), and variational quantum eigensolvers (VQEs)~\cite{Lucas2014Ising,Pelofske2019AnnealerMVC,Farhi2014QAOA,peruzzo2014variational}.
Penalty encodings are flexible, but they introduce parameter-tuning issues and can create energy landscapes with many low-energy infeasible states~\cite{Fuchs_2022,herman2023constrained}.

A complementary line of work enforces feasibility by construction. The quantum alternating operator ansatz framework~\cite{hadfield2019quantum} provides a general method for designing mixers that preserve the feasible subspace throughout the evolution. Related constructions include XY-type mixers for cardinality and covering constraints~\cite{Wang2020XYMixers}, Grover-style mixers~\cite{Bartunik2020GroverMixers}, general subspace-preserving approaches~\cite{Fuchs_2022}, and stabiliser-based methods for constructing restricted mixers~\cite{fuchs2023LX}. Constraint-preserving methods have also been studied directly for MIS, where finite-depth ans\"atze can bias the dynamics toward larger independent sets~\cite{saleem2020max}. Other penalty-free mechanisms, including constrained evolution and quantum Zeno dynamics, similarly aim to keep the state inside the feasible space rather than penalizing constraint violations after they occur~\cite{herman2023constrained,bucher2025penalty}.

In parallel, quantum tree-search algorithms have established provable speedups for exact combinatorial optimisation. Quantum-walk techniques yield near-quadratic speedups for branch-and-bound and related tree-search procedures~\cite{montanaro2020quantum,chakrabarti2022universal}, while quantum tree-generation methods have been proposed for problems such as 0--1 knapsack~\cite{Wilkening_2025}. These results are important for fault-tolerant settings, but they are largely distinct from near-term heuristic approaches. Hybrid quantum-classical methods instead use shallow quantum circuits to generate information that guides classical reduction steps. Examples include warm-start methods based on classical relaxations~\cite{Egger2021WarmStart}, recursive QAOA~\cite{Bravyi2022hybridquantum}, quantum-informed recursive optimisation~\cite{Finzgar2024QIRO}, and scalable quantum-walk heuristics that evolve on the original graph and enforce feasibility through recursive freezing~\cite{luiz2025scalable}. These methods belong to the broader class of iterative quantum algorithms, where quantum subroutines and classical simplification are combined recursively~\cite{PhysRevA.110.052435,Dupont_2023}.

In this work we introduce a penalty-free, constraint-preserving quantum greedy framework for MVC and MIS.
The local feasibility-preserving operators used in our construction are the
projected Pauli mixer operators for MVC described in~\cite{hadfield2019quantum}.
Our contribution is to use these projected Pauli-$X$ operators in a different role. Instead of treating them as variational mixers inside a QAOA-type ansatz, we use them to define a continuous-time quantum-walk Hamiltonian on the feasible vertex-cover subspace. Restricted to this subspace, the Hamiltonian is the adjacency matrix of the induced feasible-state graph, whose vertices are feasible covers and whose edges connect covers that differ by a single feasibility-preserving bit flip, see \cref{fig:recursive_vertex_fixing}.

This induced-graph viewpoint turns the constrained walk into a source of vertex scores for classical greedy reduction. Starting from the all-in-the-cover state, the walk explores feasible covers without generating infeasible configurations. Local observables of the evolved state are then used to rank vertices on the original graph. Although the measured quantities are local, their values depend on the global structure explored by the constrained dynamics. At each iteration, the algorithm fixes either a high-scoring vertex into the cover or, in a complementary smallest-score rule, fixes the neighbours of a low-scoring vertex. The graph is reduced and the procedure is repeated on the residual instance.

We develop two main quantum-informed greedy algorithms.
The Quantum Probabilistic Greedy (QPG) algorithm scores vertices by their marginal occupation probability in the CTQW state, while the Quantum Energy Greedy (QEG) algorithm scores vertices by the expected cover size under a walk conditioned on fixing that vertex in the cover, thereby aggregating occupation information from across the graph.
Both algorithms are studied with largest-degree-first-like and smallest-degree-first-like reduction rules.
A third variant, Quantum Fidelity Greedy (QFG), uses an echo experiment to quantify how strongly a candidate vertex operator fails to commute with the remaining active operators and is discussed separately in \cref{sec:QFG}.
Numerical benchmarks show that the quantum-informed rules can outperform their classical greedy analogues on the graph families considered.

The remainder of the paper is organised as follows.
\Cref{sec:ConstrainedOperators} recalls the feasible Hilbert space and the
projected Pauli operators used to preserve the MVC constraints.
\Cref{sec:reachability_connectivity} proves that these generators connect the
all-in-the-cover state to every feasible vertex cover.
The induced feasible-state graph and the associated CTQW dynamics are developed
in \cref{sec:induced_graph_walk}, where we analyse the layered transport
structure.
The QPG and QEG algorithms, together with their greedy reduction rules, are
introduced in \cref{sec:IterativeQuantum}.
Numerical benchmarks on 3-regular, 4-regular, and Erd\H{o}s--R\'enyi random
graphs are reported in \cref{sec:Results}, followed by conclusions in
\cref{sec:conclusion}.
Additional material on the dynamical Lie algebra, entanglement structure,
circuit depth, Trotter error, Monte Carlo simulation, and QFG is provided in
the appendices.

\section{Constraint-preserving operators for MVC}
\label{sec:ConstrainedOperators}

The feasible vertex-cover subspace is generated by bit strings satisfying the edge constraints of the MVC problem.
On this subspace, local bit flips must be restricted so that feasibility is preserved.
The projected Pauli operators used below are the MVC constraint-preserving mixer operators described in~\cite{hadfield2019quantum};
they are recalled here to fix notation and make the continuous-time-walk construction self-contained.

Let $G=(V,E)$ be a graph with $n=|V|$ vertices. We represent a candidate vertex cover by a bit string $C \in \{0,1\}^n$, where $C_i=1$ indicates that vertex $i$ is included in the cover. The feasible Hilbert space is
\begin{equation*}
\mathcal{H}_{\mathrm{VC}}
:=
\Span{
\ket{C}
\;\middle|\;
C_i + C_j \ge 1 \;\; \forall (i,j)\in E
}
\subseteq (\C^2)^{\otimes n}.
\end{equation*}
Equivalently, $\mathcal{H}_{\mathrm{VC}}$ is the image of the feasibility projector
\begin{equation*}
\Pi_{\mathrm{VC}}
=
\sum_{C\in \mathrm{VC}(G)} \ket{C}\!\bra{C}
=
\prod_{(i,j)\in E}\left(I-\ket{00}\bra{00}_{ij}\right),
\end{equation*}
where $\mathrm{VC}(G)$ denotes the set of all vertex covers of $G$.

Let
\begin{equation*}
\Nof{S}:=\{\, j\in V\setminus S \mid \exists\, i\in S \text{ such that } \{i,j\}\in E \,\}
\end{equation*}
denote the open neighbourhood of a set $S\subseteq V$.
For notational simplicity, we omit set braces, e.g. $\Nof{i}:=\Nof{\{i\}}$.
We define the projector onto the subspace where all neighbours of $i$ belong to the cover by
\begin{equation*}
\Pof{i}
\coloneqq
\bigotimes_{j\in \Nof{i}} \ket{1}\!\bra{1}_j \;\otimes\; I_{V\setminus \Nof{i}}.
\end{equation*}
The projector $\Pof{i}$ enforces the MVC flip condition: vertex $i$ may be flipped only when all of its neighbours are already included in the cover.
In that case, changing $C_i$ cannot uncover any edge incident to $i$.
\Cref{fig:star} illustrates the open neighbourhood of a vertex and the corresponding neighbourhood-controlled rotation.
\begin{figure}
    \centering
    \begin{subfigure}{0.42\linewidth}
        \centering
        \scalebox{0.8}{\begin{tikzpicture}[
scale=.75,
  line/.style={draw=black!60, line width=0.8pt},
  outer/.style={circle, fill=cyan!35, draw=black, minimum size=8mm},
  center/.style={circle, fill=red!35, draw=black, line width=0.8pt, minimum size=8mm}
]
  \node[center,
    path picture={
    \path[fill=red!35]
      (path picture bounding box.north west) rectangle
      (path picture bounding box.center);
    \path[fill=cyan!35]
      (path picture bounding box.west |- path picture bounding box.center)
      rectangle
      (path picture bounding box.south east);
  }
    ] (c) at (0,0) {$q_6$};

    \foreach \i/\q in {90/1,162/2,234/3,306/4,18/5} {
        \node[outer] (n\i) at (\i:2.3) {$q_{\q}$};
        \draw[line] (c) -- (n\i);
    }
    
\end{tikzpicture} }
        \caption{}
    \end{subfigure}
    \hspace{0.03\linewidth}
    \begin{subfigure}{0.42\linewidth}
        \centering
        \scalebox{1.2}{\input{figures/multicontrol}}
        \caption{}
    \end{subfigure}
    \caption{
    (a) Graph illustrating a vertex $q_6$ (red) and its neighbourhood $\Nof{6}$ (blue).
    The neighbours may have additional edges among themselves or to other vertices not shown.
    (b) Circuit representation of the constraint preserving unitary $U_6(\alpha)=e^{i\alpha \hat{X}_6}$.
    A single-qubit rotation is applied to vertex $6$, conditioned on all neighbouring qubits being in the state $\ket{1}$, thereby restricting the dynamics to the vertex-cover subspace.
    }
    \label{fig:star}
\end{figure}
We then define the neighbourhood-controlled Pauli operators
\begin{equation*}
    \hat{\sigma}_i \coloneqq \Pof{i} \, \sigma_i,
    \qquad
    \sigma\in\{X,Y\}.
\end{equation*}
Here $\Pof{i}$ is understood as an operator on the full $n$-qubit Hilbert space, and $\sigma_i$ acts nontrivially only on qubit $i$. Since $\Pof{i}$ acts only on the neighbours of $i$, it commutes with $\sigma_i$.

The operators $\hat{X}_i$ and $\hat{Y}_i$ generate single-vertex transitions that preserve feasibility: if $\ket{C}\in\mathcal{H}_{\mathrm{VC}}$, then $\hat{\sigma}_i\ket{C}$ is either zero or another state in $\mathcal{H}_{\mathrm{VC}}$. Indeed, the only possible nontrivial action occurs when all neighbours of $i$ are already in the cover, so flipping vertex $i$ cannot leave any incident edge uncovered. This is the property used throughout the construction.

Exponentiating these generators gives the corresponding constraint-preserving unitaries,
\begin{equation*}
U_i(\alpha)
=
e^{i\alpha \hat{\sigma}_i}
=
(I-\Pof{i}) + \Pof{i} e^{i\alpha \sigma_i}.
\end{equation*}
Thus, $U_i(\alpha)$ is a multi-controlled Pauli rotation on qubit $i$, conditioned on all neighbouring qubits being in the state $\ket{1}$. For the algorithms developed later, the generators are the projected Pauli-$X$ operators
\begin{equation*}
\hat{X}_i = \Pof{i} X_i.
\end{equation*}

A completely analogous construction applies to the maximum independent set (MIS) problem by exchanging occupied and unoccupied vertices. In the MIS convention, a bit value $1$ denotes inclusion in the independent set, and feasibility requires that no adjacent vertices are both occupied. The MVC and MIS unitary dynamics $U_{VC}, U_{MIS}$ are related by the global Pauli-$X$ transformation
\begin{equation*}
U_{\mathrm{MIS}} = X^{\otimes n} U_{\mathrm{VC}} (X^{\otimes n})^\dagger,
\qquad
\ket{\psi_{\mathrm{MIS}}} = X^{\otimes n}\ket{\psi_{\mathrm{VC}}},
\end{equation*}
which maps vertex covers to their complementary independent sets. Under this transformation, the control condition $\ket{1}\!\bra{1}$ on neighbouring qubits is replaced by $\ket{0}\!\bra{0}$, so the MIS dynamics is generated by the same construction with the roles of $0$ and $1$ exchanged.

\section{Reachability and connectivity of the feasible graph}
\label{sec:reachability_connectivity}

The projected operators $\{\hat X_i\}_{i\in V}$ define the allowed one-bit moves between feasible vertex covers.
Before using their uniform sum as a continuous-time quantum-walk Hamiltonian, we record the corresponding connectivity property: starting from the all-in-the-cover configuration, these moves can reach every feasible vertex cover.

For a coefficient vector $\vec{\alpha}\in\mathbb{R}^{|V|}$, define
\begin{equation*}
    H(\vec{\alpha})=\sum_{i\in V}\alpha_i\hat{X}_i,
\end{equation*}
and write 
$
    U(\vec{\alpha}) := e^{iH(\vec{\alpha})}
    $.
We define
\begin{equation*}
    \Omega := 1^{|V|}
\end{equation*}
for the all-in-the-cover configuration, and
\begin{equation*}
    \ket{\Omega} := \ket{1}^{\otimes |V|}
\end{equation*}
for the corresponding computational basis state. This state is always feasible
and corresponds to the maximal vertex cover.

The relevant connectivity statement is that the root configuration $\Omega$
is connected to every feasible configuration $C\in\mathrm{VC}(G)$.
Equivalently, every feasible basis state $\ket{C}$ is reachable from
$\ket{\Omega}$ by a sequence of feasibility-preserving single-vertex flips.

\begin{theorem}[Reachability of feasible vertex-cover states]
\label{thm:reachability}
Let $G=(V,E)$ and let
\begin{equation*}
    \mathcal{H}_{\mathrm{VC}}
    =
    \Span{\ket{C}\mid C\in \mathrm{VC}(G)}
\end{equation*}
be the feasible vertex-cover subspace. Then, for any vertex cover
$C\in \mathrm{VC}(G)$, there exists a parameter vector
$\boldsymbol{\alpha}^C$ with entries
\begin{equation*}
    \alpha_i^C \in \{0,\pi/2\}
\end{equation*}
such that
\begin{equation*}
    U(\boldsymbol{\alpha}^C)\ket{\Omega}
    =
    e^{i\sum_{i\in V}\alpha_i^C \hat{X}_i}\ket{\Omega}
    \propto
    \ket{C}.
\end{equation*}
In particular, every feasible basis state is reachable from $\ket{\Omega}$
by a sequence of constraint-preserving local flips.
\end{theorem}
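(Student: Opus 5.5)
The plan is to take $\boldsymbol{\alpha}^C$ to be the indicator of the complement of $C$, scaled by $\pi/2$. Concretely, set $S:=V\setminus C$, $\alpha_i^C=\pi/2$ for $i\in S$ and $\alpha_i^C=0$ for $i\in C$. Then
\begin{equation*}
H(\boldsymbol{\alpha}^C)=\tfrac{\pi}{2}\sum_{i\in S}\hat X_i .
\end{equation*}
The first observation is that $S$ is an independent set of $G$. If $\{i,j\}\in E$ had both endpoints in $S$, the edge would be uncovered by $C$, contradicting $C\in\mathrm{VC}(G)$. Equivalently, for every $i\in S$ we have $\Nof{i}\subseteq C$.

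The second step, which I expect to be the only real point requiring care, is to show that the operators $\{\hat X_i\}_{i\in S}$ pairwise commute. Take $i\neq j$ in $S$. The operator $\hat X_i=\Pof{i}X_i$ acts nontrivially only on qubits in $\{i\}\cup\Nof{i}$, and $\hat X_j$ acts nontrivially only on qubits in $\{j\}\cup\Nof{j}$. Since $i$ and $j$ are non-adjacent, $i\notin\Nof{j}$ and $j\notin\Nof{i}$.
\begin{itemize}
\item $X_i$ and $X_j$ act on distinct qubits, so they commute.
\item $X_i$ commutes with $\Pof{j}$, because $\Pof{j}$ acts only on $\Nof{j}$, which does not contain $i$. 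Symmetrically, $X_j$ commutes with $\Pof{i}$.
\item $\Pof{i}$ and $\Pof{j}$ are both diagonal in the computational basis, so they commute. This holds even though $\Nof{i}$ and $\Nof{j}$ may overlap.
\end{itemize}
Hence $[\hat X_i,\hat X_j]=0$, and the exponential factorises as
\begin{equation*}
U(\boldsymbol{\alpha}^C)=\prod_{i\in S}e^{i\frac{\pi}{2}\hat X_i},
\end{equation*}
with the ordering of the product irrelevant.

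Third, I would evaluate each factor using the closed form recalled in \cref{sec:ConstrainedOperators}:
\begin{equation*}
e^{i\frac{\pi}{2}\hat X_i}=(I-\Pof{i})+\Pof{i}\,e^{i\frac{\pi}{2}X_i}=(I-\Pof{i})+i\,\Pof{i}X_i .
\end{equation*}
Apply the factors one at a time to $\ket{\Omega}$ in any fixed order $i_1,\dots,i_m$ of $S$, and argue by induction on $k$.

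After $k$ factors, the state is $i^k\ket{C^{(k)}}$. Here $C^{(k)}$ is the basis string with zeros exactly at $i_1,\dots,i_k$ and ones everywhere else. The induction step needs the action of the next factor, indexed by $i_{k+1}$, on $\ket{C^{(k)}}$. Every neighbour of $i_{k+1}$ lies in $C$, so it has never been flipped and still carries a $1$; therefore $\Pof{i_{k+1}}\ket{C^{(k)}}=\ket{C^{(k)}}$. The factor then acts as $iX_{i_{k+1}}$, flipping bit $i_{k+1}$ from $1$ to $0$ and contributing one more factor of $i$.

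After all $m=|S|$ steps we obtain
\begin{equation*}
U(\boldsymbol{\alpha}^C)\ket{\Omega}=i^{|S|}\ket{C}\propto\ket{C}.
\end{equation*}
The same induction exhibits an explicit sequence of constraint-preserving single-vertex flips from $\Omega$ to $C$. Each intermediate string $C^{(k)}$ is a superset of $C$ and hence itself a vertex cover, which gives the final claim of \cref{thm:reachability}.
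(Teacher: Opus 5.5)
Your proposal is correct and follows the paper's proof essentially step for step. It uses the same choice $\alpha_i^C=\pi/2$ exactly on $S=V\setminus C$, the same independence-plus-commutation argument to factorise the exponential, and the same observation that each factor acts as $iX_i$ on the states encountered; your explicit induction and the remark that every intermediate string contains $C$ (and is therefore a cover) simply spell out what the paper states more briefly.
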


\begin{proof}
Let $C\in \mathrm{VC}(G)$ be any feasible vertex cover, and define
\begin{equation*}
    S:=\{i\in V \mid C_i=0\}.
\end{equation*}
Thus, $S$ is the set of vertices removed from the all-in-the-cover
configuration.

Since $C$ is a vertex cover, no edge of $G$ can have both endpoints in $S$.
Therefore $S$ is an independent set.
Hence, for any distinct $i,j\in S$, one has $\{i,j\}\notin E$.
For these non-adjacent vertices, the corresponding projected generators commute:
\begin{equation*}
    [\hat{X}_i,\hat{X}_j]=0
    \qquad
    \text{for all } i,j\in S.
\end{equation*}
Indeed, if $i$ and $j$ are not adjacent, then the target qubit of each
projected flip is not contained in the control set of the other. Their
control projectors are diagonal and commute, and the single-qubit Pauli
operators act on distinct qubits.

\begin{figure*}[t]
\centering
\definecolor{w7}{HTML}{1D4F91} 
\definecolor{w6}{HTML}{5DA5DA} 
\definecolor{w5}{HTML}{F28E2B} 
\definecolor{w4}{HTML}{C9252D} 

\begin{tikzpicture}[
    x=1cm,y=1cm,
    origv/.style={circle, draw=black, fill=gray!12, thick, minimum size=5.4mm, inner sep=0pt, font=\scriptsize},
    vcnode/.style={rounded rectangle, thick, inner xsep=2.2pt, inner ysep=1.2pt, font=\ttfamily\tiny},
    inducedge/.style={black!30, line width=0.65pt},
    origedge/.style={black, line width=0.9pt},
    lab/.style={font=\small\bfseries},
    layerlab/.style={font=\small},
    layer7node/.style={vcnode, draw=w7!75!black, fill=w7!12},
    layer6node/.style={vcnode, draw=w6!75!black, fill=w6!13},
    layer5node/.style={vcnode, draw=w5!75!black, fill=w5!16},
    layer4node/.style={vcnode, draw=w4!75!black, fill=w4!18},
    layer7lab/.style={layerlab, text=w7!85!black},
    layer6lab/.style={layerlab, text=w6!85!black},
    layer5lab/.style={layerlab, text=w5!85!black},
    layer4lab/.style={layerlab, text=w4!85!black}
]

\node[lab] (titleG) at (-5.1,2.05) {Reference graph $G$};

\begin{scope}[shift={(-5.1,-0.15)}, scale=0.78, every node/.style={transform shape}]
\node[origv] (v1) at ( 0.20, 0.00) {$1$};
\node[origv] (v2) at (-0.55,-1.25) {$2$};
\node[origv] (v3) at (-1.70,-1.05) {$3$};
\node[origv] (v4) at (-1.45,-0.05) {$4$};
\node[origv] (v5) at (-0.90, 1.30) {$5$};
\node[origv] (v6) at ( 0.15, 1.40) {$6$};
\node[origv] (v7) at ( 1.05, 0.75) {$7$};

\draw[origedge] (v1)--(v2);
\draw[origedge] (v1)--(v3);
\draw[origedge] (v1)--(v4);
\draw[origedge] (v1)--(v5);
\draw[origedge] (v1)--(v6);
\draw[origedge] (v1)--(v7);
\draw[origedge] (v2)--(v3);
\draw[origedge] (v2)--(v4);
\draw[origedge] (v3)--(v4);
\draw[origedge] (v4)--(v5);
\draw[origedge] (v5)--(v6);
\draw[origedge] (v6)--(v7);
\end{scope}

\node[lab] (titleVC) at (1.,2.05) {Induced feasible-state graph $\inducedG$};

\begin{scope}[shift={(-2.0,-0.70)}, scale=1.2]

\node[layer7lab] at (0.00, 1.85) {$\mathcal{L}_7$};
\node[layer6lab] at (1.55, 1.85) {$\mathcal{L}_6$};
\node[layer5lab] at (3.55, 1.85) {$\mathcal{L}_5$};
\node[layer4lab] at (5.55, 1.85) {$\mathcal{L}_4$};

\node[layer7node] (s1111111) at (0.00,0.00-.25) {$1111111$};

\node[layer6node] (s0111111) at (1.55, 1.35-.25) {$0111111$};
\node[layer6node] (s1011111) at (1.55, 0.90-.25) {$1011111$};
\node[layer6node] (s1101111) at (1.55, 0.45-.25) {$1101111$};
\node[layer6node] (s1110111) at (1.55, 0.00-.25) {$1110111$};
\node[layer6node] (s1111011) at (1.55,-0.45-.25) {$1111011$};
\node[layer6node] (s1111101) at (1.55,-0.90-.25) {$1111101$};
\node[layer6node] (s1111110) at (1.55,-1.35-.25) {$1111110$};

\node[layer5node] (s1011011) at (3.55, 1.80-.25) {$1011011$};
\node[layer5node] (s1011101) at (3.55, 1.35-.25) {$1011101$};
\node[layer5node] (s1011110) at (3.55, 0.90-.25) {$1011110$};
\node[layer5node] (s1101011) at (3.55, 0.45-.25) {$1101011$};
\node[layer5node] (s1101101) at (3.55, 0.00-.25) {$1101101$};
\node[layer5node] (s1101110) at (3.55,-0.45-.25) {$1101110$};
\node[layer5node] (s1110101) at (3.55,-0.90-.25) {$1110101$};
\node[layer5node] (s1110110) at (3.55,-1.35-.25) {$1110110$};
\node[layer5node] (s1111010) at (3.55,-1.80-.25) {$1111010$};

\node[layer4node] (s1011010) at (5.55, 0.35-.25) {$1011010$};
\node[layer4node] (s1101010) at (5.55,-0.35-.25) {$1101010$};

\draw[inducedge] (s1111111.east)--(s0111111.west);
\draw[inducedge] (s1111111.east)--(s1011111.west);
\draw[inducedge] (s1111111.east)--(s1101111.west);
\draw[inducedge] (s1111111.east)--(s1110111.west);
\draw[inducedge] (s1111111.east)--(s1111011.west);
\draw[inducedge] (s1111111.east)--(s1111101.west);
\draw[inducedge] (s1111111.east)--(s1111110.west);

\draw[inducedge] (s1011111.east)--(s1011011.west);
\draw[inducedge] (s1111011.east)--(s1011011.west);

\draw[inducedge] (s1011111.east)--(s1011101.west);
\draw[inducedge] (s1111101.east)--(s1011101.west);

\draw[inducedge] (s1011111.east)--(s1011110.west);
\draw[inducedge] (s1111110.east)--(s1011110.west);

\draw[inducedge] (s1101111.east)--(s1101011.west);
\draw[inducedge] (s1111011.east)--(s1101011.west);

\draw[inducedge] (s1101111.east)--(s1101101.west);
\draw[inducedge] (s1111101.east)--(s1101101.west);

\draw[inducedge] (s1101111.east)--(s1101110.west);
\draw[inducedge] (s1111110.east)--(s1101110.west);

\draw[inducedge] (s1110111.east)--(s1110101.west);
\draw[inducedge] (s1111101.east)--(s1110101.west);

\draw[inducedge] (s1110111.east)--(s1110110.west);
\draw[inducedge] (s1111110.east)--(s1110110.west);

\draw[inducedge] (s1111011.east)--(s1111010.west);
\draw[inducedge] (s1111110.east)--(s1111010.west);

\draw[inducedge] (s1011011.east)--(s1011010.west);
\draw[inducedge] (s1011110.east)--(s1011010.west);
\draw[inducedge] (s1111010.east)--(s1011010.west);

\draw[inducedge] (s1101011.east)--(s1101010.west);
\draw[inducedge] (s1101110.east)--(s1101010.west);
\draw[inducedge] (s1111010.east)--(s1101010.west);

\end{scope}

\node[lab, align=center] (titleP) at (8.25,2.05) {Layer probability vs. CTQW time};

\begin{scope}[shift={(6.55,-2.5)}]
\begin{axis}[
  width=5.2cm,
  height=4.9cm,
  xlabel style={at={(axis description cs:0.5,-0.14)}, anchor=north},
  ylabel style={at={(axis description cs:-0.14,0.5)}, anchor=south, rotate=90},
  xmin=0, xmax=3.14159265,
  xtick={0,1.5707963,3.14159265},
  xticklabels={$0$,$\pi/2$,$\pi$},
  ymin=-0.02, ymax=1.05,
  ytick={0,0.2,0.4,0.6,0.8,1.0},
  axis lines=left,
  axis line style={->},
  grid=none,
  tick label style={font=\small},
  label style={font=\normalsize},
  legend style={
    font=\scriptsize,
    draw=none,
    fill=white,
    fill opacity=0.65,
    text opacity=1,
    at={(0.98,0.98)},
    anchor=north east,
    row sep=1pt
  },
  legend columns=2,
  every axis plot/.append style={line width=1.2pt},
]
\addplot+[color=w7, mark=none, solid]
  table[
    x=beta,
    y=P_k,
    col sep=comma,
    restrict expr to domain={\thisrow{k}}{7:7}
  ] {final_reduced_plotting/evolution_times/P_k_vs_beta.csv};
\addlegendentry{$\mathcal{L}_7$}

\addplot+[color=w6, mark=none, solid]
  table[
    x=beta,
    y=P_k,
    col sep=comma,
    restrict expr to domain={\thisrow{k}}{6:6}
  ] {final_reduced_plotting/evolution_times/P_k_vs_beta.csv};
\addlegendentry{$\mathcal{L}_6$}

\addplot+[color=w5, mark=none, solid]
  table[
    x=beta,
    y=P_k,
    col sep=comma,
    restrict expr to domain={\thisrow{k}}{5:5}
  ] {final_reduced_plotting/evolution_times/P_k_vs_beta.csv};
\addlegendentry{$\mathcal{L}_5$}

\addplot+[color=w4, mark=none, solid]
  table[
    x=beta,
    y=P_k,
    col sep=comma,
    restrict expr to domain={\thisrow{k}}{4:4}
  ] {final_reduced_plotting/evolution_times/P_k_vs_beta.csv};
\addlegendentry{$\mathcal{L}_4$}
\end{axis}
\end{scope}

\draw[->, thick] ($(titleG.east)+(0.25,0)$) -- ($(titleVC.west)+(-0.25,0)$);
\draw[->, thick] ($(titleVC.east)+(0.25,0)$) -- ($(titleP.west)+(-0.25,0)$);

\end{tikzpicture}
\vspace{-1\baselineskip}
\caption{
Example of the implicitly induced feasible-state graph underlying the
constrained continuous-time quantum walk (CTQW). The input graph
$G=(V,E)$ in the left panel defines the feasible-state graph
$\inducedG$ in the middle panel, whose vertices are the
feasible vertex covers of $G$ ordered by Hamming weight and whose
edges connect pairs of covers that differ by a single feasibility-preserving bit flip. The walk is initialised in the all-ones
state $\ket{\Omega} = \ket{1}^{\otimes |V|}$ and propagates
coherently down through layers of decreasing weight.
The right panel
shows the layer-resolved probabilities
$P_k(t) = \sum_{C\in\mathcal{L}_k}|\langle C|\psi(t)\rangle|^2$:
amplitude reaches the lowest-weight occupied layer before
$t \approx \pi/2$ and subsequently flows back toward the root, with
interference patterns emerging at later times. The induced graph is an
explanatory object only and is not constructed explicitly by the
algorithm.}
\label{fig:induced_graph_small_example}
\end{figure*}

Now choose
\begin{equation*}
    \alpha_i^C=
    \begin{cases}
        \pi/2, & i\in S,\\
        0, & i\notin S.
    \end{cases}
\end{equation*}
Since the generators with nonzero coefficients commute, the exponential
factorises as
\begin{equation*}
    e^{i\sum_{i\in V}\alpha_i^C \hat{X}_i}
    =
    \prod_{i\in S} e^{i\frac{\pi}{2}\hat{X}_i}.
\end{equation*}

Starting from the reference state $\ket{\Omega}$, all neighbours of every
vertex are initially included in the cover. Moreover, because $S$ is an
independent set, flipping one vertex in $S$ never changes the control
condition for any other vertex in $S$.
Therefore, on the states encountered along this sequence, each projected generator $\hat{X}_i=\Pof{i}X_i$ with $i\in S$ acts as the ordinary Pauli-$X$ operator on qubit $i$.
Consequently,
\begin{equation*}
    \prod_{i\in S} e^{i\frac{\pi}{2}\hat{X}_i}\ket{\Omega}
    =
    i^{|S|}
    \left(\prod_{i\in S} X_i\right)\ket{\Omega}
    =
    i^{|S|}
    \ket{0}_{S}\ket{1}_{V\setminus S}.
\end{equation*}
Since $\ket{0}_{S}\ket{1}_{V\setminus S}=\ket{C}$ and
$|S|=|V|-|C|$, we obtain
\begin{equation*}
    e^{i\sum_{i\in V}\alpha_i^C \hat{X}_i}\ket{\Omega}
    =
    i^{|V|-|C|}
    \ket{C}.
\end{equation*}
This proves the claim.
\end{proof}

To state the graph-theoretic consequence of the reachability theorem, define the induced feasible-state graph
\begin{equation*}
    \inducedG
    =
    \bigl(\mathcal V_{\mathrm{VC}},\mathcal E_{\mathrm{VC}}\bigr),
\end{equation*}
where
$
    \mathcal V_{\mathrm{VC}}
    =
    \mathrm{VC}(G)
$
is the set of all vertex covers of $G$.
Two feasible configurations $C,C'\in\mathrm{VC}(G)$ are adjacent, i.e. $(C,C')\in\mathcal E_{\mathrm{VC}}$, if and only if
\begin{equation*}
    d_{\mathrm H}(C,C')=1,
\end{equation*}
where $d_{\mathrm H}$ denotes Hamming distance.
Equivalently, adjacent vertices of $\inducedG$ are exactly those feasible covers that differ by one allowed projected Pauli-$X$ move generated by some $\widehat X_i$.

\Cref{thm:reachability} has an immediate graph-theoretic interpretation.

\begin{corollary}[Connectivity from the root]
    \label{cor:root_connectivity}
    Every feasible configuration $C\in \mathrm{VC}(G)$ is connected to the root configuration $\Omega = 1^{|V|}$ by a path in the induced feasible-state graph $\inducedG$.
    Equivalently, all vertices of $\inducedG$ lie in the same connected component as the root.
\end{corollary}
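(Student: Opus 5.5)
The plan is to turn the proof of \cref{thm:reachability} into an explicit path in $\inducedG$. Fix $C\in\mathrm{VC}(G)$ and let $S=\{i\in V\mid C_i=0\}$ as in that proof. Since $C$ is a vertex cover, $S$ is an independent set of $G$. Enumerate $S=\{s_1,\dots,s_m\}$ in any order, with $m=|V|-|C|$. Define intermediate configurations $C^{(0)}=\Omega$ and $C^{(k)}$ obtained from $\Omega$ by setting the bits $s_1,\dots,s_k$ to $0$. Then $C^{(m)}=C$, and consecutive configurations satisfy $d_{\mathrm H}(C^{(k-1)},C^{(k)})=1$.

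The key step is to check that every $C^{(k)}$ lies in $\mathrm{VC}(G)$, so that $C^{(0)},\dots,C^{(m)}$ is a path in $\inducedG$. The zero set of $C^{(k)}$ is $\{s_1,\dots,s_k\}\subseteq S$, and a subset of an independent set is independent. Hence no edge has both endpoints at $0$, and $C^{(k)}$ is a vertex cover. Equivalently, in operator language, the control condition $\Pof{s_k}$ is satisfied on $\ket{C^{(k-1)}}$. This holds because all neighbours of $s_k$ lie outside $S$ and are therefore still $1$. So $\hat X_{s_k}\ket{C^{(k-1)}}=\ket{C^{(k)}}$, which is exactly one allowed projected flip. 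This matches the sequential factorisation $\prod_{i\in S}e^{i\frac{\pi}{2}\hat X_i}$ used in \cref{thm:reachability}.

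This shows that every $C$ is joined to $\Omega$ by a path of length $|V|-|C|$. Since connectivity to a common vertex is transitive, all of $\mathrm{VC}(G)$ lies in the connected component of $\Omega$, which gives the equivalent formulation. There is no real obstacle. The only point needing care is that \cref{thm:reachability} is phrased via a single exponential rather than a sequence of basis states. I therefore argue feasibility of the intermediate states directly, through the hereditary property of independent sets, rather than reading the path off the unitary.
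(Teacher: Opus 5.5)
Your proposal is correct and follows essentially the same approach as the paper, which reads the path off the flip sequence constructed in the proof of \cref{thm:reachability}. Your explicit check that every intermediate configuration is a vertex cover, via the hereditary property of independent sets, fills in the step the paper only asserts when it says each flip preserves feasibility.
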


\begin{proof}
The proof of \cref{thm:reachability} explicitly constructs a sequence of allowed one-bit flips taking $\Omega$ to $C$.
Each such flip preserves feasibility and therefore corresponds to an edge of $\inducedG$.
Hence these flips define a path from the root to $C$ in the induced feasible-state graph.
\end{proof}

The argument also shows that $\inducedG$ is naturally rooted at $\Omega$.
Every feasible configuration is obtained from the all-in-the-cover configuration by removing an independent set of vertices, and hence admits at least one monotone path from $\Omega$.
This connectivity property is the structural reason why the induced-graph picture is useful for optimisation: the constrained dynamics do not merely remain inside the feasible subspace; from $\Omega$, the allowed moves connect to every feasible cover.

\section{Induced layered graph and continuous-time walk perspective}
\label{sec:induced_graph_walk}

The local projected flips from \cref{sec:ConstrainedOperators} define more than a feasibility-preserving operation on bit strings: collectively, they define a graph structure on the set of feasible vertex covers.
This graph is the natural state space of the constrained dynamics.
Its vertices are feasible covers, and its edges are the single-bit moves generated by the operators $\{\widehat X_i\}_{i\in V}$.

The uniform generator used in the algorithm is
\begin{equation}
H_{\mathrm{MVC}}
\;=\;\sum_{i\in V}\widehat{X}_{i},
\label{eq:H_mvc}
\end{equation}
where $\widehat{X}_{i}$ is the multi-controlled Pauli-$X$ operator that
flips bit $i$ if and only if the resulting configuration remains a
vertex cover of $G$.

A useful way to understand the constrained dynamics is to step back
from the original graph $G$ and instead analyze the geometry of the
\emph{feasible configurations} themselves. This perspective makes the
support of the dynamics explicit and recasts the unitary evolution as
a CTQW on a layered state graph. \Cref{fig:induced_graph_small_example} illustrates the
correspondence on a small representative instance.

\subsection{The induced feasible-state graph}
Having defined the induced feasible-state graph $\inducedG$ and established its root connectivity, we now analyse its canonical Hamming-weight layering.
For $k\in\{0,\ldots,|V|\}$, define
\begin{equation*}
    \mathcal L_k
    =
    \{C\in\mathrm{VC}(G): |C|=k\},
    \qquad
    |C|=\sum_i C_i .
\end{equation*}
Because adjacent feasible configurations differ by a single bit, every edge of $\inducedG$ connects layers $\mathcal L_k$ and $\mathcal L_{k\pm1}$.
The graph is therefore bipartite with respect to the parity of the Hamming weight.
Consequently, the spectrum of the adjacency matrix
\begin{equation*}
    A=A(\inducedG)
\end{equation*}
is symmetric about zero, with eigenvalues occurring in pairs $\pm\lambda$.

The root configuration $\Omega$ lies in the unique top layer $\mathcal L_{|V|}$.
The lowest occupied layer is $\mathcal L_{\tau(G)}$, where $\tau(G)$ is the minimum vertex-cover number of $G$.
Minimum vertex covers are exactly the elements of $\mathcal L_{\tau(G)}$.
Thus the depth of $\inducedG$ is
\begin{equation*}
    D(G)
    =
    |V|-\tau(G)
    =
    \alpha(G),
\end{equation*}
where $\alpha(G)$ is the independence number of $G$.
The last equality follows from the complement relation between vertex covers and independent sets.
This structure is visible in \cref{fig:induced_graph_small_example}: the root configuration occupies the top layer, while the minimum vertex covers appear in the bottom layer $\mathcal L_{\tau(G)}$.

The MIS picture is obtained by bitwise complementation.
Under the global Pauli-$X$ map
\begin{equation*}
    \ket{C}\mapsto X^{\otimes |V|}\ket{C},
\end{equation*}
vertex covers are mapped to their complementary independent sets.
The induced feasible-state graph is unchanged up to this relabelling, but the Hamming-weight layering is inverted.
The MVC root $\Omega=1^{|V|}$ corresponds to the MIS root $\Omega_{\mathrm{MIS}}=0^{|V|}$, and the MVC minimum-cover layer corresponds to the MIS maximum-independent-set layer.

The original graph $G$ enters only through the feasibility constraint.
It determines which one-bit flips preserve vertex-cover feasibility, and therefore determines the sizes of the layers $|\mathcal L_k|$ and the connectivity between them.
The continuous-time quantum walk itself takes place on $\inducedG$, not on the original graph $G$.

\subsection{Continuous-time dynamics on the induced graph}

In the feasible computational basis $\{\ket{C} : C \in \mathrm{VC}(G)\}$,
the Hamiltonian in \cref{eq:H_mvc} has matrix elements
\begin{equation*}
\bra{C'} H_{\mathrm{MVC}} \ket{C} \;=\;
\begin{cases}
1 & \text{if } (C,C') \in \mathcal{E}_{\mathrm{VC}},\\
0 & \text{otherwise},
\end{cases}
\end{equation*}
so that
\begin{equation}
    H_{\mathrm{MVC}}\big|_{\mathcal{H}_{\mathrm{VC}}}
    \;=\; A(\inducedG).
    \label{eq:H_equals_A}
\end{equation}
The constrained unitary
\begin{equation*}
U(t)\;=\;e^{i t H_{\mathrm{MVC}}}
\end{equation*}
therefore acts as a continuous-time quantum walk on
$\inducedG$.

\subsection{Layer probabilities and the walk-count expansion}
\label{subsec:walk_count}

Starting from $\ket{\Omega}$ at $t=0$, the amplitude on a feasible
cover $C$ is given by the following path expansion
\begin{equation*}
\bra{C}U(t)\ket{\Omega}=\sum_{m=0}^{\infty}\frac{(it)^{m}}{m!}\bra{C}H_{\mathrm{MVC}}^m\ket{\Omega}
= \sum_{m=0}^{\infty}\frac{(it)^{m}}{m!}\,\gamma^{C}_{m},
\end{equation*}
where $\gamma^{C}_{m} = (A^{m})_{C,\Omega}$ is the number of walks of
length $m$ from $\Omega$ to $C$ on $\inducedG$.
Because $\inducedG$ is
bipartite with respect to Hamming-weight parity, $\gamma^{C}_{m}$
vanishes unless $m$ has the same parity as the graph distance
\begin{equation*}
d(C) = |V| - |C|
\end{equation*}
from the root to $C$.
Writing $d=d(C)$, setting $m=d+2j$ and factoring out the
overall phase, the amplitude is real up to $i^{d}$, and the
single-cover probability admits the following expansion
\begin{equation*}
P_{C}(t) =
\Bigl(\sum_{j=0}^{\infty}\frac{(-1)^{j}\, t^{d+2j}}{(d+2j)!}\,
       \gamma^{C}_{d+2j}\Bigr)^{2}.
\end{equation*}

Layer probabilities follow by summation. Throughout this section we
parameterise layers by the graph distance $d$ from the root rather
than by Hamming weight; the two are equivalent through
$d = |V| - k$, with $d = 0$ corresponding to the root configuration $\Omega$ and
$d = D(G)$ to the minimum-cover layer. The probability of finding the
walker in the layer at distance $d$ is
\begin{equation}
P_{d}(t) 
=\sum_{C \in \mathcal{L}_{|V|-d}}
\Bigl(\sum_{j=0}^{\infty}\frac{(-1)^{j} t^{d+2j}}{(d+2j)!}
       \gamma^{C}_{d+2j}\Bigr)^{2}.
\label{eq:Pd_expansion}
\end{equation}

\subsection{Transport and layer hitting times}
\label{subsec:ballistic}

\begin{figure}
\centering
\definecolor{seqBlueDark}{RGB}{33,102,172}
\definecolor{seqBlue}{RGB}{67,147,195}
\definecolor{seqBlueLight}{RGB}{146,197,222}
\definecolor{seqGreen}{RGB}{127,191,123}
\definecolor{seqOrange}{RGB}{244,165,130}
\definecolor{seqRed}{RGB}{178,24,43}
\definecolor{seqGray}{RGB}{150,150,150}


\begin{tikzpicture}
\begin{axis}[
  width=8.2cm,
  height=5.1cm,
  xlabel={Layer index $\ell$},
  ylabel={Peak time $t^*_l$},
  xmin=-0.2,
  xmax=13.2,
  ymin=0,
  ymax=1.32,
  xtick={0,2,4,6,8,10,12},
  ytick={0,0.2,0.4,0.6,0.8,1.0,1.2},
  grid=major,
  grid style={line width=0.25pt, draw=gray!18},
  axis line style={line width=0.45pt},
  tick align=outside,
  tick style={line width=0.45pt, black},
  major tick length=2pt,
  legend columns=3,
  legend cell align=left,
  legend style={
    font=\scriptsize,
    draw=none,
    fill=white,
    fill opacity=0.86,
    text opacity=1,
    at={(0.97,0.97)},
    anchor=north east,
    column sep=0.35em,
    row sep=0.02em
  },
  every axis plot/.append style={
    line width=1.05pt,
    mark size=1.9pt,
    error bars/error bar style={line width=0.5pt}
  }
]

\addplot+[color=seqBlueDark, mark=o, mark options={fill=white}, solid,
  error bars/.cd, y dir=both, y explicit]
  coordinates {
    (0,0.973682) +- (0,0.0121414)
    (1,0.616683) +- (0,0.011539)
    (2,0.355413) +- (0,0)
    (3,0) +- (0,0.011539)
  };
\addlegendentry{$d=4$}

\addplot+[color=seqBlue, mark=square, mark options={fill=white}, solid,
  error bars/.cd, y dir=both, y explicit]
  coordinates {
    (0,1.139) +- (0,0.053906)
    (1,0.900091) +- (0,0.0510995)
    (2,0.655745) +- (0,0.044606)
    (3,0.465572) +- (0,0.020228)
    (4,0.287186) +- (0,0.0176378)
    (5,0) +- (0,0.0141462)
  };
\addlegendentry{$d=6$}

\addplot+[color=seqBlueLight, mark=triangle, mark options={fill=white}, solid,
  error bars/.cd, y dir=both, y explicit]
  coordinates {
    (0,1.16625) +- (0,0.0514568)
    (1,0.999051) +- (0,0.0529432)
    (2,0.807994) +- (0,0.0543947)
    (3,0.63379) +- (0,0.0402684)
    (4,0.493288) +- (0,0.0273309)
    (5,0.373577) +- (0,0.0136654)
    (6,0.246644) +- (0,0.0136654)
    (7,0) +- (0,0.00515577)
  };
\addlegendentry{$d=8$}

\addplot+[color=seqGreen!85!black, mark=diamond, mark options={fill=white}, solid,
  error bars/.cd, y dir=both, y explicit]
  coordinates {
    (0,1.17597) +- (0,0.0414409)
    (1,1.04882) +- (0,0.0408999)
    (2,0.903218) +- (0,0.0487122)
    (3,0.747541) +- (0,0.0416634)
    (4,0.620608) +- (0,0.0272975)
    (5,0.511928) +- (0,0.0176811)
    (6,0.41185) +- (0,0.0136487)
    (7,0.311773) +- (0,0.0114109)
    (8,0.210226) +- (0,0.0114109)
    (9,0) +- (0,0.182865)
  };
\addlegendentry{$d=10$}

\addplot+[color=seqOrange!90!black, mark=pentagon, mark options={fill=white}, solid,
  error bars/.cd, y dir=both, y explicit]
  coordinates {
    (0,1.18318) +- (0,0.0355409)
    (1,1.08476) +- (0,0.033471)
    (2,0.97135) +- (0,0.0423558)
    (3,0.836926) +- (0,0.0418705)
    (4,0.708744) +- (0,0.0349462)
    (5,0.605117) +- (0,0.0193286)
    (6,0.526876) +- (0,0.0150054)
    (7,0.433445) +- (0,0.00663598)
    (8,0.357285) +- (0,0.00663598)
    (9,0.281125) +- (0,0.00663598)
    (10,0.195186) +- (0,0.0117565)
    (11,0) +- (0,0.0728768)
  };
\addlegendentry{$d=12$}

\addplot+[color=seqRed, mark=*, mark options={fill=seqRed}, solid,
  error bars/.cd, y dir=both, y explicit]
  coordinates {
    (0,1.26933) +- (0,0)
    (1,1.16778) +- (0,0)
    (2,1.10432) +- (0,0.0126933)
    (3,1.01546) +- (0,0)
    (4,0.888531) +- (0,0)
    (5,0.774292) +- (0,0.0126933)
    (6,0.660052) +- (0,0)
    (7,0.583892) +- (0,0)
    (8,0.507732) +- (0,0)
    (9,0.431572) +- (0,0)
    (10,0.355413) +- (0,0)
    (11,0.279253) +- (0,0)
    (12,0.177706) +- (0,0)
    (13,0) +- (0,0.0126933)
  };
\addlegendentry{$d=14$}

\end{axis}
\end{tikzpicture}
    \caption{
    First-peak time $t_\ell^*$ as a function of layer index $\ell$, measured from the minimum-cover layer, for several depths $D(G)$ of $\inducedG$.
    Markers show the mean over instances; error bars indicate the corresponding standard deviation.
    }
    \label{fig:firstpeak_scaling}
\end{figure}

The walk-count expansion of the previous subsection makes the
short-time behaviour transparent. Retaining only the lowest-order term
in \cref{eq:Pd_expansion} gives
\begin{equation}
  P_d(t) \;\sim\; |\mathcal{L}_{|V|-d}| t^{2d},
  \label{eq:short-time-onset}
\end{equation}
since the number of walks of length $d$ from $\Omega$ to a cover $C$ at distance $d$ from the root is exactly $\gamma_{d}^C=d!$.
The factor $t^{2d}$ is the operational signature of the layered geometry: the
amplitude on a layer at distance $d$ is suppressed as $t^{2d}$ at short
times, so deeper layers remain essentially unpopulated until the
wavefront generated by $H_{\mathrm{MVC}}$ has had time to reach them.
This is exactly the qualitative picture of \cref{fig:induced_graph_small_example}:
amplitude leaves the root, propagates coherently downward through
layers of decreasing Hamming weight, and only later returns creating interference.

\paragraph{First-peak times.}
The location of the first maximum of $P_d(t)$ lies outside the regime
of validity of \cref{eq:short-time-onset}: by the time a layer
reaches its peak, many more terms of the walk-count series
contribute, and a closed-form prediction 
would require resumming the full expansion or deriving a controlled truncation.
We therefore characterise the peak times numerically.
Denote by $t^\star_\ell$ the smallest positive time at which $P_{D(G)-\ell}(t)$ attains a local maximum.
Here $\ell$ is the layer offset from the minimum-cover layer $\mathcal{L}_{\tau(G)}$: $\ell=0$ denotes the minimum-cover layer, and increasing $\ell$ moves upward toward the root.

Across all instances and depths considered, the first-peak times follow
a linear trend in the layer index,
\begin{equation}
  t^\star_\ell\big(D(G)\big)
  \;\approx\;
  t^0_{D(G)} \;-\; \frac{\ell}{v_{D(G)}}.
  \label{eq:hitting-linear}
\end{equation}
As shown in \cref{fig:firstpeak_scaling}, for fixed depth the dependence on $\ell$ is approximately linear, corresponding to a wavefront that advances through successive layers with an approximately constant effective group velocity $v_{D(G)}$, measured in layers per unit walk time.
The intercept $t^0_{D(G)}$, which is the extrapolated time at which the wavefront reaches the minimum-cover layer, increases with the depth $D(G)$ of the induced graph, while the magnitude of the fitted slope decreases as $D(G)$ grows.

\begin{figure}
\centering
\definecolor{seqBlueDark}{RGB}{33,102,172}
\definecolor{seqBlue}{RGB}{67,147,195}
\definecolor{seqBlueLight}{RGB}{146,197,222}
\definecolor{seqGreen}{RGB}{127,191,123}
\definecolor{seqOrange}{RGB}{244,165,130}
\definecolor{seqRed}{RGB}{178,24,43}
\definecolor{seqGray}{RGB}{150,150,150}

\begin{tikzpicture}
\begin{semilogyaxis}[
  width=8.2cm,
  height=5.1cm,
  xlabel={$n$},
  ylabel={$\langle P^{\star}_{\leq l}\rangle$},
  xmin=9.5,
  xmax=30.5,
  ymin=1.2e-2,
  ymax=1.6,
  xtick={10,12,14,16,18,20,22,24,26,28,30},
  ytick={1e-2,1e-1,1},
  grid=major,
  grid style={line width=0.25pt, draw=gray!18},
  minor y tick num=1,
  minor grid style={line width=0.2pt, draw=gray!10},
  axis line style={line width=0.45pt},
  tick align=outside,
  tick style={line width=0.45pt, black},
  major tick length=2pt,
  legend columns=3,
  legend cell align=left,
  legend style={
    font=\scriptsize,
    draw=none,
    fill=white,
    fill opacity=0.86,
    text opacity=1,
    at={(0.03,0.06)},
    anchor=south west,
    column sep=0.45em,
    row sep=0.05em
  },
  every axis plot/.append style={
    line width=1.05pt,
    mark size=1.9pt
  }
]

\addlegendimage{color=seqRed, mark=o, mark options={fill=white}, solid}
\addlegendentry{$l=1$}
\addlegendimage{color=seqOrange!90!black, mark=square, mark options={fill=white}, solid}
\addlegendentry{$l=2$}
\addlegendimage{color=seqGreen!85!black, mark=triangle, mark options={fill=white}, solid}
\addlegendentry{$l=3$}
\addlegendimage{color=seqBlueLight, mark=diamond, mark options={fill=white}, solid}
\addlegendentry{$l=4$}
\addlegendimage{color=seqBlueDark, mark=pentagon, mark options={fill=white}, solid}
\addlegendentry{$l=5$}
\addlegendimage{color=seqGray, mark=o, mark options={fill=white}, solid}
\addlegendentry{$l=13$}

\addplot+[color=seqRed, mark=o, mark options={fill=white}, only marks, forget plot]
  coordinates {
    (10,0.409608)
    (12,0.411496)
    (14,0.261845)
    (16,0.122281)
    (18,0.100113)
    (20,0.0761498)
    (22,0.0694886)
    (24,0.0505598)
    (26,0.0294747)
    (28,0.021828)
    (30,0.0166657)
  };
\addplot+[color=seqRed, no marks, solid, domain=10:30, samples=100, forget plot]
  {2.6212535*exp(-0.17076869*x)};

\addplot+[color=seqOrange!90!black, mark=square, mark options={fill=white}, only marks, forget plot]
  coordinates {
    (10,0.877374)
    (12,0.857844)
    (14,0.75799)
    (16,0.599378)
    (18,0.484374)
    (20,0.395458)
    (22,0.345982)
    (24,0.29781)
    (26,0.225226)
    (28,0.172819)
    (30,0.139042)
  };
\addplot+[color=seqOrange!90!black, no marks, solid, domain=10:30, samples=100, forget plot]
  {2.6639619*exp(-0.095434165*x)};

\addplot+[color=seqGreen!85!black, mark=triangle, mark options={fill=white}, only marks, forget plot]
  coordinates {
    (10,0.996655)
    (12,0.985596)
    (14,0.966929)
    (16,0.915937)
    (18,0.839489)
    (20,0.783152)
    (22,0.720909)
    (24,0.664364)
    (26,0.576513)
    (28,0.49094)
    (30,0.425646)
  };
\addplot+[color=seqGreen!85!black, no marks, solid, domain=10:30, samples=100, forget plot]
  {1.7216939*exp(-0.042670855*x)};

\addplot+[color=seqBlueLight, mark=diamond, mark options={fill=white}, only marks, forget plot]
  coordinates {
    (10,0.999937)
    (12,0.999668)
    (14,0.997833)
    (16,0.991267)
    (18,0.972634)
    (20,0.954801)
    (22,0.926733)
    (24,0.897205)
    (26,0.849678)
    (28,0.788904)
    (30,0.733579)
  };
\addplot+[color=seqBlueLight, no marks, solid, domain=10:30, samples=100, forget plot]
  {1.2265298*exp(-0.01466278*x)};

\addplot+[color=seqBlueDark, mark=pentagon, mark options={fill=white}, only marks, forget plot]
  coordinates {
    (10,1)
    (12,0.999999)
    (14,0.99994)
    (16,0.999581)
    (18,0.997472)
    (20,0.994309)
    (22,0.987045)
    (24,0.978347)
    (26,0.9636)
    (28,0.939985)
    (30,0.91327)
  };
\addplot+[color=seqBlueDark, no marks, solid, domain=10:30, samples=100, forget plot]
  {1.0591673*exp(-0.0039348503*x)};

\addplot+[color=seqGray, mark=o, mark options={fill=white}, only marks, forget plot]
  coordinates {
    (28,1)
    (30,1)
  };
\addplot+[color=seqGray, no marks, solid, domain=28:30, samples=2, forget plot]
  {1};

\end{semilogyaxis}
\end{tikzpicture}
    \caption{
    Cumulative bottom-$\ell$ peak probability
    $P^{\star}_{\leq \ell}(n) \equiv \max_t \sum_{j \leq \ell} P_{D(G)-j}(t)$
    as a function of $n$ at fixed offset $\ell$ from the minimum-cover layer.
    Markers show numerical data; lines show exponential fits
    $A_\ell e^{-\alpha_\ell n}$.
    For $\ell \geq 5$, the fitted decay rate $\alpha_\ell$ is statistically indistinguishable from zero over the simulated range. Error bars show the corresponding standard deviation.
    }
    \label{fig:cummulative_scaling}
\end{figure}

\paragraph{From transport to the greedy criterion.}
The transport picture builds up the intuition behind the central
difficulty of any single-shot scheme. \Cref{eq:hitting-linear}
says that the wavefront does reach the minimum-cover layer
$\mathcal{L}_{\tau(G)}$, and does so at a time that is not too large.
One might therefore hope to prepare the CTQW state at $t \approx
t^0_{D(G)}$, measure, and read off a minimum cover directly
with probability $P_{D(G)}\big(t^0_{D(G)}\big)$.
As numerically analysed in \cref{fig:cummulative_scaling}, reaching a layer is not the same as concentrating probability on it: although the wavefront arrives at $\mathcal{L}_{\tau(G)}$, the
\emph{single-layer} probability of being exactly on the minimum-cover
layer is exponentially small in the system size.

Hitting the minimum-cover layer is, structurally, a graph-traversal problem
related to continuous-time quantum-walk transport, as illustrated by the
glued-trees construction~\cite{Childs_2003}. A CTQW started from the
all-in-the-cover root must propagate across the layers of
$\inducedG$ and deposit appreciable amplitude where a
solution lives.
In that construction, the walk starts at the \textsc{entrance} root, spreads ballistically through
columns whose sizes grow and then shrink, and arrives at the \textsc{exit}
root in linear time with inverse-polynomial probability. Two features make
this succeed. First, the tree symmetry induces a column
partition, so the dynamics reduce \emph{exactly} to a $1$D tight-binding chain
on which transport is ballistic. Second, and decisively, the target is a single 
\textsc{exit} root.

Both properties are absent in $\inducedG$. The reduction to a $1$D chain
is only approximate; and, more importantly, the minimum-cover layer is not a
single exit vertex. The same wavefront traversal that
concentrates at a single root in the glued-trees problem fails to concentrate
on a bottom layer here; it is the favourable terminal-node geometry, not the
transport itself, that the vertex-cover graph lacks.
 Rather than demanding the exact minimum-cover layer, one may accept any cover within the lowest few
layers and ask for the cumulative probability of landing in the bottom
$\ell+1$ layers
\begin{equation*}
  P^\star_{\leq \ell}(n)
  \;\equiv\;
  \max_{t}\;\sum_{j \leq \ell} P_{D(G)-j}(t).
\end{equation*}
This quantity measures the probability of producing a feasible cover of size at most $\tau(G)+\ell$.
For the small fixed offsets $\ell<5$ shown in \cref{fig:cummulative_scaling}, the numerical data over the simulated range are well described by exponential decay in the graph size $n$.
For $\ell\geq5$, however, the fitted decay rates are statistically indistinguishable from zero over the same range.
At fixed $n$, the cumulative probability grows rapidly with $\ell$ and approaches unity for modest offsets.
These results support exponential suppression for the narrowest bands near the minimum-cover layer, while wider bottom bands appear substantially easier to reach in the investigated 3-regular instances.
An open question is how the minimum layer offset required to capture at least $1-\epsilon$ of the peak probability scales with $n$:
\begin{equation*}
  \ell^\star(n;\epsilon)
  \;=\;
  \min\big\{\,\ell \;:\; P^\star_{\leq \ell}(n) \geq 1-\epsilon\,\big\}.
\end{equation*}
Thus, $\ell^\star(n;\epsilon)$ is the smallest number of layers above the minimum-cover layer that must be included for the cumulative peak probability to reach the prescribed threshold.
Equivalently, it measures how wide a bottom band of the layered graph is required to retain probability at least $1-\epsilon$.
What survives the suppression of the deepest layers is not the single-shot
bottom-layer probability but the per-vertex statistics extracted from the walk
state along the way. This motivates the iterative greedy procedures of
\cref{sec:IterativeQuantum}: rather than attempting to measure a minimum cover
directly, we extract quantities from the walk state that remain informative
even when the bottom-layer probability is negligible.

\section{Iterative continuous-time quantum walk optimisation algorithms for MVC}
\label{sec:IterativeQuantum}

\begin{table}[t]

\centering
\caption{Score functions and vertex-selection rules for the primary algorithmic variants.}
\label{tab:score_selection}
\begin{tabular}{lccc}
\toprule
Variant & Score & LDF-like $v^*$ & SDF-like $v^*$ \\
\midrule
QPG & $P_v(t)$ & $\argmax_v P_v(t)$ & $\argmin_v P_v(t)$ \\
QEG & $E_v(t)$ & $\argmin_v E_v(t)$ & $\argmax_v E_v(t)$ \\
\bottomrule
\end{tabular}
\end{table}

The constrained walk developed above provides a way to extract vertex-level
information from the feasible vertex-cover subspace. We now use this information
to define iterative greedy algorithms for MVC. At each recursion step, the walk
is run on the current residual graph, vertex scores are computed from the
resulting state, one or more vertices are fixed according to a greedy rule, and
the graph is reduced.

The ideal object underlying these algorithms is the continuous-time walk state
\begin{equation*}
    \ket{\psi(t)}
    =
    U(t)\ket{\Omega},
    \qquad
    U(t)=e^{itH_{\mathrm{MVC}}}.
\end{equation*}
Since the restriction of $H_{\mathrm{MVC}}$ to the feasible subspace is the
adjacency matrix of the induced feasible-state graph,
\begin{equation*}
    H_{\mathrm{MVC}}\big|_{\mathcal{H}_{\mathrm{VC}}}
    =
    A(\inducedG),
\end{equation*}
the exact evolution is a CTQW on $\inducedG$.
This continuous time dynamics is the object analysed in \cref{sec:induced_graph_walk} and provides the conceptual basis for the vertex scores used below.

For circuit implementation, the exact unitary is replaced by a first-order
Suzuki--Trotter approximation,
\begin{equation*}
    U_p(t)
    =
    \left(
        \prod_{i\in V} e^{i(t/p)\hat{X}_i}
    \right)^p ,
\end{equation*}
up to the chosen ordering of the projected generators. The implemented state is
therefore
\begin{equation*}
    \ket{\psi_p(t)}=U_p(t)\ket{\Omega}.
\end{equation*}
In the numerical results, $p=\infty$ denotes the exact continuous-time
evolution, while finite $p$ denotes the corresponding trotterised circuit.
The multicontrol operators are ordered by ascending vertex degree, recomputed on the residual graph at each recursion step, with ties broken by vertex index.
On regular instances the ordering is therefore determined by the index at the first recursion level and becomes degree-dependent only after the graph has been reduced.
Implementation aspects of the trotterised state, including circuit depth,
Trotter error, and entanglement growth, are analysed in
\cref{sec:CTQW_state_analysis}.
In the special case $p=1$, the QPG and QEG scores can be estimated by sequential Monte Carlo sampling without explicitly constructing the full quantum state, as described in \cref{sec:montecarlo}.

\begin{figure}[t]
\refstepcounter{algbox}
\label{alg:quantum_greedy}
\noindent\textbf{Algorithm~\thealgbox: Generic quantum-informed greedy reduction}
\vspace{2pt}
\hrule
\vspace{4pt}
\begin{algorithmic}[1]
    \State \textbf{Input:} Graph $G$, score rule $\mathsf{s}$, reduction rule $R\in\{\mathrm{LDF},\mathrm{SDF}\}$, quantum parameters $\boldsymbol{\theta}$, and a tie-breaking rule
    \State Initialise the cover $C\gets\emptyset$
    \While{$E(G)\neq\emptyset$}
        \State Remove isolated vertices from $G$
        \State Evaluate the scores $\{s_v\}_{v\in V(G)}$ on the current residual graph using $\mathsf{s}$ and $\boldsymbol{\theta}$
        \State Select $v^*$ using the ordering associated with $\mathsf{s}$ and $R$, resolving ties with the specified rule
        \If{$R=\mathrm{LDF}$}
            \State $C\gets C\cup\{v^*\}$
            \State $G\gets G[V(G)\setminus\{v^*\}]$
        \Else
            \State $C\gets C\cup\mathcal{N}_G(v^*)$
            \State $G\gets G[V(G)\setminus(\{v^*\}\cup\mathcal{N}_G(v^*))]$
        \EndIf
    \EndWhile
    \State \Return $C$
\end{algorithmic}
\end{figure}

The main text considers two score functions: the marginal occupation probability used by QPG and the conditioned expected cover size used by QEG.
Both score functions are combined with the same recursive reduction procedure, while the direction of the score ordering depends on the chosen variant.
Both reduction rules are inspired by the classical greedy algorithm: the
largest-degree-first (LDF) rule adds the selected vertex itself to the cover,
whereas the smallest-degree-first (SDF) rule adds all of its neighbours.
\Cref{tab:score_selection} summarises the corresponding selection directions.
For QPG and QEG, the quantum parameters in \cref{alg:quantum_greedy} are $\boldsymbol{\theta}=(p,t)$.
All scores and neighbourhoods are evaluated on the current residual graph.

In the SDF-like update, vertex $v^*$ is removed from the residual graph without being added to the cover because all of its neighbours have been added and no incident edge remains unresolved.
For notational simplicity, the score definitions below are written for the ideal evolution, while finite $p$ denotes the corresponding Trotterized implementation.

\subsection{Quantum Probabilistic Greedy algorithm}
The \emph{Quantum Probabilistic Greedy} algorithm ranks vertices by their marginal probability of belonging to the vertex cover in the CTQW state.
For suitable walk times, these marginal probabilities can separate vertices that are consistently present in low-weight covers from vertices that are more often excluded.
The resulting ranking is then converted into a recursive graph reduction by either fixing a high-probability vertex into the cover or fixing the neighbours of a low-probability vertex, as described below.

The marginal probability that vertex $v$ belongs to the cover is the expectation value of the projector onto $\ket{1}_v$,
\begin{equation*}
    P_v(t)
    =
    \frac{1-\bra{\psi(t)}Z_v\ket{\psi(t)}}{2},
    \quad
    \ket{\psi(t)}
    =
    U(t)\ket{\Omega},
\end{equation*}
where $U(t)=e^{itH_{\mathrm{MVC}}}$ is the CTQW unitary at walk time $t$.
Using the Hadamard lemma, this probability can be rewritten in terms of nested commutators,
\begin{equation}
    P_v(t)=\frac{1}{2}-\frac{1}{2}\sum_{j=0}^{\infty}\frac{(-it)^{j}}{j!}\bra{\Omega}[\sum_k\hat{X}_k,Z_v]_j\ket{\Omega},
\label{eq:prob_campbell}
\end{equation}
where $[A,B]_j$ denotes the $j$-fold nested commutator,
using the convention
$
    [A,B]_0
    :=
    B,
    [A,B]_{j+1}
    :=
    [A,[A,B]_j]
$.

\paragraph{Nonlocal information structure.}

A crucial distinction between quantum and classical greedy algorithms emerges from the structure of the commutator expansion in \cref{eq:prob_campbell}.
Each additional commutator can enlarge the support of the observable by at most one graph-distance shell.
To make this precise, for a vertex $v\in V$ and integer $j\ge 0$, let
\begin{equation*}
    \Ball{j}{v}
    :=
    \{u\in V:\operatorname{dist}_G(u,v)\le j\}
\end{equation*}
denote the radius-$j$ ball around $v$ in $G$.
Thus $\Ball{0}{v}=\{v\}$ and $\Ball{1}{v}=\{v\}\cup\Nof{v}$.

The light-cone analysis in \cref{sec:StructureLieAlgebra} shows that the nested commutator $[H_{\mathrm{MVC}},Z_v]_j$ has support within $\Ball{j}{v}$.
It also implies the even-order selection rule used in \cref{eq:prob_campbell}: by symmetry of the projected operators $\widehat X_i$, all odd-order terms vanish in the expectation value.

This distinguishes the quantum score from purely local greedy rules.
While classical greedy algorithms (LDF, SDF) make decisions based solely on local degree information $\deg(v)$ or at most the open neighbourhood $\Nof{v}$, the quantum probability $P_v(t)$ incorporates contributions from progressively larger balls around $v$.
As time $t$ increases, and the continuous-time walk reaches deeper layers on the induced feasible-state graph, higher-order terms in the expansion become significant, and the effective information radius grows (\cref{cor:global}).
This is the mechanism by which QPG encodes nonlocal structural information into local vertex selection.

\begin{figure}
    \centering

\definecolor{c1}{HTML}{1D4F91}  
\definecolor{c2}{HTML}{D81B60}  
\definecolor{c3}{HTML}{2A9D8F}  
\definecolor{c4}{HTML}{F1C232}  
\definecolor{c5}{HTML}{5F5F5F}  
\definecolor{c6}{HTML}{F28E2B}  
\definecolor{c7}{HTML}{222222}  

\pgfplotsset{
  good curve/.style={
    no markers,
    solid,
    line width=2.4pt,
    opacity=0.98
  },
  bad curve/.style={
    no markers,
    solid,
    line width=1.8pt,
    opacity=0.90
  }
}

\begin{tikzpicture}
\begin{axis}[
  width=8.2cm,
  height=5.1cm,
  xlabel={Time $t$},
  ylabel={$P_v(t)$},
  xmin=0,
  xmax=3.14159265,
  ymin=-0.02,
  ymax=1.05,
  xtick={0,0.7853981,1.5707963,2.3561944,3.14159265},
  xticklabels={$0$,$\pi/4$,$\pi/2$,$3\pi/4$,$\pi$},
  ytick={0,0.2,0.4,0.6,0.8,1.0},
  axis lines=left,
  axis line style={->},
  grid=none,
  tick label style={font=\small},
  label style={font=\normalsize},
  clip=false
]

\addplot+[good curve, color=c1]
  table[x=beta,y=v0,col sep=comma]
  {final_reduced_plotting/evolution_times/p_vs_beta.csv};

\addplot+[good curve, color=c3]
  table[x=beta,y=v2,col sep=comma]
  {final_reduced_plotting/evolution_times/p_vs_beta.csv};

\addplot+[good curve, color=c4]
  table[x=beta,y=v3,col sep=comma]
  {final_reduced_plotting/evolution_times/p_vs_beta.csv};

\addplot+[good curve, color=c6]
  table[x=beta,y=v5,col sep=comma]
  {final_reduced_plotting/evolution_times/p_vs_beta.csv};
  
\addplot+[bad curve, color=c2, dotted]
  table[x=beta,y=v1,col sep=comma]
  {final_reduced_plotting/evolution_times/p_vs_beta.csv};

\addplot+[bad curve, color=c5, dotted]
  table[x=beta,y=v4,col sep=comma]
  {final_reduced_plotting/evolution_times/p_vs_beta.csv};

\addplot+[bad curve, color=c7, dotted]
  table[x=beta,y=v6,col sep=comma]
  {final_reduced_plotting/evolution_times/p_vs_beta.csv};

\node[
  anchor=south west,
  fill=white,
  fill opacity=0.84,
  text opacity=1,
  inner sep=0.6pt,
  rounded corners=1pt
]
at (rel axis cs:0.6,0.070)
{
\begin{tikzpicture}[
  every node/.style={
    circle,
    draw=black,
    line width=0.42pt,
    minimum size=3.2mm,
    inner sep=0pt,
    font=\tiny
  },
  vgood/.style={
    draw=black,
    line width=0.70pt
  },
  vbad/.style={
    draw=black!60,
    line width=0.42pt
  },
  edge/.style={
    draw=black,
    line width=0.42pt
  }
]

\begin{scope}[x=0.6cm,y=0.6cm,rotate=-45]

\node[vgood, fill=c1, text=white] (1) at ( 0.20, 0.00) {1};
\node[vbad,  fill=c2, text=white] (2) at (-0.55,-1.25) {2};
\node[vgood, fill=c3, text=white] (3) at (-1.70,-1.05) {3};
\node[vgood, fill=c4, text=black] (4) at (-1.45,-0.05) {4};
\node[vbad,  fill=c5, text=white] (5) at (-0.90, 1.30) {5};
\node[vgood, fill=c6, text=black] (6) at ( 0.15, 1.40) {6};
\node[vbad,  fill=c7, text=white] (7) at ( 1.05, 0.75) {7};

\draw[edge] (1) -- (2);
\draw[edge] (1) -- (3);
\draw[edge] (1) -- (4);
\draw[edge] (1) -- (5);
\draw[edge] (1) -- (6);
\draw[edge] (1) -- (7);
\draw[edge] (2) -- (3);
\draw[edge] (2) -- (4);
\draw[edge] (3) -- (4);
\draw[edge] (4) -- (5);
\draw[edge] (5) -- (6);
\draw[edge] (6) -- (7);

\end{scope}
\end{tikzpicture}
};

\end{axis}
\end{tikzpicture}
    \caption{Vertex occupation probabilities $P_v(t)$ as a function of the continuous quantum walk time $t$ for the depicted small example graph. As $t$ increases, higher-order terms in the commutator expansion become relevant and the quantum walk propagates in the layered graph, leading to a progressive separation of probability profiles and encoding increasingly nonlocal structural information, revealing how the quantum state induces a ranking over vertices, which is exploited by the greedy selection rules.}
    \label{fig:probvst}
\end{figure}

\begin{figure}[t]
    \centering
\begin{tikzpicture}

\definecolor{c4}{RGB}{0,114,178}
\definecolor{cE}{RGB}{190,30,45}

\begin{axis}[
  width=8.2cm,
  height=7.1cm,
  xlabel={Time $t$},
  ylabel={Empirical success rate},
  xmin=0,
  xmax=17,
  xtick={2,4,6,8,10,12,14,16},
  xticklabels={
    $\frac{\pi}{8}$,
    $\frac{\pi}{4}$,
    $\frac{3\pi}{8}$,
    $\frac{\pi}{2}$,
    $\frac{5\pi}{8}$,
    $\frac{3\pi}{4}$,
    $\frac{7\pi}{8}$,
    $\pi$
  },
  x tick label style={
    font=\normalsize,
    rotate=45,
    anchor=north east
  },
  ymin=0,
  ymax=1,
  ytick={0,0.2,0.4,0.6,0.8,1.0},
  grid=both,
  tick align=inside,
  tick label style={font=\normalsize},
  label style={font=\normalsize},
  title style={font=\normalsize},
  xlabel style={yshift=7pt},
  every axis plot/.append style={
    line width=1.1pt,
    mark size=2.4pt,
    error bars/y dir=both,
    error bars/y explicit=true,
    error bars/error bar style={line width=0.8pt},
    error bars/error mark options={
      rotate=90,
      mark size=2.4pt
    }
  },
  legend columns=2,
  legend style={
    at={(axis description cs:0.02,0.02)},
    anchor=south west,
    draw=black,
    rounded corners=1pt,
    fill=white,
    fill opacity=0.95,
    text opacity=1,
    font=\scriptsize,
    column sep=0.08cm,
    row sep=-1pt,
    inner xsep=2pt,
    inner ysep=1pt,
    cells={anchor=west}
  },
  legend image code/.code={
    \draw[
      mark repeat=2,
      mark phase=2,
      #1
    ]
    plot coordinates {
      (0cm,0cm)
      (0.25cm,0cm)
      (0.50cm,0cm)
    };
  }
]

\addplot+[
  color=c4,
  mark=*,
  mark options={fill=c4,draw=c4},
  solid,
  error bars/.cd,
  y dir=both,
  y explicit
]
table[
  x expr=\coordindex+1,
  y=prob_optimal,
  y error=sem,
  col sep=comma
]
{new_plots/t_analysis/prob/QPG_LDF_3_regular_t_p4.csv};
\addlegendentry{LDF $p=4$}

\addplot+[
  color=cE,
  mark=*,
  mark options={fill=cE,draw=cE},
  solid,
  error bars/.cd,
  y dir=both,
  y explicit
]
table[
  x expr=\coordindex+1,
  y=prob_optimal,
  y error=sem,
  col sep=comma
]
{new_plots/prob/3-regular/popt_vs_beta_qpg_ldf_n18.csv};
\addlegendentry{LDF $p=\infty$}

\addplot+[
  color=c4,
  mark=*,
  mark options={fill=white,draw=c4,solid},
  dashed,
  error bars/.cd,
  y dir=both,
  y explicit
]
table[
  x expr=\coordindex+1,
  y=prob_optimal,
  y error=sem,
  col sep=comma
]
{new_plots/t_analysis/prob/QPG_SDF_3_regular_t_p4.csv};
\addlegendentry{SDF $p=4$}

\addplot+[
  color=cE,
  mark=*,
  mark options={fill=white,draw=cE,solid},
  dashed,
  error bars/.cd,
  y dir=both,
  y explicit
]
table[
  x expr=\coordindex+1,
  y=prob_optimal,
  y error=sem,
  col sep=comma
]
{new_plots/prob/3-regular/popt_vs_beta_qpg_sdf_n18.csv};
\addlegendentry{SDF $p=\infty$}

\addplot[
  color=gray,
  solid,
  line width=1.2pt,
  no marks,
  forget plot
]
coordinates {(10,0)(10,1)};

\node[
  gray,
  font=\scriptsize,
  fill=white,
  fill opacity=0.9,
  text opacity=1,
  inner sep=1pt,
  rotate=90,
  anchor=south
]
at (axis cs:10.18,0.48) {$t^\star$ for LDF $p=4$};

\end{axis}
\end{tikzpicture}
    \caption{
        Walk-time selection for QPG on 100 random $3$-regular graph instances with $n=18$.
        The empirical success rate is the fraction of instances for which \cref{alg:quantum_greedy} returns a minimum vertex cover after completing the full iterative procedure.
        Blue curves show Trotterised evolution with $p=4$, while red curves show exact continuous-time evolution corresponding to $p=\infty$.
        Solid lines with filled markers denote LDF, while dashed lines with open markers denote SDF.
        The vertical gray line marks the empirically selected walk time used in the subsequent benchmarks.
        Error bars show the SEM over the 100 instances.
    }
    \label{fig:qpg-3regular-performance}
\end{figure}
\paragraph{Largest-probability (LDF-like) strategy.}
In the first approach, the algorithm selects the vertex with the largest occupation probability,
\begin{equation*}
v^* = \argmax_{v\in V} P_v(t),
\end{equation*}
and fixes it in the vertex cover. 
The graph is then reduced by removing $v^*$ and all edges incident to it.

This strategy mirrors the classical largest-degree greedy heuristic: vertices that are more likely to be 
in the cover (i.e. with large $P_v(t)$) tend to play a central role in covering many edges. 
From a quantum perspective, this corresponds to reinforcing configurations that already have 
significant weight in the quantum state.

As shown in \cref{fig:probvst}, the ranking induced by the quantum
probabilities is not necessarily aligned with the vertex-degree ordering. In
particular, for evolution times $t \in [3\pi/4,\pi]$, the QPG algorithm may
favour vertices that are not selected by the Largest Degree First (LDF)
heuristic. In the example considered, QPG-LDF would select the vertex with the
second-highest degree, whereas the quantum probabilities assign a different
ordering. This illustrates how QPG departs from purely local degree-based
criteria and incorporates nonlocal structural information arising from the
quantum evolution.

\paragraph{Smallest-probability (SDF-like) strategy.}
An alternative approach is to select the vertex with the smallest occupation probability,
\begin{equation*}
v^* = \arg\min_{v\in V} P_v(t),
\end{equation*}
and instead fix all of its neighbours in the vertex cover.
After adding $\Nof{v^*}$ to the cover, the residual graph is obtained by removing $v^*$ together with all vertices in $\Nof{v^*}$.

This rule is the quantum analogue of the classical minimum-degree heuristic. A small value of 
$P_v(t)$ indicates that the quantum state assigns low weight to configurations where $v$ is in 
the cover. Consequently, to satisfy the edge constraints, its neighbours must be included instead. 
This leads to a more aggressive update step, where multiple vertices may be fixed simultaneously.

The two strategies reflect different ways of extracting information from the quantum state:
\begin{itemize}
    \item The LDF-like rule exploits \emph{high-probability structure}, selecting vertices that are 
    strongly supported by the quantum distribution.
    
    \item The SDF-like rule exploits \emph{low-probability structure}, interpreting unlikely vertices 
    as indicators that their neighbours should be chosen instead.
\end{itemize}
For each algorithm we must choose an initial walk time $t$, which sets the distribution over feasible covers induced by the CTQW. \Cref{fig:qpg-3regular-performance} reports the performance of the two strategies for $t \in [0,\pi]$ on $100$ $3$-regular instances with $n=18$ vertices, comparing the exact continuous-time walk ($p=\infty$) with its finite-depth Trotter approximation
at $p=4$.
Corresponding scans for the $4$-regular and connected Erd\H{o}s--R\'enyi ($\rho=0.3$) ensembles exhibit the same qualitative dependence on $t$ and are omitted for brevity.

The first feature to note is that, for QPG, the gap between $p=4$ and
$p=\infty$ is substantial: unlike the other criteria discussed later, a depth
of $p=4$ does not yet reproduce the exact walk. This is a direct manifestation
of the Trotter error analysed in \cref{sec:Trotter}, which scales as
$\mathcal{O}(|E| t^2/p)$ and grows with the walk time.
Here an always-in vertex means a vertex contained in every minimum vertex cover, while a never-in vertex means a vertex contained in no minimum vertex cover.
Because the QPG score relies on the fine contrast between per-vertex marginal occupations, in particular the gap between always-in and never-in vertices, it is especially sensitive to this error.

In the exact ($p=\infty$) case the two strategies behave quite differently in their dependence on $t$.
The LDF strategy peaks sharply around $t\approx5\pi/8$, where it attains an empirical success rate of approximately $100\%$.
The SDF strategy reaches its best performance at a later time,
$t \approx 7\pi/8$, and with a slightly lower peak value than LDF. 
The later SDF optimum is consistent with its more aggressive update
rule, which commits the entire neighbourhood of the selected vertex at each step and is therefore less tied to the precise location of the marginal-contrast
maximum.

At finite depth $p=4$ the picture changes completely. The LDF curve
retains a sensitivity to $t$: it still exhibits a clear peak whose
height is comparable to the exact case, so that for a suitably chosen walk time
the trotterised LDF strategy remains close to optimal despite the 
Trotter error. The SDF curve, by contrast, is essentially flat across a broad
range of $t$, reflecting the insensitivity of its neighbourhood update rule to the
detailed shape of the quantum distribution. This flatness persists until a time
$t \approx 3\pi/4$, beyond which both the $p=4$ LDF and SDF
curves drop in performance, as shown in \cref{fig:qpg-3regular-performance}.

The location of the LDF and SDF peaks, and the mechanism that produces it, can be understood through the layered-walk picture.
The CTQW propagates a wavefront outward from $\ket{\Omega}$ through layers of decreasing Hamming weight, and the amplitude on the minimum-cover boundary $\mathcal{L}_{\tau(G)}$ peaks before $t=\pi/2$.
One might expect this
to be the optimal time for the greedy as well, but empirically the LDF peak
$t^{\star}_{\text{LDF}} \approx 5\pi/8$ and SDF peak $t^{\star}_{\text{SDF}} \approx 7\pi/8$ occur \emph{after} the wavefront has
departed the bottom layer.
The reason is that the QPG strategy exploits not the absolute occupation of the bottom layers but the \emph{contrast} between per-vertex marginal occupations: the probability gap between always-in and never-in vertices.

This contrast is built up by a combination of coherent reflection and selective
trapping. As the wavefront returns from $\mathcal{L}_{\tau(G)}$ and spreads
back across the upper layers $\mathcal{L}_{\tau}, \ldots, \mathcal{L}_{n}$, the
irregularity of $\inducedG$ causes destructive interference to
suppress amplitude on certain paths while constructive interference reinforces
others. In particular, a fraction of the amplitude remains trapped near the
bottom layers on configurations that contain the always-in vertices and exclude
the never-in ones. It is this
interference rather than direct occupation of the optimal
layer, that converts the global structure of $\inducedG$ into a
per-vertex ranking accessible through local measurements.

The two variants thus offer
complementary trade-offs: LDF makes conservative, fine-grained updates, while
SDF enforces constraints more aggressively by fixing multiple vertices at once,
affecting both the number of recursive steps and the structure of the
intermediate quantum states.

\subsection{Quantum Energy Greedy algorithm}

While the previous strategy relies on marginal vertex probabilities, a more global selection criterion can be obtained by evaluating the expected cost of the walk state.
The Quantum Energy Greedy (QEG) algorithm scores each vertex by the expected cover size obtained when that vertex is fixed in the cover.

For the unweighted MVC instances considered in the main benchmarks, the cost Hamiltonian is
\begin{equation*}
    H_C
    =
    \sum_{\ell\in V}\frac{I-Z_\ell}{2}
    =
    \frac{|V|}{2}
    -
    \frac{1}{2}\sum_{\ell\in V} Z_\ell .
\end{equation*}
On a feasible computational basis state $\ket{C}$, this gives $H_C\ket{C}=|C|\ket{C}$.

To determine which vertex should be fixed next, we evaluate the expected cost obtained by fixing each vertex $v$ in the cover.
This is implemented by omitting the projected mixer term $\widehat X_v$, so that qubit $v$ remains in the state $\ket{1}$ throughout the evolution.
We therefore define the conditioned mixer Hamiltonian
\begin{equation*}
    H_{\mathrm{MVC}}^{(v)}
    :=
    \sum_{u\in V\setminus\{v\}}\widehat X_u
\end{equation*}
and the corresponding conditioned walk state
\begin{equation*}
    \ket{\psi^v(t)}
    :=
    e^{itH_{\mathrm{MVC}}^{(v)}}\ket{\Omega}.
\end{equation*}
The QEG score is then
\begin{equation*}
    E_v(t)
    :=
    \bra{\psi^v(t)}H_C\ket{\psi^v(t)} .
\end{equation*}
Small values of $E_v(t)$ indicate that fixing $v$ in the cover leads to a low expected cover size after the conditioned walk.

For finite Trotter depth $p$, the conditioned walk state in the algorithms below is prepared using the same first-order product formula as in the unconditioned case.
The notation $E_v(t)$ is kept for the resulting score, with $p=\infty$ denoting the exact conditioned evolution.

Using the Hadamard lemma, this score can be rewritten as
\begin{equation}
\begin{aligned}
    E_v(t)
    &=
    \frac{|V|}{2}
    -
    \frac{1}{2}
    \sum_{\ell\in V}
    \sum_{j=0}^{\infty}
    \frac{(-it)^j}{j!} \\
    &\qquad\qquad\times
    \bra{\Omega}
    \left[
        H_{\mathrm{MVC}}^{(v)},
        Z_\ell
    \right]_j
    \ket{\Omega},
\end{aligned}
\label{eq:energy_campbell}
\end{equation}
where $[A,B]_j$ denotes the $j$-fold nested commutator.

\paragraph{Graph-wide information structure.}
Both QPG and QEG exploit the nonlocal information generated by the constrained walk, but they aggregate this information differently.
The QPG score $P_v(t)$ is a single-vertex marginal evaluated in the same unconditioned walk state for every candidate vertex.
By contrast, QEG prepares a separate state $\ket{\psi^v(t)}$ conditioned on fixing each candidate vertex $v$ and evaluates the full cost Hamiltonian in that state.
For the unweighted problem, the resulting score can be written as $E_v(t)=\sum_{\ell\in V}P_\ell^{(v)}(t)$, where $P_\ell^{(v)}(t)$ is the marginal occupation of vertex $\ell$ in the conditioned state $\ket{\psi^v(t)}$.
QEG therefore compares the graph-wide occupation response to each candidate fixation, rather than ranking a candidate only by its own marginal occupation.
As detailed in \cref{sec:StructureLieAlgebra}, the nested-commutator expansions of both scores inherit the same light-cone behaviour and even-order selection rule.

Because QEG prepares one conditioned circuit for each candidate vertex, it requires $\mathcal{O}(|V_r|)$ circuit evaluations at recursion step $r$ and at most $\mathcal{O}(|V|^2)$ over a complete run, excluding the number of measurement shots.

\paragraph{Minimum energy (LDF-like) strategy.}

In the first approach, the algorithm selects the vertex whose fixation produces 
the largest decrease in the expected cost, namely
\begin{equation*}
v^* = \arg\min_{v\in V} E_v(t),
\end{equation*}
and fixes it in the vertex cover.
The graph is then reduced by removing $v^*$  and all edges incident to it.
\begin{figure}[t]
    \centering

\definecolor{c1}{HTML}{1D4F91}  
\definecolor{c2}{HTML}{D81B60}  
\definecolor{c3}{HTML}{2A9D8F}  
\definecolor{c4}{HTML}{F1C232}  
\definecolor{c5}{HTML}{5F5F5F}  
\definecolor{c6}{HTML}{F28E2B}  
\definecolor{c7}{HTML}{222222}  

\pgfplotsset{
  good curve/.style={
    no markers,
    solid,
    line width=2.4pt,
    opacity=0.98
  },
  bad curve/.style={
    no markers,
    solid,
    line width=1.8pt,
    opacity=0.90
  }
}

\begin{tikzpicture}
\begin{axis}[
  width=8.2cm,
  height=5.1cm,
  xlabel={Time $t$},
  ylabel={$E_v(t)$},
  xmin=0,
  xmax=3.14159265,
  xtick={0,0.7853981,1.5707963,2.3561944,3.14159265},
  xticklabels={$0$,$\pi/4$,$\pi/2$,$3\pi/4$,$\pi$},
  axis lines=left,
  axis line style={->},
  grid=none,
  tick label style={font=\small},
  label style={font=\normalsize},
  clip=false
]

\addplot+[good curve, color=c1]
  table[x=beta,y=v0,col sep=comma]
  {final_reduced_plotting/evolution_times/E_vs_beta.csv};

\addplot+[good curve, color=c3]
  table[x=beta,y=v2,col sep=comma]
  {final_reduced_plotting/evolution_times/E_vs_beta.csv};

\addplot+[good curve, color=c4]
  table[x=beta,y=v3,col sep=comma]
  {final_reduced_plotting/evolution_times/E_vs_beta.csv};

\addplot+[good curve, color=c6]
  table[x=beta,y=v5,col sep=comma]
  {final_reduced_plotting/evolution_times/E_vs_beta.csv};
  
\addplot+[bad curve, color=c2, dotted]
  table[x=beta,y=v1,col sep=comma]
  {final_reduced_plotting/evolution_times/E_vs_beta.csv};

\addplot+[bad curve, color=c5, dotted]
  table[x=beta,y=v4,col sep=comma]
  {final_reduced_plotting/evolution_times/E_vs_beta.csv};

\addplot+[bad curve, color=c7, dotted]
  table[x=beta,y=v6,col sep=comma]
  {final_reduced_plotting/evolution_times/E_vs_beta.csv};

\node[
  anchor=south west,
  fill=none,
  fill opacity=0.84,
  text opacity=1,
  inner sep=0.6pt,
  rounded corners=1pt
]
at (rel axis cs:0.15,0.60)
{
\begin{tikzpicture}[
  every node/.style={
    circle,
    draw=black,
    line width=0.42pt,
    minimum size=3.2mm,
    inner sep=0pt,
    font=\tiny
  },
  vgood/.style={
    draw=black,
    line width=0.70pt
  },
  vbad/.style={
    draw=black!60,
    line width=0.42pt
  },
  edge/.style={
    draw=black,
    line width=0.42pt
  }
]

\begin{scope}[x=0.6cm,y=0.6cm,rotate=-45]

\node[vgood, fill=c1, text=white] (1) at ( 0.20, 0.00) {1};
\node[vbad,  fill=c2, text=white] (2) at (-0.55,-1.25) {2};
\node[vgood, fill=c3, text=white] (3) at (-1.70,-1.05) {3};
\node[vgood, fill=c4, text=black] (4) at (-1.45,-0.05) {4};
\node[vbad,  fill=c5, text=white] (5) at (-0.90, 1.30) {5};
\node[vgood, fill=c6, text=black] (6) at ( 0.15, 1.40) {6};
\node[vbad,  fill=c7, text=white] (7) at ( 1.05, 0.75) {7};

\draw[edge] (1) -- (2);
\draw[edge] (1) -- (3);
\draw[edge] (1) -- (4);
\draw[edge] (1) -- (5);
\draw[edge] (1) -- (6);
\draw[edge] (1) -- (7);
\draw[edge] (2) -- (3);
\draw[edge] (2) -- (4);
\draw[edge] (3) -- (4);
\draw[edge] (4) -- (5);
\draw[edge] (5) -- (6);
\draw[edge] (6) -- (7);

\end{scope}
\end{tikzpicture}
};

\end{axis}
\end{tikzpicture}
    \caption{
        Vertex Energy $E_v(t)$ as a function of the continuous quantum walk time $t$ for the depicted small example graph.
        Similarly as for the probability case, as $t$ increases, higher-order terms in the commutator expansion become relevant, the quantum walk spread in the layered graph, leading to a progressive separation of energies profiles encoding increasingly nonlocal structural information.
    }
    \label{fig:Vertex_energy}
\end{figure}

This strategy can be interpreted as a quantum analogue of the classical largest-degree greedy heuristic. At each step, fixing each vertex to be in the cover results in a new layered graph with depth dependent on the specific vertex choice. We can now evolve the CTQW quantum state on the conditioned graph up for a certain time t and measure how much the expectation value of the cost Hamiltonian decreases when $v$ is included in the cover set. The vertex with the largest energy reduction is selected and added to the cover.
Similarly as for QPG, we observe in \cref{fig:Vertex_energy}, that the ranking induced by the quantum conditional energies is not necessarily aligned with the vertex degree ordering.
From a quantum perspective, this rule compares how fixing different vertices changes the graph-wide occupation profile generated by the constrained walk.
A small value of $E_v(t)$ indicates that the walk conditioned on fixing $v$ produces a low expected cover size, so selecting the minimiser favours the fixation with the lowest aggregate cost.

\paragraph{Maximum energy (SDF-like) strategy.}
An alternative approach is to select the vertex whose fixation produces the largest increase (or smallest decrease) in the expected energy,
\begin{equation*}
    v^*
    =
    \argmax_{v\in V} E_v(t),
\end{equation*}
and instead fix all of its neighbours in the vertex cover.
After adding $\Nof{v^*}$ to the cover, the residual graph is obtained by removing $v^*$ together with all vertices in $\Nof{v^*}$.

\begin{figure}[t]
    \centering
\begin{tikzpicture}

\definecolor{c4}{RGB}{0,114,178}
\definecolor{cE}{RGB}{190,30,45}

\begin{axis}[
  width=8.2cm,
  height=7.1cm,
  xlabel={Time $t$},
  ylabel={Empirical success rate},
  xmin=0,
  xmax=17,
  xtick={2,4,6,8,10,12,14,16},
  xticklabels={
    $\frac{\pi}{8}$,
    $\frac{\pi}{4}$,
    $\frac{3\pi}{8}$,
    $\frac{\pi}{2}$,
    $\frac{5\pi}{8}$,
    $\frac{3\pi}{4}$,
    $\frac{7\pi}{8}$,
    $\pi$
  },
  x tick label style={
    font=\normalsize,
    rotate=45,
    anchor=north east
  },
  ymin=0,
  ymax=1,
  ytick={0,0.2,0.4,0.6,0.8,1.0},
  grid=both,
  tick align=inside,
  tick label style={font=\normalsize},
  label style={font=\normalsize},
  title style={font=\normalsize},
  xlabel style={yshift=7pt},
  every axis plot/.append style={
    line width=1.1pt,
    mark size=2.4pt,
    error bars/y dir=both,
    error bars/y explicit=true,
    error bars/error bar style={line width=0.8pt},
    error bars/error mark options={
      rotate=90,
      mark size=2.4pt
    }
  },
  legend columns=2,
  legend style={
    at={(axis description cs:0.02,0.02)},
    anchor=south west,
    draw=black,
    rounded corners=1pt,
    fill=white,
    fill opacity=0.95,
    text opacity=1,
    font=\scriptsize,
    column sep=0.08cm,
    row sep=-1pt,
    inner xsep=2pt,
    inner ysep=1pt,
    cells={anchor=west}
  },
  legend image code/.code={
    \draw[
      mark repeat=2,
      mark phase=2,
      #1
    ]
    plot coordinates {
      (0cm,0cm)
      (0.25cm,0cm)
      (0.50cm,0cm)
    };
  }
]

\addplot+[
  color=c4,
  mark=*,
  mark options={fill=c4,draw=c4},
  solid,
  error bars/.cd,
  y dir=both,
  y explicit
]
table[
  x expr=\coordindex+1,
  y=prob_optimal,
  y error=sem,
  col sep=comma
]
{new_plots/t_analysis/prob/QEG_LDF_3_regular_t_p4.csv};
\addlegendentry{LDF $p=4$}

\addplot+[
  color=cE,
  mark=*,
  mark options={fill=cE,draw=cE},
  solid,
  error bars/.cd,
  y dir=both,
  y explicit
]
table[
  x expr=\coordindex+1,
  y=prob_optimal,
  y error=sem,
  col sep=comma
]
{new_plots/prob/3-regular/reduced_t_analysis_qeg_ldf_exact.csv};
\addlegendentry{LDF $p=\infty$}

\addplot+[
  color=c4,
  mark=*,
  mark options={fill=white,draw=c4,solid},
  dashed,
  error bars/.cd,
  y dir=both,
  y explicit
]
table[
  x expr=\coordindex+1,
  y=prob_optimal,
  y error=sem,
  col sep=comma
]
{new_plots/prob/3-regular/reduced_t_analysis_qeg_sdf_p4.csv};
\addlegendentry{SDF $p=4$}

\addplot+[
  color=cE,
  mark=*,
  mark options={fill=white,draw=cE,solid},
  dashed,
  error bars/.cd,
  y dir=both,
  y explicit
]
table[
  x expr=\coordindex+1,
  y=prob_optimal,
  y error=sem,
  col sep=comma
]
{new_plots/prob/3-regular/reduced_t_analysis_qeg_sdf_exact.csv};
\addlegendentry{SDF $p=\infty$}

\addplot[
  color=gray,
  solid,
  line width=1.2pt,
  no marks,
  forget plot
]
coordinates {(10,0)(10,1)};

\node[
  gray,
  font=\scriptsize,
  fill=white,
  fill opacity=0.9,
  text opacity=1,
  inner sep=1pt,
  rotate=90,
  anchor=south
]
at (axis cs:10.18,0.48) {$t^\star$ for LDF $p=4$};

\end{axis}
\end{tikzpicture}
    \caption{
        Walk-time selection for QEG on 100 random $3$-regular graph instances with $n=18$.
        The empirical success rate is the fraction of instances for which \cref{alg:quantum_greedy} returns a minimum vertex cover after completing the full iterative procedure.
        Blue curves show Trotterised evolution with $p=4$, while red curves show exact continuous-time evolution corresponding to $p=\infty$.
        Solid lines with filled markers denote LDF, while dashed lines with open markers denote SDF.
        The vertical gray line marks the selected walk time $t^\star$ used in the subsequent benchmarks.
        Error bars show the SEM error over the 100 instances.
    }
    \label{fig:qeg-3regular-performance}
\end{figure}

The intuition behind this rule is complementary to the previous one.
If fixing $v$ in the cover leads to a high expected cost, then including $v$ is comparatively unfavourable according to the conditioned quantum score.
The SDF-like rule therefore excludes $v$ from the cover and instead includes all of its neighbours, thereby satisfying the incident edge constraints.

As with QPG, the choice of walk time $t$ sets the distribution over feasible covers used by the greedy criterion.
\Cref{fig:qeg-3regular-performance} reports the performance for $t \in [0,\pi]$ on $100$ $3$-regular instances with $n=18$ vertices, comparing the exact continuous-time walk ($p=\infty$) with the finite-depth
approximation at $p=4$.
Corresponding scans for the $4$-regular and connected Erd\H{o}s--R\'enyi ($\rho=0.3$) ensembles exhibit the same qualitative dependence on $t$ and are omitted for brevity.

The dependence on $t$ mirrors the pattern established for QPG: the LDF
strategy peaks sharply around $t \approx 5\pi/8$, reflecting the same
interference mechanism described in the previous section, while SDF
stays broadly flat over a wide range before both curves decline past
$t \gtrsim 3\pi/4$ as the marginal orderings invert. 

The critical empirical difference from QPG emerges at finite Trotter depth.
Whereas QPG at $p=4$ differs substantially from its exact continuous-time limit, QEG at the same depth remains close to the corresponding exact results on the tested instances.
The QEG score evaluates the full conditioned cost Hamiltonian and therefore aggregates occupation contributions from all vertices.
On the tested instances, this graph-wide criterion produces vertex rankings that are empirically more stable under low-depth Trotterisation; however, the extensive summation alone does not imply cancellation of Trotter errors and therefore provides no analytic robustness guarantee.

This empirical stability makes QEG attractive for hardware settings in which the Trotter depth must remain small, although it comes with greater measurement overhead: the conditioned energies require $\mathcal{O}(|V|)$ circuit evaluations per greedy step rather than the $\mathcal{O}(1)$ required by QPG.

\begin{figure*}
    \centering
    \begin{tikzpicture}
\definecolor{G}{RGB}{0,0,0}
\definecolor{cqpg}{RGB}{0,114,178}
\definecolor{cqeg}{RGB}{213,94,0}
\definecolor{cqfg}{RGB}{0,158,115}

\begin{groupplot}[
  group style={
    group size=2 by 1,
    horizontal sep=2.2cm,
    vertical sep=0.5cm,
    x descriptions at=edge bottom,
  },
  width=7cm,
  height=5cm,
  xlabel={$p$},
  xmin=0.5, xmax=5.5,
  xtick={1,2,3,4,5},
  xticklabels={$1$,$2$,$3$,$4$,$\infty$},
  grid=both,
  tick align=inside,
  tick label style={font=\normalsize},
  label style={font=\normalsize},
  title style={font=\normalsize},
  every axis plot/.append style={
    line width=1.0pt,
    mark size=3.5pt,
    error bars/y dir=both,
    error bars/y explicit=true,
    error bars/error bar style={line width=0.8pt},
    error bars/error mark options={rotate=90,mark size=3.5pt}
  },
]

\nextgroupplot[
  title={Approximation ratio},
  ymin=1,
  ymax=1.07,
  ytick={1.00,1.02,1.04,1.06},
  legend to name=paxislegend,
  legend columns=4,
  legend style={
    draw=black,
    rounded corners=2pt,
    fill=white,
    font=\scriptsize,
    column sep=0.7cm,
    row sep=0.1cm,
    inner xsep=4pt,
    inner ysep=3pt,
    cells={anchor=west}
  },
  legend image post style={scale=0.5},
]

\addplot+[color=cqpg,mark=*,mark options={fill=cqpg,draw=cqpg},solid,error bars/.cd,y dir=both,y explicit]
  table[x=p,y=mean,y error expr={\thisrow{std}/10},col sep=comma]
  {new_plots/plots/p_axis_data/p_axis_ratio_qpg_ldf.csv};
\addlegendentry{QPG LDF}

\addplot+[color=cqeg,mark=square*,mark options={fill=cqeg,draw=cqeg},solid,error bars/.cd,y dir=both,y explicit]
  table[x=p,y=mean,y error expr={\thisrow{std}/10},col sep=comma]
  {new_plots/plots/p_axis_data/p_axis_ratio_qeg_ldf.csv};
\addlegendentry{QEG LDF}


\addplot+[color=cqpg,mark=*,mark options={fill=white,draw=cqpg,solid},dashed,error bars/.cd,y dir=both,y explicit]
  table[x=p,y=mean,y error expr={\thisrow{std}/10},col sep=comma]
  {new_plots/plots/p_axis_data/p_axis_ratio_qpg_sdf.csv};
\addlegendentry{QPG SDF}

\addplot+[color=cqeg,mark=square*,mark options={fill=white,draw=cqeg,solid},dashed,error bars/.cd,y dir=both,y explicit]
  table[x=p,y=mean,y error expr={\thisrow{std}/10},col sep=comma]
  {new_plots/plots/p_axis_data/p_axis_ratio_qeg_sdf.csv};
\addlegendentry{QEG SDF}


\nextgroupplot[
  title={Empirical success rate},
  ymin=0,
  ymax=1,
  ytick={0,0.2,0.4,0.6,0.8,1.0},
]

\addplot+[color=cqpg,mark=*,mark options={fill=cqpg,draw=cqpg},solid,error bars/.cd,y dir=both,y explicit]
  table[x=p,y=prob_optimal,y error=sem,col sep=comma]
  {new_plots/plots/p_axis_data/p_axis_prob_qpg_ldf.csv};

\addplot+[color=cqeg,mark=square*,mark options={fill=cqeg,draw=cqeg},solid,error bars/.cd,y dir=both,y explicit]
  table[x=p,y=prob_optimal,y error=sem,col sep=comma]
  {new_plots/plots/p_axis_data/p_axis_prob_qeg_ldf.csv};


\addplot+[color=cqpg,mark=*,mark options={fill=white,draw=cqpg,solid},dashed,error bars/.cd,y dir=both,y explicit]
  table[x=p,y=prob_optimal,y error=sem,col sep=comma]
  {new_plots/plots/p_axis_data/p_axis_prob_qpg_sdf.csv};

\addplot+[color=cqeg,mark=square*,mark options={fill=white,draw=cqeg,solid},dashed,error bars/.cd,y dir=both,y explicit]
  table[x=p,y=prob_optimal,y error=sem,col sep=comma]
  {new_plots/plots/p_axis_data/p_axis_prob_qeg_sdf.csv};


\end{groupplot}

\path (group c1r1.south west) -- (group c2r1.south east) coordinate[midway] (legendmid);
\node[anchor=north, yshift=-0.9cm] at (legendmid) {\pgfplotslegendfromname{paxislegend}};

\end{tikzpicture}
    \caption{QPG and QEG performances as a function of the number of Trotter layers $p$ on 100 3-regular graph instances of size $n=20$. Left: mean approximation ratio.
    Right: empirical success rate, defined as the fraction of instances for which the algorithm returns an optimal cover.
    $p=\infty$ shows the exact continuous-time evolution. Solid lines with filled markers denote LDF, while dashed lines with open markers denote SDF. Error bars show the SEM over the 100 graphs.
    }
    \label{fig:TrotterEffect}
\end{figure*}

\section{Comparison and benchmarking}\label{sec:Results}
We evaluate our algorithms on a benchmark of 100 graphs per size across three
graph families: 3-regular, 4-regular, and Erd\H{o}s--R\'enyi connected random graphs with
connection probability $\rho = 0.3$, generated at sizes
$n \in \{8, 10, 12, 14, 16, 18, 20\}$.
The two main quantum algorithms, QPG and
QEG are each run at circuit depths
$p \in \{1, 2, 3, 4\}$ and in the exact ($p = \infty$) limit, and for every
method we consider both the largest-degree-first (LDF) and smallest-degree-first
(SDF) greedy strategies.

We compare against three classical baselines spanning constructive heuristics, a worst-case-guaranteed approximation, and a state-of-the-art local-search solver. The first two are the classical largest-degree-first
(\emph{Greedy LDF}) and smallest-degree-first (\emph{Greedy SDF}) heuristics that
motivate our quantum selection rules. The third baseline is the 
factor-2 approximation based on a maximal matching, which adds both endpoints of
every matched edge to the cover and is guaranteed to return a cover of size at most twice the optimum \cite{vazirani2001approximation}; it is the only baseline carrying a worst-case guarantee. The fourth is \emph{FastVC}~\cite{cai2015fastvc}, a well-known local-search heuristic for minimum vertex cover on large and massive graphs.
Performance is assessed using the mean approximation ratio and the empirical success rate, defined as the fraction of benchmark instances for which the algorithm returns an optimal vertex cover.
Throughout, the per-vertex scores are evaluated as exact expectation values.
On hardware, these scores must be estimated from a finite number of measurement shots, and the cost of resolving the gap between the top-ranked vertex and its competitor contributes to the true per-step runtime.
The results for all three graph families are reported in \cref{fig:comparison_p4}.
Additionally, our results on 3-regular graphs of sizes 8-20 are comparable to those reported in \cite{luiz2025scalable}; in particular, our QEG-LDF variant achieves favourable mean approximation ratios while preserving feasibility by construction at every step. The relative weakness of FastVC on these instances is consistent with the observations of \cite{luiz2025scalable}, who likewise report that FastVC does not retain its advantage on small regular graphs.

\subsection{Trotter effect}
\label{subsec:trotter_effect}

\begin{figure*}
    \centering
    \input{new_plots/plots/alg_compare_graphs_both}
    \caption{
        Empirical exact-recovery rate (bottom row) and mean approximation ratio (top row) at circuit depth $p=4$ across graph families.
        Each column corresponds to a different graph family: $3$-regular (left), $4$-regular (centre), and connected Erd\H{o}s--R\'enyi with $\rho=0.3$ (right) for $n=18$.
        The quantum curves are obtained from \cref{alg:quantum_greedy} for QPG-LDF (blue) and QEG-LDF (orange), with $p=4$ and the graph-family-specific walk times $t^\star$ selected at $p=4$ according to \cref{subsec:walk_time_selection}.
        Also shown are Greedy LDF (purple), Greedy SDF (black), and FastVC (pink).
        Error bars show the SEM over the 100 graphs.
    }
    \label{fig:comparison_p4}
\end{figure*}

An important practical question is how the performance of the proposed
algorithms changes when the continuous-time walk is approximated by a
finite-depth first-order Suzuki-Trotter expansion. In
\cref{fig:TrotterEffect} we report the mean approximation ratio and the
empirical success rate as functions of the
Trotter depth $p$ for QPG and QEG on 100 random 3-regular instances with $n=20$. The final point, $p=\infty$, corresponds to the exact continuous
evolution.
For the LDF strategy, both metrics improve almost monotonically as the
number of Trotter layers increases. Each additional layer reduces the
difference between the trotterised dynamics and the exact walk, leading
to a more accurate ordering of the vertex probabilities that drive the
greedy selection. Consequently, the finite-depth curves progressively
approach the exact limit. The improvement is most pronounced in the
first few layers and rapidly saturates: already at $p=4$ the performance
is close to that of the exact evolution. This behaviour agrees with the
analysis of \cref{sec:Trotter}, where the approximation error scales as
$\mathcal{O}(|E|t^2/p)$. For the moderate walk times considered here, only
a small number of Trotter steps is therefore sufficient to suppress the
error below the level at which it significantly affects the greedy
decision process.
The SDF strategy exhibits a different behaviour. The performance improves
from $p=1$ to $p=2$, already surpassing the corresponding classical
heuristics, but remains nearly constant for larger depths and up to the
exact limit. A possible explanation is that the SDF update rule is more
aggressive: low-scoring vertices trigger the inclusion of neighbouring
vertices into the cover, causing larger reductions of the instance at
each step. As a consequence, the algorithm becomes less sensitive to the
fine structure of the probability ranking generated by the quantum walk,
and further improvements in the accuracy of the Trotter approximation
have only a limited impact on the final solution quality.
The same qualitative behaviour is observed across the full range of graph
sizes considered in the benchmarks and is not specific to the $n=20$
instances shown in \cref{fig:TrotterEffect}. In particular, the rapid
convergence of the finite-depth approximation toward the exact limit for
the LDF variants, together with the earlier saturation observed for SDF,
persists for different system sizes. Similar trends are also found for
Erd\H{o}s--R\'enyi graphs with connection probability $\rho=0.3$, suggesting
that the effect of the Trotter depth is largely independent of the graph
family and reflects a general feature of the constrained quantum
dynamics underlying the algorithms.

\subsection{Comparison across graph families}
\label{subsec:comparison}

\Cref{fig:comparison_p4} compares the strongest LDF variant of each quantum scoring method, QPG-LDF and QEG-LDF, with the classical Greedy LDF, Greedy SDF, and FastVC baselines.
The quantum results are evaluated at Trotter depth $p=4$ using graph-family-specific walk times selected at Trotter depth $p=4$ on the independent calibration ensembles described in \cref{subsec:walk_time_selection}.
All points in \cref{fig:comparison_p4} are evaluated on held-out benchmark ensembles that are disjoint from the calibration ensembles.

The figure is intended as a comparison between the strongest quantum-informed variants and the classical methods rather than as an exhaustive comparison of every quantum variant.
The QPG-SDF and QEG-SDF variants are therefore not included in \cref{fig:comparison_p4}.
Their dependence on the walk time and Trotter depth is examined separately in \cref{fig:qpg-3regular-performance,fig:qeg-3regular-performance,fig:TrotterEffect}.

Across the tested graph families, QEG-LDF provides the strongest overall performance among the quantum-informed methods.
Its performance at $p=4$ remains close to that of the exact continuous-time evolution, in agreement with the Trotter-depth results in \cref{subsec:trotter_effect}.
QPG-LDF also improves on the corresponding local degree-based selection rule over the tested instances, although its marginal-probability ranking is more sensitive to finite-depth Trotterisation.

The connected Erd\H{o}s--R\'enyi instances are empirically easier for all methods considered.
At $p=4$, their mean approximation ratios are closer to one and their empirical exact-recovery rates are higher than those obtained for the regular ensembles of the same size.
One possible explanation is that these denser graphs generally have larger minimum vertex covers and hence a smaller induced-graph depth $D(G)=n-\tau(G)$ between the all-in-the-cover state and the minimum-cover layer.
Their greater degree heterogeneity may also produce more differentiated vertex scores and reduce the frequency of ambiguous greedy decisions.
The present experiments do not isolate the respective contributions of density, induced-graph depth, and degree heterogeneity, so these explanations should be regarded as hypotheses rather than established mechanisms.

\subsection{Walk-time selection}
\label{subsec:walk_time_selection}

The walk time $t$ is treated as a hyperparameter and is fixed before evaluation on the benchmark instances.
For each graph family $f$, we generate an independent calibration ensemble $\mathcal{T}_f$ containing 100 graphs with $n=18$ vertices.
Each calibration ensemble is disjoint from every benchmark ensemble used in \cref{fig:comparison_p4}, including the corresponding benchmark ensemble with $n=18$.

We calibrate the walk time separately for each score function $\mathsf{s}\in\{\mathrm{QPG},\mathrm{QEG}\}$, reduction rule $R\in\{\mathrm{LDF},\mathrm{SDF}\}$, and graph family $f$.
For each combination $(\mathsf{s},R,f)$, we perform a grid search over
\begin{equation}
\mathcal{G}_t
=
\left\{
0,\frac{\pi}{16},\frac{2\pi}{16},\ldots,\pi
\right\}.
\end{equation}
The selected walk time is
\begin{equation}
t^\star_{\mathsf{s},R,f}
=
\argmax_{t\in\mathcal{G}_t}
\widehat{P}_{\mathrm{opt}}
\bigl(t;\mathcal{T}_f,\mathsf{s},R,p=4\bigr)
\label{eq:selected_walk_time}
\end{equation}
where $\widehat{P}_{\mathrm{opt}}$ is the empirical fraction of calibration instances for which the complete recursive procedure in \cref{alg:quantum_greedy} returns a minimum vertex cover.
Ties are resolved in favour of the smallest maximising grid point.

After calibration, each value $t^\star_{\mathsf{s},R,f}$ is held fixed without further adjustment.
The same value is used for every benchmark size $n\in\{8,10,12,14,16,18,20\}$, every finite Trotter depth, and every recursion level of the greedy reduction.
In particular, the walk time is not rescaled as the residual graph shrinks.

\subsection{Choice of variant}
\label{subsec:variant_choice}
Across all tested scoring rules and greedy strategies, the LDF variants provide
the most direct vertex-cover interpretation: a high-scoring vertex is fixed into
the cover and the instance is reduced. Among these, QEG-LDF gives the strongest
and the most promising performance of the algorithms tested in
our benchmarks. In particular as reported in \cref{fig:comparison_p4}, QEG performs substantially better at low Trotter depth with performance very close to the exact ones for every type of graph studied. This is consistent with the fact that the energy score evaluates
the effect of fixing a vertex on the global cost expectation, whereas QPG uses
only a marginal inclusion probability.

The SDF variants reduce the number of recursive steps, but they also produce more aggressive updates and therefore provide less fine-grained optimisation.
In the present benchmarks, the SDF variants are useful for comparison and for illustrating the flexibility of the framework.

The benchmarks in this section evaluate the quantum scores directly for finite Trotter depths and for the exact continuous-time limit.
For the special single-layer case $p=1$, the same QPG and QEG scores can instead be estimated by sequential Monte Carlo sampling using the quantum-tree representation developed in \cref{sec:montecarlo}.
This sampling procedure avoids explicit summation over the exponentially many feasible covers and enables the larger-scale $p=1$ simulations up to $n=300$ reported there.

\section{Conclusion}
\label{sec:conclusion}

We have introduced a constraint-preserving hybrid quantum-classical framework for the minimum vertex cover and maximum independent set problems. The central construction is a projected Pauli-$X$ Hamiltonian, $H_{\mathrm{MVC}}=\sum_i \Pi_{\mathcal{N}(i)}X_i$, whose dynamics remain entirely within the feasible vertex-cover subspace. Restricted to this subspace, the Hamiltonian is the adjacency matrix of an induced feasible-state graph, so its evolution realises a continuous-time quantum walk over feasible covers rather than over the original graph.

This viewpoint clarifies the role of the quantum component. The constrained walk transports amplitude from the all-in-the-cover state toward lower-weight feasible layers, but direct sampling of an optimal cover is exponentially suppressed in the tested regime. The useful output of the quantum evolution is therefore not a full solution, but a set of per-vertex scores whose local expectation values encode nonlocal structure of the feasible-state graph. These scores drive a recursive classical reduction procedure.

Among the scoring rules studied here, the energy-based criterion gives the strongest performance.
By evaluating the conditional cost after fixing a vertex, it aggregates the resulting marginal occupations across all vertices, whereas the probability-based criterion uses only the candidate vertex's marginal occupation.
Across the tested random graph families, the resulting quantum-informed greedy algorithms outperform their corresponding classical greedy baselines at empirically selected walk times, while the strongest variant remains close to the exact continuous-time benchmark under low-depth Trotterisation.
The main practical trade-off is between measurement cost and ranking stability. Marginal-probability scores require fewer circuit evaluations but are more sensitive to finite-depth errors, whereas energy-based scores require more measurements but produce more robust rankings. For bounded-degree graphs, each Trotter layer can be implemented at constant depth, but the end-to-end resource cost also depends on recursion depth and on the number of shots needed to resolve score gaps.
The key open question is therefore not whether the feasible quantum walk can generate useful rankings on the tested instances, but how the statistical separations between competing vertex scores scale with problem size. Resolving this question, together with adaptive walk-time selection and broader comparisons against iterative quantum methods and stronger classical heuristics, will determine the regime in which constraint-preserving quantum walks can provide a practical advantage for combinatorial optimisation.

\section{Availability of data and code}
\label{sec:data-code}
To support reproducibility and independent verification of our results, we have made all relevant data and source code publicly available at
\url{https://github.com/OpenQuantumComputing/Quantum-Greedy-MVC}.

\begin{acknowledgments}
    This work has been funded by the Research Council of Norway and partners through project number 332023 (NeQst) and 296205 (FME NTRANS).

    Ruben Pariente Bassa contributed to the conceptualisation, methodology, software development, formal analysis, investigation, visualisation and manuscript writing.
    Finley A. Quinton contributed to the conceptualisation, methodology, software development, formal analysis, investigation, visualisation and manuscript writing.
    Franz G. Fuchs contributed to the conceptualisation, scientific guidance, supervision, funding acquisition, visualisation, and manuscript writing.
    Pascal Halffmann has contributed to the conceptualisation, manuscript writing, scientific guidance, and supervision.

    During manuscript preparation, the authors used OpenAI's ChatGPT to assist with language editing, stylistic improvement, consistency checks, and drafting the accompanying Lean~4 formalization.
    All scientific ideas, methods, analyses, results, and conclusions were developed and critically assessed by the authors.
    The authors reviewed all AI-assisted text and code, verified the formalization using the pinned Lean~4 environment, and take full responsibility for the manuscript and accompanying materials.
\end{acknowledgments}

\bibliography{references}

\newpage

\onecolumngrid

\appendix

\section{Quantum Fidelity Greedy algorithm}
\label[appendix]{sec:QFG}
The main greedy strategies presented in the paper select vertices based on probabilities or energy variations. Here, we introduce an alternative approach that leverages the algebraic structure of the projected Pauli operators associated with the vertices. We develop it separately for two reasons. First, as the benchmarks of \cref{sec:Results} and \cref{fig:qfg-3regular-performance} show, the fidelity score yields a consistently less discriminating ranking than QPG and QEG while incurring the same $\mathcal{O}(|V|^2)$ measurement cost as QEG, so it is the least practical of the three. Second, and more fundamentally, the echo protocol does not share the clean continuous-time quantum-walk interpretation of the other two scores: rather than reading a single observable off the walk state, it uses a four-fold echo sequence whose expansion mixes commutators of two different seed operators (see \cref{sec:StructureLieAlgebra}). For these reasons QFG is best presented as a conceptual variant that illustrates the flexibility of the projected-operator framework rather than as a primary algorithm.

For any vertex cover $C$, its complement $V\setminus C$ is an independent set, and the projected mixers associated with the vertices in this independent set commute pairwise.
Conversely, an independent set identifies a mutually commuting subset of the projected mixers.
The complement of a maximal independent set is an inclusion-wise minimal vertex cover, but it is not necessarily a minimum-cardinality vertex cover.
QFG uses the echo fidelity as a heuristic compatibility score intended to guide the greedy reduction toward high-quality solutions, without guaranteeing a maximum independent set or a minimum vertex cover.

To quantify the compatibility of a candidate vertex $v$ with the remaining active vertices, let $S_v=V\setminus\{v\}$ and $H_{\bar v}=\sum_{i\in S_v}\hat{X}_i$.
The corresponding exact collective evolution and candidate evolution are
\begin{equation*}
    A_v(t_a)
    =
    e^{it_aH_{\bar v}}, \quad
    B_v(t_b)
    =
    e^{it_b\hat{X}_v}.
\end{equation*}
For finite Trotter depth $p$, the collective evolution, echo unitary, and fidelity score are respectively
\begin{align*}
    A_{v,p}(t_a)
    &=
    \left(
    \prod_{i\in S_v}
    e^{i(t_a/p)\hat{X}_i}
    \right)^p,
    \\
    U_{v,p}(t_a,t_b)
    &=
    A_{v,p}(t_a)
    B_v(t_b)
    A_{v,p}(t_a)^\dagger
    B_v(t_b)^\dagger,
    \\
    F_{v,p}(t_a,t_b)
    &=
    \left|
    \bra{\Omega}
    U_{v,p}(t_a,t_b)
    \ket{\Omega}
    \right|^2.
\end{align*}
The product in $A_{v,p}(t_a)$ is evaluated using the same fixed vertex ordering as the other trotterised circuits.
The candidate evolution $B_v(t_b)$ contains only one generator and therefore requires no Trotter decomposition.
For $p=\infty$, $A_{v,p}(t_a)$ reduces to the exact evolution $A_v(t_a)$.
If $A_v(t_a)$ and $B_v(t_b)$ commute, the exact echo is the identity and the fidelity equals one.
Conversely, unit fidelity only shows that this echo returns $\ket{\Omega}$ up to a phase and does not by itself prove that the operators commute on the full Hilbert space.
\Cref{fig:qfg-fidelity-vs-time} shows the resulting exact fidelity profiles for the example graph as the common echo time is varied.

\begin{figure}
    \centering

\definecolor{c1}{HTML}{1D4F91}  
\definecolor{c2}{HTML}{D81B60}  
\definecolor{c3}{HTML}{2A9D8F}  
\definecolor{c4}{HTML}{F1C232}  
\definecolor{c5}{HTML}{5F5F5F}  
\definecolor{c6}{HTML}{F28E2B}  
\definecolor{c7}{HTML}{222222}  

\pgfplotsset{
  good curve/.style={
    no markers,
    solid,
    line width=2.4pt,
    opacity=0.98
  },
  bad curve/.style={
    no markers,
    solid,
    line width=1.8pt,
    opacity=0.90
  }
}

\begin{tikzpicture}
\begin{axis}[
  width=8.2cm,
  height=5.1cm,
  xlabel={Time $t$},
  ylabel={$F_v(t)$},
  xmin=0,
  xmax=3.14159265,
  xtick={0,0.7853981,1.5707963,2.3561944,3.14159265},
  xticklabels={$0$,$\pi/4$,$\pi/2$,$3\pi/4$,$\pi$},
  axis lines=left,
  axis line style={->},
  grid=none,
  tick label style={font=\small},
  label style={font=\normalsize},
  clip=false
]

\addplot+[good curve, color=c1]
  table[x=beta,y=v0,col sep=comma]
  {final_reduced_plotting/evolution_times/F_vs_beta.csv};

\addplot+[good curve, color=c3]
  table[x=beta,y=v2,col sep=comma]
  {final_reduced_plotting/evolution_times/F_vs_beta.csv};

\addplot+[good curve, color=c4]
  table[x=beta,y=v3,col sep=comma]
  {final_reduced_plotting/evolution_times/F_vs_beta.csv};

\addplot+[good curve, color=c6]
  table[x=beta,y=v5,col sep=comma]
  {final_reduced_plotting/evolution_times/F_vs_beta.csv};
  
\addplot+[bad curve, color=c2, dotted]
  table[x=beta,y=v1,col sep=comma]
  {final_reduced_plotting/evolution_times/F_vs_beta.csv};

\addplot+[bad curve, color=c5, dotted]
  table[x=beta,y=v4,col sep=comma]
  {final_reduced_plotting/evolution_times/F_vs_beta.csv};

\addplot+[bad curve, color=c7, dotted]
  table[x=beta,y=v6,col sep=comma]
  {final_reduced_plotting/evolution_times/F_vs_beta.csv};

\node[
  anchor=south west,
  fill=none,
  fill opacity=0.84,
  text opacity=1,
  inner sep=0.6pt,
  rounded corners=1pt
]
at (rel axis cs:0.35,0.55)
{
\begin{tikzpicture}[
  every node/.style={
    circle,
    draw=black,
    line width=0.42pt,
    minimum size=3.2mm,
    inner sep=0pt,
    font=\tiny
  },
  vgood/.style={
    draw=black,
    line width=0.70pt
  },
  vbad/.style={
    draw=black!60,
    line width=0.42pt
  },
  edge/.style={
    draw=black,
    line width=0.42pt
  }
]

\begin{scope}[x=0.6cm,y=0.6cm,rotate=-45]

\node[vgood, fill=c1, text=white] (1) at ( 0.20, 0.00) {1};
\node[vbad,  fill=c2, text=white] (2) at (-0.55,-1.25) {2};
\node[vgood, fill=c3, text=white] (3) at (-1.70,-1.05) {3};
\node[vgood, fill=c4, text=black] (4) at (-1.45,-0.05) {4};
\node[vbad,  fill=c5, text=white] (5) at (-0.90, 1.30) {5};
\node[vgood, fill=c6, text=black] (6) at ( 0.15, 1.40) {6};
\node[vbad,  fill=c7, text=white] (7) at ( 1.05, 0.75) {7};

\draw[edge] (1) -- (2);
\draw[edge] (1) -- (3);
\draw[edge] (1) -- (4);
\draw[edge] (1) -- (5);
\draw[edge] (1) -- (6);
\draw[edge] (1) -- (7);
\draw[edge] (2) -- (3);
\draw[edge] (2) -- (4);
\draw[edge] (3) -- (4);
\draw[edge] (4) -- (5);
\draw[edge] (5) -- (6);
\draw[edge] (6) -- (7);

\end{scope}
\end{tikzpicture}
};

\end{axis}
\end{tikzpicture}
    \caption{
        Exact echo-fidelity scores $F_{v,\infty}(t,t)$ for the vertices of the depicted small example graph, shown as functions of the common echo time $t=t_a=t_b$.
        The curve colours correspond to the vertex colours in the inset, and the separation of the scores produces the vertex ranking used by QFG.
    }
\label{fig:qfg-fidelity-vs-time}
\end{figure}

Unlike QPG and QEG, QFG probes how strongly the candidate mixer $\hat{X}_v$ fails to commute with the active block generated by $H_{\bar v}$, as observed through its action on the reference state $\ket{\Omega}$.
The corresponding short-time expansion contains nested commutators of $H_{\bar v}$ with $\hat{X}_v$, including contributions from the neighbourhood projector in $\hat{X}_v=\Pi_{\Nof{v}}X_v$.
A high fidelity indicates that the echo approximately returns $\ket{\Omega}$ and therefore suggests that the candidate mixer is relatively compatible with the active block.
A low fidelity indicates that the noncommutativity has a stronger effect on $\ket{\Omega}$ and therefore assigns the candidate a stronger conflict score.
The LDF-like and SDF-like reduction rules then determine whether the candidate itself or its neighbours are added to the cover.
Because the fidelity is evaluated only on $\ket{\Omega}$, it is a state-dependent compatibility witness rather than a general operator-level measure of commutativity.
At finite $p$, the score also includes the Trotter error in $A_{v,p}(t_a)$.

The Quantum Fidelity Greedy algorithm has a scaling of the number of circuit evaluations similar to the QEG algorithm, since at each recursive step we have to evaluate a number of conditioned fidelities that is equal to the number of vertices.

\paragraph{Smallest-fidelity (LDF-like) strategy.}
For fixed $p$, $t_a$, and $t_b$, the LDF-like variant selects
\begin{equation*}
v^*
=
\argmin_{v\in V}F_{v,p}(t_a,t_b).
\end{equation*}
A low fidelity means that the candidate mixer produces a comparatively large disturbance of the reference-state echo and therefore signals strong incompatibility with the active block.
The selected vertex $v^*$ is added to the cover and removed from the residual graph together with its incident edges.
From the operator perspective, this update removes a strongly conflicting mixer from the active block and thereby reduces the noncommuting structure that remains to be resolved.

\paragraph{Largest-fidelity (SDF-like) strategy.}
The SDF-like variant instead selects
\begin{equation*}
v^*
=
\argmax_{v\in V}F_{v,p}(t_a,t_b).
\end{equation*}
A high fidelity means that the candidate mixer produces comparatively little disturbance of the reference-state echo and therefore suggests compatibility with the active block.
The selected vertex is retained outside the cover, while all vertices in its open neighbourhood $\Nof{v^*}$ are added to the cover.
The residual graph is then obtained by removing the closed neighbourhood $\{v^*\}\cup\Nof{v^*}$.
From the independent-set perspective, this strategy greedily retains mutually nonadjacent vertices and therefore constructs a maximal independent set.
The projected mixers associated with the retained vertices commute pairwise and hence generate an Abelian operator algebra.
The echo fidelity determines the selection order and thereby influences which maximal independent set is obtained.
The complementary vertex cover is inclusion-wise minimal but is not necessarily a minimum-cardinality cover.

\begin{figure}
    \centering
\begin{tikzpicture}

\definecolor{c4}{RGB}{0,114,178}
\definecolor{cE}{RGB}{190,30,45}

\begin{axis}[
  width=8.2cm,
  height=7.1cm,
  xlabel={Time $t$},
  ylabel={Empirical success rate},
  xmin=0,
  xmax=17,
  xtick={2,4,6,8,10,12,14,16},
  xticklabels={
    $\frac{\pi}{8}$,
    $\frac{\pi}{4}$,
    $\frac{3\pi}{8}$,
    $\frac{\pi}{2}$,
    $\frac{5\pi}{8}$,
    $\frac{3\pi}{4}$,
    $\frac{7\pi}{8}$,
    $\pi$
  },
  x tick label style={
    font=\normalsize,
    rotate=45,
    anchor=north east
  },
  ymin=0,
  ymax=1,
  ytick={0,0.2,0.4,0.6,0.8,1.0},
  grid=both,
  tick align=inside,
  tick label style={font=\normalsize},
  label style={font=\normalsize},
  title style={font=\normalsize},
  xlabel style={yshift=7pt},
  every axis plot/.append style={
    line width=1.1pt,
    mark size=2.4pt,
    error bars/y dir=both,
    error bars/y explicit=true,
    error bars/error bar style={line width=0.8pt},
    error bars/error mark options={
      rotate=90,
      mark size=2.4pt
    }
  },
  legend columns=2,
  legend style={
    at={(axis description cs:0.02,0.02)},
    anchor=south west,
    draw=black,
    rounded corners=1pt,
    fill=white,
    fill opacity=0.95,
    text opacity=1,
    font=\scriptsize,
    column sep=0.08cm,
    row sep=-1pt,
    inner xsep=2pt,
    inner ysep=1pt,
    cells={anchor=west}
  },
  legend image code/.code={
    \draw[
      mark repeat=2,
      mark phase=2,
      #1
    ]
    plot coordinates {
      (0cm,0cm)
      (0.25cm,0cm)
      (0.50cm,0cm)
    };
  }
]

\addplot+[
  color=c4,
  mark=*,
  mark options={fill=c4,draw=c4},
  solid,
  error bars/.cd,
  y dir=both,
  y explicit
]
table[
  x expr=\coordindex+1,
  y=prob_optimal,
  y error=sem,
  col sep=comma
]
{new_plots/t_analysis/prob/QFG_LDF_3_regular_t_p4.csv};
\addlegendentry{LDF $p=4$}

\addplot+[
  color=cE,
  mark=*,
  mark options={fill=cE,draw=cE},
  solid,
  error bars/.cd,
  y dir=both,
  y explicit
]
table[
  x expr=\coordindex+1,
  y=prob_optimal,
  y error=sem,
  col sep=comma
]
{new_plots/prob/3-regular/reduced_t_analysis_qfg_ldf_exact.csv};
\addlegendentry{LDF $p=\infty$}

\addplot+[
  color=c4,
  mark=*,
  mark options={fill=white,draw=c4,solid},
  dashed,
  error bars/.cd,
  y dir=both,
  y explicit
]
table[
  x expr=\coordindex+1,
  y=prob_optimal,
  y error=sem,
  col sep=comma
]
{new_plots/t_analysis/prob/QFG_SDF_3_regular_t_p4.csv};
\addlegendentry{SDF $p=4$}

\addplot+[
  color=cE,
  mark=*,
  mark options={fill=white,draw=cE,solid},
  dashed,
  error bars/.cd,
  y dir=both,
  y explicit
]
table[
  x expr=\coordindex+1,
  y=prob_optimal,
  y error=sem,
  col sep=comma
]
{new_plots/prob/3-regular/reduced_t_analysis_qfg_sdf_exact.csv};
\addlegendentry{SDF $p=\infty$}

\addplot[
  color=gray,
  solid,
  line width=1.2pt,
  no marks,
  forget plot
]
coordinates {(7,0)(7,1)};

\node[
  gray,
  font=\scriptsize,
  fill=white,
  fill opacity=0.9,
  text opacity=1,
  inner sep=1pt,
  rotate=90,
  anchor=south
]
at (axis cs:7.18,0.48) {$t^\star$ for LDF $p=4$};

\end{axis}
\end{tikzpicture}
    \caption{
        Echo-time selection for QFG on 100 random $3$-regular graph instances with $n=16$.
        The empirical success rate is the fraction of instances for which \cref{alg:quantum_greedy} returns a minimum vertex cover after completing the full iterative procedure.
        The common echo time is varied with $t=t_a=t_b$.
        Blue curves show the Trotterised echo protocol with $p=4$, while red curves show the exact echo protocol corresponding to $p=\infty$.
        Solid lines with filled markers denote LDF-like variants, while dashed lines with open markers denote SDF-like variants.
        The vertical gray line marks the selected echo time $t^\star$ used in the subsequent benchmarks.
        Error bars show the binomial standard error over the 100 instances.
    } 
    \label{fig:qfg-3regular-performance}
\end{figure}
The common echo time $t=t_a=t_b$ controls the sensitivity of the fidelity score to the incompatibility between each candidate mixer and the active block.
\Cref{fig:qfg-3regular-performance} shows the performance of both QFG variants over $t\in[0,\pi]$ on 100 3-regular instances with $n=16$.
Both variants are sensitive to the echo time and attain their highest empirical success rates before $t=\pi/2$.
Corresponding scans for the $4$-regular and connected Erd\H{o}s--R\'enyi ($\rho=0.3$) ensembles exhibit the same qualitative dependence on $t$ and are omitted for brevity.

At Trotter depth $p=4$, the QFG performance remains close to that of the exact echo protocol over the tested range.
This agreement indicates that the vertex ranking induced by the echo fidelity is comparatively stable under the finite-depth approximation for these instances.

QFG nevertheless underperforms both QPG and QEG, including in the exact $p=\infty$ limit, indicating that the state-dependent compatibility score provides a less discriminating vertex ranking.
For suitable echo times, both QFG variants remain competitive with and can outperform their corresponding classical greedy baselines.

At recursion step $r$, QFG requires one candidate-specific echo evaluation for each vertex in the residual graph and therefore uses $\mathcal{O}(|V_r|)$ circuit evaluations.
Summed over a complete run, this gives at most $\mathcal{O}(|V|^2)$ circuit evaluations, excluding the number of measurement shots required to estimate each fidelity.

\section{Nested commutators of projected Pauli \texorpdfstring{$X$}{X} generators}
\label[appendix]{sec:StructureLieAlgebra}

In this appendix, we analyze the algebraic structure generated by projected single-qubit Pauli $X$ operators acting within the vertex-cover subspace $\mathcal{H}_{\mathrm{VC}}$. Rather than attempting a full characterisation of the resulting dynamical Lie algebra, we focus on structural properties that directly govern the behaviour of the quantum algorithms introduced in the main text.

Our approach is guided by the observation that all relevant observables, marginal selection probabilities, conditional energy estimates, and fidelity measures, can be expressed in terms of the diagonal contributions of nested commutators of the projected generators $\{\hat{X}_i\}_{i \in V}$ with respect to $Z_v$ or other Projected Paulis. The key question is therefore not the full Lie closure, but how operator support, symmetry, and constraint structure propagate under repeated commutation.

To this end, we proceed constructively. We first define the constraint-preserving projected operators and identify the graph-theoretic conditions under which projected Pauli strings are nonvanishing.  

Building on these results, we analyse the structure of nested commutators, deriving an operator light-cone that bounds the growth of support and proving selection rules that restrict which terms contribute to expectation values.
This shows how the recursive greedy scores can receive contributions from increasingly distant graph regions, in contrast to their classical degree-based counterparts.

\subsection{Constraint-preserving operators}
\label[appendix]{subsec:constraint_ops}

We begin by reviewing the projection formalism that ensures all operators respect the vertex-cover constraint. The projector onto the valid vertex-cover subspace is
\begin{equation*}
    \Pi_{\mathrm{VC}}
    = \sum_{C \in \mathrm{VC}} \ket{C}\!\bra{C}
    = \prod_{(i,j)\in E}
      \bigl(\mathbb{I} - \ket{00}\!\bra{00}_{ij}\bigr),
\end{equation*}
where the product form explicitly enforces the edge constraint $C_i + C_j \geq 1$ for every edge $(i,j) \in E$.

To implement Pauli rotations that preserve the vertex-cover constraint, we sandwich single-qubit operators between projectors.
For a vertex $i\in V$ with open neighbourhood $\Nof{i}$, define the residual induced graph $G_i^{\mathrm{res}}=G[V\setminus(\{i\}\cup\Nof{i})]$.
The projected Pauli-$X$ operator then decomposes as
\begin{equation*}
    \Pi_{\mathrm{VC}}X_i\Pi_{\mathrm{VC}}
    =
    \Pi_{\mathrm{VC}_{G_i^{\mathrm{res}}}}
    \otimes
    \left(
    \Pi_{\Nof{i}}\otimes X_i
    \right)
    =
    \Pi_{\mathrm{VC}_{G_i^{\mathrm{res}}}}
    \otimes\hat{X}_i.
\end{equation*}
Here $\Pi_{\Nof{i}}=\bigotimes_{j\in\Nof{i}}\ket{1}\!\bra{1}_j$ requires every neighbour of $i$ to be in the cover, while $\Pi_{\mathrm{VC}_{G_i^{\mathrm{res}}}}$ enforces the remaining edge constraints on vertices outside the closed neighbourhood $\{i\}\cup\Nof{i}$.
The operator $\hat{X}_i=\Pi_{\Nof{i}}X_i$ acts nontrivially only on basis states in which every neighbour of $i$ belongs to the cover, because $\Pi_{\Nof{i}}$ annihilates all other basis states.
On this control subspace, flipping vertex $i$ cannot uncover an incident edge, while all edges not incident to $i$ remain unchanged.
The operator therefore preserves the feasible vertex-cover subspace and its orthogonal complement, which is equivalently expressed by $[\hat{X}_i,\Pi_{\mathrm{VC}}]=0$.

Not every induced projected Pauli string is nonzero, and whether it survives projection depends on the graph structure of the qubits that it flips.
Let $B\subseteq V$ denote the flipping set and write $P=\left(\bigotimes_{i\in B}\sigma_i^{X/Y}\right)\otimes\left(\bigotimes_{j\in V\setminus B}\sigma_j^{I/Z}\right)$, where $\sigma_i^{X/Y}\in\{X_i,Y_i\}$ and $\sigma_j^{I/Z}\in\{\mathbb{I}_j,Z_j\}$.
The corresponding projected Pauli operator is $\Pi_{\mathrm{VC}}P\Pi_{\mathrm{VC}}$.
Using the product form of the vertex-cover projector, we obtain
\begin{equation*}
\Pi_{\mathrm{VC}}P\Pi_{\mathrm{VC}}
=
\prod_{(i,j)\in E}
\left(\mathbb{I}-\ket{00}\!\bra{00}_{ij}\right)
P
\prod_{(k,\ell)\in E}
\left(\mathbb{I}-\ket{00}\!\bra{00}_{k\ell}\right).
\end{equation*}

We analyse how each edge projector interacts with the Pauli operators, distinguishing three cases based on whether the edge endpoints lie in the flipping set $B$:

\begin{enumerate}
    \item \textbf{Both endpoints outside $B$} ($i,j \notin B$): The operator acts as $\sigma_i^{I/Z} \otimes \sigma_j^{I/Z}$, which commutes with $\ket{00}\!\bra{00}_{ij}$. Thus
    \begin{equation*}
    \bigl(\mathbb{I}-\ket{00}\!\bra{00}_{ij}\bigr)\,
    \bigl(\sigma_i^{I/Z} \otimes \sigma_j^{I/Z}\bigr)\,
    \bigl(\mathbb{I}-\ket{00}\!\bra{00}_{ij}\bigr)
    = \bigl(\sigma_i^{I/Z} \otimes \sigma_j^{I/Z}\bigr)\,
      \bigl(\mathbb{I}-\ket{00}\!\bra{00}_{ij}\bigr).
    \end{equation*}
    
    \item \textbf{One endpoint in $B$} ($i\in B$, $j\notin B$):
    Here the operator acts as $\sigma_i^{X/Y}\otimes\sigma_j^{I/Z}$.
    Direct evaluation gives
    \begin{equation*}
    \left(\mathbb{I}-\ket{00}\!\bra{00}_{ij}\right)
    \left(\sigma_i^{X/Y}\otimes\sigma_j^{I/Z}\right)
    \left(\mathbb{I}-\ket{00}\!\bra{00}_{ij}\right)
    =
    s_j\,\sigma_i^{X/Y}\otimes\ket{1}\!\bra{1}_j,
    \end{equation*}
    where $s_j=1$ when $\sigma_j^{I/Z}=\mathbb{I}_j$ and $s_j=-1$ when $\sigma_j^{I/Z}=Z_j$.
    Thus, the projector $\ket{1}\!\bra{1}_j$ requires the neighbour $j$ to be occupied.
    
    \item \textbf{Both endpoints in $B$} ($i,j \in B$): The operator acts as $\sigma_i^{X/Y} \otimes \sigma_j^{X/Y}$. Both qubits flip simultaneously, and the edge constraint is preserved only if they have opposite $Z$ eigenvalues:
    \begin{equation*}
    \bigl(\mathbb{I}-\ket{00}\!\bra{00}_{ij}\bigr)\,
    \bigl(\sigma_i^{X/Y} \otimes \sigma_j^{X/Y}\bigr)\,
    \bigl(\mathbb{I}-\ket{00}\!\bra{00}_{ij}\bigr) = \bigl(\sigma_i^{X/Y} \otimes \sigma_j^{X/Y}\bigr)\,
       \frac{\mathbb{I} - Z_i Z_j}{2}\,
    \end{equation*}
    The factor $\frac{\mathbb{I} - Z_i Z_j}{2}$ projects onto the  sector $Z_i Z_j = -1$.
\end{enumerate}

Define the residual induced graph outside the flipping set and its open neighbourhood by $G_B^{\mathrm{res}}=G[V\setminus(B\cup\Nof{B})]$.
Combining these results over all edges and collecting the neighbour projectors $\ket{1}\bra{1}_j$ for $j\in\Nof{B}$, we obtain
\begin{equation*}
\Pi_{\mathrm{VC}}P\Pi_{\mathrm{VC}}
=
\eta(B)
\left[
\left(\bigotimes_{i\in B}\sigma_i^{X/Y}\right)
\left(\prod_{(a,b)\in E(G[B])}\frac{\mathbb{I}-Z_aZ_b}{2}\right)
\right]
\otimes
\Pi_{\Nof{B}}
\otimes
\left[
\Pi_{\mathrm{VC}_{G_B^{\mathrm{res}}}}
\left(\bigotimes_{j\in V\setminus(B\cup\Nof{B})}\sigma_j^{I/Z}\right)
\right],
\end{equation*}
where $\eta(B)=(-1)^{|\{j\in\Nof{B}:\sigma_j=Z\}|}$ is the phase contributed by the $Z$ operators acting on the occupied neighbours of $B$, and $E(G[B])$ denotes the edge set of the subgraph induced by $B$.

\textbf{The bipartite constraint.}
The product of projectors $\prod_{(a,b) \in E(G[B])} \frac{\mathbb{I} - Z_a Z_b}{2}$ encodes the crucial structural requirement. Each factor projects onto the subspace where $Z_a$ and $Z_b$ have different eigenvalue. For the entire product to be nonzero, there must exist a simultaneous eigenstate where all pairs of adjacent vertices in $B$ have different $Z$ eigenvalue, equivalently, a proper $2$-colouring of $G[B]$.

Such a colouring exists if and only if $G[B]$ is bipartite.
If $G[B]$ contains an odd cycle $(v_1,v_2,\ldots,v_{2k+1},v_1)$, every edge of the cycle requires its endpoints to have opposite $Z$ eigenvalues.
Propagating these constraints around the odd cycle would require $Z_{v_1}=-Z_{v_1}$, which is impossible.
Therefore
\begin{equation*}
\prod_{(a,b)\in E(G[B])}\frac{\mathbb{I}-Z_aZ_b}{2}=0
\quad\Longleftrightarrow\quad
G[B]\text{ is non-bipartite},
\end{equation*}
which establishes the criterion for a projected Pauli operator to survive the projection.

\textbf{Summary.}
The projection guarantees constraint preservation by construction.
For a Pauli string with flipping set $B$, the projected operator is supported only on configurations in which every vertex in $\Nof{B}$ is occupied, and it can be nonzero only if $G[B]$ is bipartite.
The bipartiteness condition is formalized in the following lemma.

\begin{lemma}[Bipartiteness of the Flipping Support]
\label{lem:bipartite}
Let $P = \bigl(\bigotimes_{i \in B} \sigma_i^{X/Y}\bigr)
        \otimes
        \bigl(\bigotimes_{j \notin B} \sigma_j^{I/Z}\bigr)$
be a Pauli operator, where $B \subseteq V$ is the set of qubits on which $P$ acts as $X$ or $Y$. Then 
\begin{equation*}
\Pi_{\mathrm{VC}}\, P\, \Pi_{\mathrm{VC}} \neq 0
\quad \Longrightarrow \quad
\text{the induced subgraph } G[B] \text{ is bipartite}.
\end{equation*}
\end{lemma}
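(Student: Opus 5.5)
I will prove the contrapositive: if $G[B]$ contains an odd cycle, then $\Pi_{\mathrm{VC}}P\Pi_{\mathrm{VC}}=0$. Rather than relying on the full factorised formula derived above, I will argue directly on computational basis matrix elements. This avoids tracking phases and neighbour projectors.

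Every single-qubit factor of $P$ maps a basis state to a scalar multiple of a basis state. $X$ and $Y$ flip the bit, while $I$ and $Z$ preserve it. Hence for each basis state $\ket{C}$,
\begin{equation*}
P\ket{C}=c_C\ket{C\oplus \mathbf 1_B}, \qquad |c_C|=1,
\end{equation*}
where $\mathbf 1_B$ is the indicator of $B$. Therefore $\bra{C'}\Pi_{\mathrm{VC}}P\Pi_{\mathrm{VC}}\ket{C}$ vanishes unless three conditions hold: $C\in\mathrm{VC}(G)$, $C'=C\oplus\mathbf 1_B$, and $C'\in\mathrm{VC}(G)$. Consequently, nonvanishing of the projected operator is equivalent to the existence of a cover $C$ such that flipping every bit in $B$ again yields a cover.

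The key step is to extract a $2$-colouring of $G[B]$ from such a $C$. Take any edge $\{a,b\}$ with $a,b\in B$. Since $C$ is a cover, $C_a+C_b\ge1$. After flipping both bits, the new values are $1-C_a$ and $1-C_b$, and feasibility of $C'$ gives $(1-C_a)+(1-C_b)\ge1$, i.e. $C_a+C_b\le1$. Together these force $C_a+C_b=1$, so $C_a\neq C_b$. This is exactly the case-3 factor $(\mathbb I-Z_aZ_b)/2$ computed earlier, now read off on basis states. Thus $C|_B$ assigns different values to adjacent vertices of $B$, which is a proper $2$-colouring of $G[B]$, so $G[B]$ is bipartite.

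For completeness, I can also phrase the conclusion through the odd-cycle argument given in the text. Along an odd cycle $v_1,\dots,v_{2k+1}$ in $B$, the values $C_{v_i}$ must alternate, which forces $C_{v_1}\neq C_{v_1}$, a contradiction. I do not expect a genuine obstacle here. The only point needing care is that the $Z$ factors contribute only phases and the $Y$ factors contribute only the phases $\pm i$. Neither can cancel a single nonzero matrix element, since each column of $P$ contains exactly one nonzero entry. This justifies reducing the question entirely to feasibility of the pair $(C,C')$.
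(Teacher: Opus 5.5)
Your proof is correct and follows essentially the same route as the paper's: you reduce nonvanishing of $\Pi_{\mathrm{VC}}P\Pi_{\mathrm{VC}}$ to the existence of a cover $C$ such that $C\oplus\mathbf 1_B$ is also a cover, and then note that every edge inside $B$ forces $C_a\neq C_b$, so $C|_B$ is a proper $2$-colouring of $G[B]$. Your matrix-element argument, that each column of $P$ has a single unimodular entry so phases cannot cancel, makes rigorous a step the paper only asserts. Note also that you announce a contrapositive but actually argue the direct implication, which is harmless.
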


\begin{proof}
The operator $P$ flips the computational basis state of every qubit in $B$ simultaneously. For the projected operator $\Pi_{\mathrm{VC}}\,P\,\Pi_{\mathrm{VC}}$ to be nonzero, there must exist at least one valid vertex-cover configuration $\ket{C}$ such that $P\ket{C}$ is also a valid vertex cover.

Consider any edge $(u,v) \in E$ with both endpoints in $B$. The simultaneous flip changes both $C_u$ and $C_v$. For the edge constraint $C_u + C_v \geq 1$ to remain satisfied before and after the flip, the two vertices must have had complementary initial occupations: one occupied ($C = 1$) and one unoccupied ($C = 0$). This complementarity condition must hold for every edge within $B$.

This is precisely the requirement for a proper $2$-colouring of the induced subgraph $G[B]$, which exists if and only if $G[B]$ is bipartite. If $G[B]$ contains an odd cycle, no valid $2$-colouring exists, meaning no vertex-cover configuration can be mapped to another valid cover by the simultaneous flip. Therefore $\Pi_{\mathrm{VC}}\,P\,\Pi_{\mathrm{VC}} = 0$ whenever $G[B]$ is non-bipartite.
\end{proof}

\subsection{Nested commutators}\label[appendix]{nested_com}
The QPG selection probability $P_v(t)$ in \cref{eq:prob_campbell} is governed by nested commutators of $H_{\mathrm{MVC}}$ with $Z_v$.
The QEG energy estimate $E_v(t)$ in \cref{eq:energy_campbell} instead involves nested commutators of the candidate-conditioned Hamiltonian $H_{\mathrm{MVC}}^{(v)}$ with the Pauli-$Z$ terms in the cost Hamiltonian $H_C=\frac{|V|}{2}\mathbb{I}-\frac{1}{2}\sum_{i\in V}Z_i$.
For QFG, the echo expansion involves commutators generated by the candidate mixer $\hat{X}_v$ and the active-block Hamiltonian $H_{\bar v}=\sum_{i\in S_v}\hat{X}_i$, where $S_v=V\setminus\{v\}$.

The first nontrivial commutators already reveal higher-order constraint-preserving transitions.
For vertices $i$ and $j$, one obtains
\begin{equation}
[\hat{X}_i,\hat{X}_j]
=
-\frac{i}{2}
\Pof{i,j}
\otimes
(X_iY_j-Y_iX_j)\delta_E(i,j)
=
-\frac{i}{2}\widehat{XY}_{ij}\delta_E(i,j).
\label{eq:projected-mixer-commutator}
\end{equation}
Here $\delta_E(i,j)=1$ if $(i,j)\in E$ and $\delta_E(i,j)=0$ otherwise.
Thus, nonadjacent generators commute, whereas adjacent generators produce a two-body operator supported on the corresponding edge and conditioned on the surrounding neighbourhood, as illustrated in \cref{fig:twostar}.

This pattern continues under further nested commutators: starting from single-vertex generators $\hat{X}_i$, one generates increasingly nonlocal operators supported on connected subgraphs, while still preserving feasibility at every step. In this sense, the algebra generated by the projected Pauli operators provides a hierarchy of allowed transitions inside $\mathcal{H}_{\mathrm{VC}}$, ranging from local conditional flips to collective multi-qubit moves. This higher-order structure is the algebraic motivation of the nonlocal information extracted later by the greedy selection rules.

\begin{figure*}
    \centering
    \begin{subfigure}{0.42\linewidth}
        \centering
        \scalebox{0.8}{$
[\hat{\sigma}_5, \hat{\sigma}_6] = 
\vcenter{\hbox{
\begin{tikzpicture}[
scale=.75,
  line/.style={draw=black!60, line width=0.8pt},
  outer/.style={circle, fill=cyan!35, draw=black, minimum size=8mm},
  center/.style={circle, fill=none, draw=black, line width=0.8pt, minimum size=8mm}
]
    \begin{scope}[xshift=-3cm]
      \node[center,
        path picture={
          \path[fill=red!35]
            (path picture bounding box.north west) rectangle
            (path picture bounding box.east |- path picture bounding box.center);
          \path[fill=cyan!35]
            (path picture bounding box.west |- path picture bounding box.center)
            rectangle
            (path picture bounding box.south east);
        }
      ] (cL) at (0,0) {$q_5$};
    
      \foreach \i/\q in {45/1,135/2,225/3,315/4} {
        \node[outer] (nL\i) at (\i:2.3) {$q_{\q}$};
        \draw[line] (cL) -- (nL\i);
      }
    \end{scope}
    
    \begin{scope}[xshift=3cm]
      \node[center,
          path picture={
        \path[fill=cyan!35]
          (path picture bounding box.north west) rectangle
          (path picture bounding box.east |- path picture bounding box.center);
        \path[fill=red!35]
          (path picture bounding box.west |- path picture bounding box.center)
          rectangle
          (path picture bounding box.south east);
      }
      ] (cR) at (0,0) {$q_6$};
    
      \foreach \i/\q in {45/7,135/8,225/9,315/10} {
        \node[outer] (nR\i) at (\i:2.3) {$q_{\q}$};
        \draw[line] (cR) -- (nR\i);
      }
    \end{scope}
    
    \draw[line] (cL) -- (cR);
\end{tikzpicture}
}}
$}
        \caption{}
    \end{subfigure}
    \hspace{0.03\linewidth}
    \begin{subfigure}{0.42\linewidth}
        \centering
        \input{figures/multiXY}
        \vspace{-1\baselineskip}
        \caption{}
    \end{subfigure}
    \caption{
    (a) Graphical representation of two adjacent vertices, $q_5$ and $q_6$, and their respective neighbourhoods. (b) Corresponding quantum circuit for the higher-order multi-controlled operation generated by the nested commutator $[\hat{\sigma}_5, \hat{\sigma}_6]$, implementing a controlled rotation ${\widehat{XY}_{5,6}}(\theta)$,where $\widehat{XY}$ represent the XY mixer term, conditioned on all neighbouring qubits being in the state $|1\rangle$.
    }
    \label{fig:twostar}
\end{figure*}

To treat the relevant expansions uniformly, define the $r$-fold nested commutator of an operator $B$ with an operator $A$ recursively by
\begin{equation*}
[A,B]_0:=B,
\qquad
[A,B]_r:=[A,[A,B]_{r-1}]
\quad\text{for }r\geq 1.
\end{equation*}
The QPG expansion contains $[H_{\mathrm{MVC}},Z_v]_r$, while the QEG expansion contains $[H_{\mathrm{MVC}}^{(v)},Z_\ell]_r$.
The QFG echo expansion contains nested commutators generated by $H_{\bar v}$ and $\hat{X}_v$, with the decomposition $\hat{X}_v=\Pi_{\Nof{v}}X_v$ introducing contributions seeded by both $X_v$ and $\Pi_{\Nof{v}}$.

\paragraph{Pauli-support propagation.}
For $T\subseteq\Nof{i}$, define $Z_T:=\prod_{j\in T}Z_j$.
Expanding the neighbourhood projector gives
\begin{equation*}
    \widehat X_i
    =
    \Pi_{\Nof{i}}X_i
    =
    2^{-\deg(i)}
    \sum_{T\subseteq\Nof{i}}
    (-1)^{|T|}X_iZ_T.
\end{equation*}
For a Pauli string $Q$, let $\operatorname{supp}(Q)$ denote its full support and let $\operatorname{supp}_{XY}(Q):=\{j:Q_j\in\{X_j,Y_j\}\}$ denote its off-diagonal support.
The commutator $[X_iZ_T,Q]$ can be nonzero only if $Q$ has a $Y$ or $Z$ factor at $i$, or if $Q$ has an $X$ or $Y$ factor at some vertex in $T$.
When the commutator is nonzero, the resulting Pauli string has full support contained in $\operatorname{supp}(Q)\cup\{i\}\cup T$ and off-diagonal support contained in $\operatorname{supp}_{XY}(Q)\cup\{i\}$.
The support of an operator is defined as the union of the supports of the Pauli strings occurring with nonzero coefficients in its Pauli expansion.

\begin{proposition}[Operator light cone]
    \label{prop:lightcone}
    For $C_0=Z_v$ and $C_r=[H_{\mathrm{MVC}},C_{r-1}]$, one has $\operatorname{supp}(C_r)\subseteq\Ball{r}{v}$ for every $r\geq0$.
\end{proposition}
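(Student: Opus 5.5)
The plan is to prove a strengthened statement by induction on $r$. A naive induction that tracks only $\operatorname{supp}(C_r)$ fails. A term $X_iZ_T$ with $T\subseteq\Nof{i}$ can act nontrivially because $C_{r-1}$ has an off-diagonal factor at some $j\in T$. If $i$ then sat at distance $r$ from $v$, the set $T$ could reach distance $r+1$ after a single commutator. To rule this out, I will carry the off-diagonal support along as well. The claim, with the convention $\Ball{-1}{v}:=\emptyset$, is
\begin{equation*}
\operatorname{supp}(C_r)\subseteq\Ball{r}{v},
\qquad
\operatorname{supp}_{XY}(C_r)\subseteq\Ball{r-1}{v}
\qquad\text{for all } r\ge 0 .
\end{equation*}
The base case $r=0$ is immediate, since $C_0=Z_v$ has support $\{v\}=\Ball{0}{v}$ and no $X$ or $Y$ factors.

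For the inductive step, expand $H_{\mathrm{MVC}}=\sum_i 2^{-\deg(i)}\sum_{T\subseteq\Nof{i}}(-1)^{|T|}X_iZ_T$ and expand $C_{r-1}$ in Pauli strings $Q$. By bilinearity, $C_r$ is a linear combination of the commutators $[X_iZ_T,Q]$. Cancellations can only delete Pauli strings, so it suffices to bound the support of each nonzero $[X_iZ_T,Q]$. I will use the support-propagation rule stated just before the proposition. A nonzero commutator has full support in $\operatorname{supp}(Q)\cup\{i\}\cup T$ and off-diagonal support in $\operatorname{supp}_{XY}(Q)\cup\{i\}$. It can be nonzero only in one of two cases.

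\emph{Case (a):} $Q$ has a $Y$ or $Z$ factor at $i$. Then $i\in\operatorname{supp}(Q)\subseteq\Ball{r-1}{v}$, so $T\subseteq\Nof{i}\subseteq\Ball{r}{v}$.

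\emph{Case (b):} $Q$ has an $X$ or $Y$ factor at some $j\in T$. Then $j\in\operatorname{supp}_{XY}(Q)\subseteq\Ball{r-2}{v}$. Since $i$ is adjacent to $j$, we get $i\in\Ball{r-1}{v}$ and hence $T\subseteq\Ball{r}{v}$.

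In both cases the new full support lies in $\Ball{r}{v}$. The new off-diagonal support lies in $\Ball{r-2}{v}\cup\{i\}\subseteq\Ball{r-1}{v}$, which closes the induction.

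The only delicate point is the off-diagonal bookkeeping in case (b), including the edge case $r=1$, where $\operatorname{supp}_{XY}(Q)=\emptyset$. There case (b) cannot occur, and case (a) forces $i=v$. It yields support in $\{v\}\cup\Nof{v}=\Ball{1}{v}$ and off-diagonal support $\{v\}=\Ball{0}{v}$, consistent with the general argument. I also need to double-check the propagation rule itself. The commutator of two Pauli strings is either zero or $\pm 2$ times their product, whose support is the union of the supports. Its off-diagonal factors arise only where exactly one of the two factors is off-diagonal, or where one is $X$ and the other is $Y$; in either case the site lies in $\operatorname{supp}_{XY}(Q)\cup\{i\}$. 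The first claim of the strengthened statement is exactly the proposition.
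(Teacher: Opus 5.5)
Your proof is correct and follows the paper's argument essentially verbatim. Both use the same strengthened induction tracking full and off-diagonal support, the same two-case analysis of when $[X_iZ_T,Q]\neq 0$, and the same remark that cancellations cannot enlarge support; your convention $\Ball{-1}{v}:=\emptyset$ merely unifies the paper's separate treatment of the off-diagonal bound at $r=0$. One harmless slip: a site where one factor is $X$ and the other is $Y$ gives a diagonal product $\propto Z$, so that clause in your propagation check is unnecessary, though the inclusion you conclude remains true.
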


\begin{proof}
    For every Pauli string $Q$ occurring in $C_r$, we establish simultaneously that
    \begin{equation*}
        \operatorname{supp}(Q)\subseteq\Ball{r}{v}
    \end{equation*}
    and, for $r\geq1$, that
    \begin{equation*}
        \operatorname{supp}_{XY}(Q)\subseteq\Ball{r-1}{v}.
    \end{equation*}
    At $r=0$, one has $C_0=Z_v$, whose full support is $\{v\}=\Ball{0}{v}$ and whose off-diagonal support is empty.
    Assume that these inclusions hold at order $r$, and consider a Pauli string generated at order $r+1$ by a nonzero commutator $[X_iZ_T,Q]$, where $Q$ occurs in $C_r$ and $T\subseteq\Nof{i}$.
    If $Q$ has a $Y$ or $Z$ factor at $i$, then $i\in\operatorname{supp}(Q)\subseteq\Ball{r}{v}$.
    If $Q$ instead has an $X$ or $Y$ factor at some $j\in T$, then $j\in\operatorname{supp}_{XY}(Q)$.
    At $r=0$, the off-diagonal support of $Q=Z_v$ is empty, so only the first case occurs.
    For $r\geq1$, one has $j\in\Ball{r-1}{v}$ and hence $i\in\Ball{r}{v}$.
    Thus every nonzero contribution satisfies $i\in\Ball{r}{v}$.
    Because $T\subseteq\Nof{i}$, its full support is contained in
    \begin{equation*}
        \operatorname{supp}(Q)\cup\{i\}\cup T \subseteq \Ball{r+1}{v}.
    \end{equation*}
    Its off-diagonal support is contained in $\operatorname{supp}_{XY}(Q)\cup\{i\}\subseteq\Ball{r}{v}$.
    Cancellations between Pauli strings cannot enlarge either support.
    The claimed inclusion follows by induction.
\end{proof}

The support bookkeeping and symplectic Pauli step are additionally verified in Lean~4 using the pinned project provided with the accompanying code described in \cref{sec:data-code}.

\begin{corollary}[Global reach of the light-cone bound]
    \label{cor:global}
    If $G$ is connected and $r\geq\operatorname{diam}(G)$, then $\Ball{r}{v}=V$, so the light-cone bound no longer excludes any vertex from the support of $C_r$.
\end{corollary}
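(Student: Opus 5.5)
The statement is essentially a definitional consequence of connectivity, so the plan is a short direct argument from the definitions of graph distance, diameter and the ball $\Ball{r}{v}$, followed by a remark on what the corollary does and does not assert.

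First I fix an arbitrary $u\in V$. Since $G$ is connected, there is a path from $v$ to $u$, so $\operatorname{dist}_G(u,v)$ is finite. By the definition of diameter, $\operatorname{dist}_G(u,v)\le\operatorname{diam}(G)$. With the hypothesis $r\ge\operatorname{diam}(G)$ this gives $\operatorname{dist}_G(u,v)\le r$, hence $u\in\Ball{r}{v}$. Because $u$ was arbitrary, $V\subseteq\Ball{r}{v}$, and the reverse inclusion holds trivially, so $\Ball{r}{v}=V$.

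Next I combine this with \cref{prop:lightcone}, which gives $\operatorname{supp}(C_r)\subseteq\Ball{r}{v}$. Substituting $\Ball{r}{v}=V$ turns this into $\operatorname{supp}(C_r)\subseteq V$, a vacuous constraint. So the light-cone bound no longer excludes any vertex, which is exactly the claim.

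The only point needing care is interpretive rather than technical: the corollary asserts only that the \emph{bound} becomes uninformative, not that $\operatorname{supp}(C_r)$ actually equals $V$. I would state this distinction explicitly so the result is not read as a claim of genuine global support, since proving that would require showing that no cancellations occur. I would also note that if $G$ is disconnected, the same argument applied to the connected component of $v$ gives $\Ball{r}{v}$ equal to that component once $r$ is at least the component's diameter.
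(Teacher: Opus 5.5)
Your argument is correct and matches the paper, which gives no separate proof. It treats the corollary as immediate from the definitions of $\operatorname{dist}_G$, $\operatorname{diam}(G)$ and $\Ball{r}{v}$ combined with \cref{prop:lightcone}, and it likewise notes that the actual support may be smaller than the bound permits; your remark on the disconnected case is a correct, harmless addition.
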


Consequently, at commutator orders of at least $\operatorname{diam}(G)$, the bound permits global support, although the actual support may be smaller.

\subsection{Parity selection rules}
    \label[appendix]{subsec:diagonal_survival}

The QPG and QEG observables contain expectation values of nested commutators in the real reference state $\ket{\Omega}=\ket{1}^{\otimes|V|}$.
A transpose-symmetry argument eliminates all odd-order contributions.

\begin{lemma}[Parity selection]
    \label{lem:alternating_symmetry}
    Let $A$ and $C_0$ be real symmetric operators in the computational basis, and define $C_r=[A,C_{r-1}]$ for $r\geq1$.
    Then $C_r^T=(-1)^rC_r$.
    Consequently, $\bra{\Omega}C_{2k+1}\ket{\Omega}=0$ for every real state $\ket{\Omega}$ and every $k\geq0$.
\end{lemma}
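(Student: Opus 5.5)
The plan is to argue by induction on $r$, using only the behaviour of transposition on products: $(XY)^T=Y^TX^T$. For real matrices transposition coincides with the Hermitian adjoint, but the argument only needs the algebraic identity $[X,Y]^T=[Y^T,X^T]$.

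\textbf{Base case.} For $r=0$ we have $C_0^T=C_0$ by hypothesis.

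\textbf{Inductive step.} Assume $C_{r-1}^T=(-1)^{r-1}C_{r-1}$. Then
\begin{equation*}
C_r^T=(AC_{r-1}-C_{r-1}A)^T=C_{r-1}^TA^T-A^TC_{r-1}^T=(-1)^{r-1}\bigl(C_{r-1}A-AC_{r-1}\bigr)=(-1)^rC_r,
\end{equation*}
using $A^T=A$. Note that $C_r$ stays real, since commutators of real matrices are real. Odd-order nested commutators are therefore real antisymmetric, and even-order ones are real symmetric.

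\textbf{Expectation values.} For a real vector $\ket{\Omega}$, the number $\bra{\Omega}M\ket{\Omega}$ is a scalar, so it equals its own transpose:
\begin{equation*}
\bra{\Omega}M\ket{\Omega}=\bra{\Omega}M^T\ket{\Omega}.
\end{equation*}
This uses that the coefficient vector is real, so no complex conjugation enters. With $M=C_{2k+1}$ we get $M^T=-M$, so the expectation value equals its own negative and hence vanishes.

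\textbf{Applying the hypotheses.} In the intended setting, $A$ is $H_{\mathrm{MVC}}$, $H_{\mathrm{MVC}}^{(v)}$, or $H_{\bar v}$, and $C_0$ is $Z_\ell$.
\begin{itemize}
\item Each $\hat X_i=\Pof{i}X_i$ is a product of the real diagonal projector $\Pof{i}$ and the real symmetric $X_i$. The two commute, so the product is real symmetric, and so is any sum of such terms.
\item $Z_\ell$ is real diagonal.
\item $\ket{\Omega}=\ket{1}^{\otimes|V|}$ is a real basis vector.
\end{itemize}

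The main obstacle is only this bookkeeping of hypotheses, since the lemma uses transpose rather than adjoint. One subtlety deserves a remark: the expansion in \cref{eq:prob_campbell} carries factors $(-it)^j$. The selection rule applies to the operators $C_j$ themselves, which are real, not to those complex prefactors, so the odd-$j$ terms drop out.
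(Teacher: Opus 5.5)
Your proof is correct and follows essentially the same route as the paper. Like the paper, you argue by induction on $r$ using $[A,C_{r-1}]^T=[C_{r-1}^T,A]$, and then use $\bra{\Omega}M\ket{\Omega}=\bra{\Omega}M^T\ket{\Omega}$ for real $\ket{\Omega}$ to kill the odd orders; your extra checks that $\hat X_i=\Pof{i}X_i$ is real symmetric and that the $(-it)^j$ prefactors do not interfere are also correct, and they match how the paper applies the lemma in its even-order corollary.
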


\begin{proof}
    The claim holds at $r=0$ because $C_0^T=C_0$.
    Assume that $C_{r-1}^T=(-1)^{r-1}C_{r-1}$.
    Using $A^T=A$ gives
    \begin{equation*}
    C_r^T
    =
    [A,C_{r-1}]^T
    =
    [C_{r-1}^T,A]
    =
    (-1)^r[A,C_{r-1}]
    =
    (-1)^rC_r.
    \end{equation*}
    For odd $r$, the operator $C_r$ is antisymmetric.
    Because $\ket{\Omega}$ is real, one has $\bra{\Omega}C_r\ket{\Omega}=\bra{\Omega}C_r^T\ket{\Omega}=-\bra{\Omega}C_r\ket{\Omega}$, and hence the expectation value vanishes.
\end{proof}

\begin{corollary}[Even-order reduction]
    \label{cor:odd_selection}
    Both $H_{\mathrm{MVC}}=\sum_j\hat X_j$ and $H_{\mathrm{MVC}}^{(v)}=\sum_{j\neq v}\hat X_j$ are real symmetric, so only even-order commutators contribute to the QPG and QEG expansions:
    \begin{align}
    P_v(t)
    &=
    \frac{1}{2}
    -
    \frac{1}{2}
    \sum_{k=0}^{\infty}
    \frac{(-it)^{2k}}{(2k)!}
    \bra{\Omega}
    \big[H_{\mathrm{MVC}},Z_v\big]_{2k}
    \ket{\Omega},
    \label{eq:prob_odd}
    \\
    E_v(t)
    &=
    \frac{|V|}{2}
    -
    \frac{1}{2}
    \sum_{k=0}^{\infty}
    \sum_{\ell\in V}
    \frac{(-it)^{2k}}{(2k)!}
    \bra{\Omega}
    \big[H_{\mathrm{MVC}}^{(v)},Z_\ell\big]_{2k}
    \ket{\Omega}.
    \label{eq:energy_odd}
    \end{align}
\end{corollary}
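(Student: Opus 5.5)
The plan is to reduce \cref{cor:odd_selection} to \cref{lem:alternating_symmetry}, applied to the Hadamard-lemma expansions \cref{eq:prob_campbell} and \cref{eq:energy_campbell}. Three things must be checked: the generators are real symmetric, the seed operators and the reference state are real, and the odd terms then drop out of each series.

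First, each $\hat X_j=\Pof{j}X_j$ is real symmetric in the computational basis. Both factors are real: $\Pof{j}$ is a tensor product of diagonal projectors $\ket{1}\!\bra{1}$ and identities, and $X_j$ is a real permutation matrix. Both factors are also symmetric. They act on disjoint qubits ($j\notin\Nof{j}$), so they commute, and therefore
\begin{equation*}
\hat X_j^T=X_j^T\Pof{j}^T=X_j\Pof{j}=\Pof{j}X_j=\hat X_j .
\end{equation*}
Finite sums of real symmetric matrices are real symmetric. Hence both $H_{\mathrm{MVC}}=\sum_j\hat X_j$ and $H_{\mathrm{MVC}}^{(v)}=\sum_{j\neq v}\hat X_j$ are real symmetric.

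Second, for QPG set $A=H_{\mathrm{MVC}}$ and $C_0=Z_v$. For QEG, for each fixed $\ell$, set $A=H_{\mathrm{MVC}}^{(v)}$ and $C_0=Z_\ell$. Each $Z$ is real diagonal, hence real symmetric, and $\ket{\Omega}=\ket{1}^{\otimes|V|}$ is a real vector. \Cref{lem:alternating_symmetry} then gives
\begin{equation*}
\bra{\Omega}[A,C_0]_{2k+1}\ket{\Omega}=0 \qquad \text{for all } k\ge 0 .
\end{equation*}
Substituting this into \cref{eq:prob_campbell} leaves only the terms with $j=2k$, which is \cref{eq:prob_odd}. For QEG, apply the same vanishing term by term for each $\ell$ in the finite sum over $\ell\in V$. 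Exchanging the finite $\ell$-sum with the $k$-sum then gives \cref{eq:energy_odd}.

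The only point needing care, and the main (mild) obstacle, is justifying that dropping terms from the infinite series is legitimate. Everything acts on a finite-dimensional space, so $\|[A,B]_j\|\le (2\|A\|)^j\|B\|$. The Hadamard series therefore converges absolutely for every $t$, and removing the identically vanishing odd terms does not affect the sum. As a consistency check, the even terms carry $(-it)^{2k}=(-1)^k t^{2k}$, which is real. This matches the requirement that $P_v(t)$ and $E_v(t)$ be real expectation values of Hermitian observables.
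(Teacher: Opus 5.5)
Your proposal is correct and takes the same route as the paper. There the corollary follows directly from \cref{lem:alternating_symmetry}, with $A\in\{H_{\mathrm{MVC}},H_{\mathrm{MVC}}^{(v)}\}$, seed $C_0=Z_v$ or $Z_\ell$, and the real reference state $\ket{\Omega}$. You also spell out two points the paper leaves implicit: why each $\hat X_j=\Pof{j}X_j$ is real symmetric (commuting real symmetric factors on disjoint qubits), and why dropping the vanishing odd terms from the convergent finite-dimensional Hadamard series does not change the sum.
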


\section{CTQW state analysis}
\label[appendix]{sec:CTQW_state_analysis}

In this section we analyze different properties of the continuous quantum walk of the Vertex cover state graph initialise in the root state $\ket{\Omega}$ . In particular we will focus on three different aspects that are relevant for understanding its performance: (i) the scaling of gate counts and circuit depth required to implement the CTQW respect to the number of vertices and Trotter steps, (ii) The Trotter error analysis, and (iii) the entanglement scaling induced by its multi-controlled structure.

As we show, the circuit can remain shallow for bounded-degree graphs due to parallelisation, while the numerical results exhibit strong entanglement and the operator analysis permits contributions from increasingly distant graph regions.

We first quantify the gate complexity and parallelisation scaling of the state preparation. 
\subsection{Gate count and circuit depth}
\label[appendix]{sec:scaling}
 
We quantify the cost of preparing the trotterised continuous-time quantum
walk (CTQW) state.
A single Trotter step of the walk unitary is
\begin{equation}
    U_{\mathrm{step}}(\tau)
    =\prod_{i\in V}e^{i\tau \Pof{i}X_i}=
    \prod_{i\in V} C_{N(i)}R_{X_i}(\tau),
    \qquad
    \tau=\frac{t}{p}.
    \label{eq:trotterstep}
\end{equation}
The full state-preparation circuit is
\begin{equation*}
U_p(t)=U_{\mathrm{step}}(\tau)^p .
\end{equation*}
Throughout, we write $n=|V|$, $m=|E|$, $d_i=|N(i)|$, and
$\Delta=\max_i d_i$.

\paragraph{The elementary block is a controlled-\texorpdfstring{$\mathrm{SU}(2)$}{SU(2)} gate.}
The gate cost of implementing the Trotter step defined in
\cref{eq:trotterstep} is governed by the per-vertex block
$C_{N(i)}R_{X_i}(\tau)$, a $d_i$-controlled
single-qubit $X$-rotation.
The crucial structural observation is that the target gate
$RX(\tau)=e^{i\tau X}$ is special unitary, $RX(\tau)\in\mathrm{SU(2)}$, and
moreover has a real-valued diagonal $(\cos\tau,\cos\tau)$. This places each
block in the most favourable class of multi-controlled gates: whereas a
general multi-controlled $U(2)$ gate requires a number of two-qubit gates
that is \emph{quadratic} in the number of controls in the ancilla-free
setting~\cite{barenco1995elementary}, multi-controlled $\mathrm{SU(2)}$
gates admit \emph{linear} ancilla-free decompositions~\cite{vale2023su2}.

\paragraph{Cost of a single multi-control.}
\Cref{tab:mcucost} summarises the cost of one $d_i$-controlled
$\mathrm{SU}(2)$ rotation in three resource regimes. The table reports
two-qubit gate count, T-count, and circuit depth separately, since these
resources scale differently. The T-count includes Clifford+$T$ synthesis of
the elementary single-qubit rotations to precision $\epsilon$.
 
\begin{table}
  \centering
  \renewcommand{\arraystretch}{1.35}
  \begin{tabular}{l c c c}
    \hline\hline
    Regime & CNOTs & T gates & Depth \\
    \hline
    Ancilla-free, general $U(2)$~\cite{barenco1995elementary}
      & $\mathcal{O}(d_i^2)$ & $\mathcal{O}\!\big(d_i^2\log\tfrac1\varepsilon\big)$ & $\mathcal{O}(d_i^2)$ \\
    Ancilla-free, $\mathrm{SU(2)}$~\cite{vale2023su2,dasilva2022lineardepth}
      & $\mathcal{O}(d_i)$ & $\mathcal{O}\!\big(d_i\log\tfrac1\varepsilon\big)$ & $\mathcal{O}(d_i)$ \\
    $\mathcal{O}(d_i)$ clean ancillas~\cite{claudon2024polylog}
      & $\mathcal{O}(d_i)$ & $\mathcal{O}\!\big(d_i\log\tfrac1\varepsilon\big)$ & $\mathcal{O}(\log d_i)$ \\
    \hline\hline
  \end{tabular}
  \caption{Cost of one $d_i$-controlled $\mathrm{SU(2)}$ rotation.  The factor $\log(1/\varepsilon)$
  is the per-rotation Clifford+T synthesis cost~\cite{rossselinger2016}: the
  $\mathrm{SU(2)}$ decomposition uses $\mathcal{O}(d_i)$ elementary single-qubit
  rotations, each synthesised to precision $\varepsilon$.}
  \label{tab:mcucost}
\end{table}
 
\paragraph{Total gate count per Trotter step.}
Summing the multi-control cost over all vertices and using the identity
$\sum_{i\in V} d_i = 2m$, the two-qubit gate count of one Trotter step is
\begin{equation*}
  \mathrm{CNOT}\big(U_{\mathrm{step}}\big)
  \;=\; \sum_{i\in V} \mathcal{O}(d_i) \;=\; \mathcal{O}(m) \;=\; \mathcal{O}(n\Delta),
\end{equation*}
linear in the number of edges. The fault-tolerant non-Clifford cost is
$\mathcal{O}\!\big(m\log\tfrac1\varepsilon\big)$ by the same summation. 

 \paragraph{Depth and parallelisation.}
The Trotter step defined in \cref{eq:trotterstep} is implemented as a product
of neighbourhood-controlled rotations. Its circuit depth is determined by the
depth of the individual blocks $C_{N(i)}R_{X_i}(\tau)$ and by the extent to
which blocks with disjoint support can be applied in parallel.
Two multi-control rotations can be applied in parallel \emph{if and only if their
qubit supports are disjoint}, i.e.
\begin{equation*}
    \big(\{i\}\cup \Nof{i}\big)\,\cap\,\big(\{j\}\cup \Nof{j}\big)=\varnothing.
\end{equation*}
This fails precisely when $i$ and $j$ are adjacent or share a common
neighbour, i.e. when $\mathrm{dist}_G(i,j)\le 2$. Equivalently, $i$ and $j$
have overlapping support exactly when they are adjacent in the square graph
$G^2=(V,E^2)$, which has an edge between any two vertices at graph distance
at most two in $G$. A proper vertex colouring of $G^2$ therefore partitions
the blocks into colour classes whose members have mutually disjoint support
and can be scheduled simultaneously.
 
\begin{proposition}[Trotter-step depth]
\label{prop:depth}
Let $\chi(G^2)$ denote the chromatic number of the square graph and let
$d_{\mathrm{gate}}$ be the depth of a single $\Delta$-controlled
$\mathrm{SU(2)}$ rotation in the chosen resource regime. Then one Trotter
step can be scheduled in
\begin{equation*}
    \mathrm{Depth}\big(U_{\mathrm{step}}\big)\;\le\;\chi(G^2)\,d_{\mathrm{gate}}.
\end{equation*}
This bound assumes that the ancillary resources required by all blocks within a colour class are available concurrently.
Using the standard bound $\chi(G^2)\le \Delta^2+1$, this gives
$\mathrm{Depth}(U_{\mathrm{step}})=\mathcal{O}(\Delta^3)$ in the ancilla-free
$\mathrm{SU(2)}$ regime ($d_{\mathrm{gate}}=\mathcal{O}(\Delta)$) and
$\mathcal{O}(\Delta^2\log\Delta)$ with $\mathcal{O}(\Delta)$ clean ancillas per block
($d_{\mathrm{gate}}=\mathcal{O}(\log\Delta)$).
\end{proposition}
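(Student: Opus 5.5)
The plan is to exhibit an explicit schedule built from a proper vertex colouring of the square graph $G^2$. Fix a proper colouring $c:V\to\{1,\dots,\chi(G^2)\}$ and let $V_1,\dots,V_{\chi(G^2)}$ be the colour classes.

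\textbf{Step 1: gates within a colour class commute and act on disjoint qubits.} If $i,j\in V_k$ are distinct, they are nonadjacent in $G^2$, so $\operatorname{dist}_G(i,j)\ge 3$. Hence the closed neighbourhoods $\{i\}\cup\Nof{i}$ and $\{j\}\cup\Nof{j}$ are disjoint. The blocks $C_{N(i)}R_{X_i}(\tau)$ and $C_{N(j)}R_{X_j}(\tau)$ then act on disjoint sets of qubits, and a fortiori they commute.

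\textbf{Step 2: reorder the Trotter step by colour class.} Use the freedom already noted in the definition of $U_p(t)$ that the ordering of the projected generators is a convention. Take the ordering in which all vertices of $V_1$ come first, then $V_2$, and so on. Then
\begin{equation*}
U_{\mathrm{step}}(\tau)=\prod_{k=1}^{\chi(G^2)}\ \prod_{i\in V_k}C_{N(i)}R_{X_i}(\tau).
\end{equation*}
If one insists on a specific prescribed ordering, such as ascending degree, the bound is understood for the colour-sorted ordering. This is a different first-order product formula, with the same Trotter error order. I would state this caveat explicitly, since it is the one subtle point in the argument.

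\textbf{Step 3: bound the depth.} Each inner product is a layer of gates on pairwise disjoint supports. It can therefore be executed in parallel, in depth at most $\max_{i\in V_k}\mathrm{depth}(C_{N(i)}R_{X_i})\le d_{\mathrm{gate}}$. This uses the facts that $d_i\le\Delta$ and that the gate depth is monotone in the number of controls; alternatively, pad with identity controls. The ancillas needed for different blocks in the same class must be distinct qubits, which is exactly the concurrency assumption in the statement. Summing over the colour classes gives $\mathrm{Depth}\le\chi(G^2)\,d_{\mathrm{gate}}$.

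\textbf{Step 4: the corollary bounds.} The graph $G^2$ has maximum degree at most $\Delta+\Delta(\Delta-1)=\Delta^2$. Greedy colouring therefore gives $\chi(G^2)\le\Delta^2+1$. Inserting $d_{\mathrm{gate}}=\mathcal O(\Delta)$ from the ancilla-free $\mathrm{SU}(2)$ row of \cref{tab:mcucost} gives $\mathcal O(\Delta^3)$. Inserting $d_{\mathrm{gate}}=\mathcal O(\log\Delta)$ from the clean-ancilla row gives $\mathcal O(\Delta^2\log\Delta)$.
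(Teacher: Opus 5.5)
Your proposal is correct and follows the paper's proof: colour $G^2$, run each colour class in parallel because the blocks in a class have disjoint support, concatenate the $\chi(G^2)$ classes, and bound $\chi(G^2)\le\Delta^2+1$ by greedy colouring using the $\Delta(\Delta-1)+\Delta$ degree bound. Your caveat is a valid point that the paper's proof leaves implicit: the bound holds for the colour-sorted ordering of the product, not for an arbitrary prescribed ordering such as the ascending-degree/index ordering used in the benchmarks, since adjacent blocks do not commute.
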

 
\begin{proof}
    Take a proper vertex colouring of $G^2$ with $\chi(G^2)$ colour classes.
    Within each colour class, the corresponding blocks have pairwise disjoint
    support and can therefore be applied in parallel, with depth at most
    $d_{\mathrm{gate}}$. Concatenating the $\chi(G^2)$ classes yields the stated
    depth bound. The maximum degree of $G^2$ is at most
    $\Delta(\Delta-1)+\Delta\le\Delta^2$, so greedy colouring gives
    $\chi(G^2)\le\Delta^2+1$; substituting the per-regime $d_{\mathrm{gate}}$
    from \cref{tab:mcucost} gives the two explicit bounds.
\end{proof}
 
\begin{corollary}[Constant-depth layers for bounded degree]
\label{cor:bounded}
If $\Delta=\mathcal{O}(1)$, then $\chi(G^2)=\mathcal{O}(1)$ and $d_{\mathrm{gate}}=\mathcal{O}(1)$, so each
Trotter step has depth $\mathrm{Depth}(U_{\mathrm{step}})=\mathcal{O}(1)$, independent
of the system size $n$. The full state-preparation circuit $U_p(t)$ then has
depth $\mathcal{O}(p)$, with a two-qubit gate count $\mathcal{O}(pm)=\mathcal{O}(pn)$ and non-Clifford
count $\mathcal{O}\!\big(pn\log\tfrac1\varepsilon\big)$.
\end{corollary}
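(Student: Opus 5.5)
The statement to prove is \cref{cor:bounded}. The plan is to specialise \cref{prop:depth} and the gate-count summation from \cref{sec:scaling} to the case $\Delta=\mathcal{O}(1)$. The only new work is tracking that every quantity that previously depended on $\Delta$ becomes a constant independent of $n$.

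First, bound the chromatic number of the square graph. As in the proof of \cref{prop:depth}, the maximum degree of $G^2$ is at most $\Delta^2$. Greedy colouring then gives $\chi(G^2)\le\Delta^2+1=\mathcal{O}(1)$.

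Second, bound the per-block depth. Each block $C_{\Nof{i}}R_{X_i}(\tau)$ is a $d_i$-controlled $\mathrm{SU}(2)$ rotation with $d_i\le\Delta$. By \cref{tab:mcucost}, its depth $d_{\mathrm{gate}}$ is $\mathcal{O}(\Delta)$ in the ancilla-free regime and $\mathcal{O}(\log\Delta)$ with clean ancillas. Either way $d_{\mathrm{gate}}=\mathcal{O}(1)$. A minor point: a block with fewer than $\Delta$ controls can be padded by idling, so its depth is at most that of the $\Delta$-controlled gate. Substituting both bounds into \cref{prop:depth} yields $\mathrm{Depth}(U_{\mathrm{step}})\le\chi(G^2)\,d_{\mathrm{gate}}=\mathcal{O}(1)$, independent of $n$.

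Third, pass to the full circuit. Since $U_p(t)=U_{\mathrm{step}}(\tau)^p$ is a sequential composition of $p$ identical steps, its depth is at most $p$ times the single-step depth, i.e.\ $\mathcal{O}(p)$. For gate counts, one step uses $\sum_i\mathcal{O}(d_i)=\mathcal{O}(m)$ CNOTs, by the summation preceding \cref{prop:depth}. With bounded degree, $m=\tfrac12\sum_i d_i\le n\Delta/2=\mathcal{O}(n)$. Multiplying by $p$ gives $\mathcal{O}(pm)=\mathcal{O}(pn)$ two-qubit gates. The same summation with the per-block T-count $\mathcal{O}(d_i\log\tfrac1\varepsilon)$ gives $\mathcal{O}(pn\log\tfrac1\varepsilon)$ non-Clifford gates.

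There is no real obstacle here. The one point needing care is the caveat in \cref{prop:depth}: in the ancilla-assisted regime, all blocks in a colour class must have their ancillas available concurrently. Since each block uses $\mathcal{O}(\Delta)=\mathcal{O}(1)$ ancillas, this costs only $\mathcal{O}(n)$ extra qubits and does not affect the depth claim. The ancilla-free regime avoids the issue altogether.
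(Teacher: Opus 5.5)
Your proposal is correct and is essentially the paper's own argument. The corollary is \cref{prop:depth} specialised to $\Delta=\mathcal{O}(1)$ via $\chi(G^2)\le\Delta^2+1$ and the per-block depths of \cref{tab:mcucost}, combined with the per-step counts $\sum_i\mathcal{O}(d_i)=\mathcal{O}(m)$ and $m\le n\Delta/2$, multiplied by $p$ for the $p$ sequential steps.

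One implicit point, shared with the paper, is that the colour-class schedule also fixes the ordering of the non-commuting blocks inside each Trotter step. This is legitimate because the first-order formula is only defined up to the ordering of the generators, but the $\mathcal{O}(1)$ depth would not follow for an arbitrary prescribed order; for example, a path labelled consecutively gives a staircase whose depth grows with $n$.
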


 \paragraph{Full circuit and ancilla overhead.}
For general $\Delta$, the complete $p$-step preparation circuit has two-qubit gate count $\mathcal{O}(pm)=\mathcal{O}(pn\Delta)$ and depth $\mathcal{O}\!\bigl(p\,\chi(G^2)d_{\mathrm{gate}}\bigr)$.
The ancilla-free $\mathrm{SU}(2)$ decomposition requires no clean ancillas and gives depth $\mathcal{O}\!\bigl(p\,\chi(G^2)\Delta\bigr)$.
The logarithmic-depth implementation requires $\mathcal{O}(\Delta)$ clean ancillas for every block executed concurrently.
Executing all blocks in each colour class in parallel can therefore require $\mathcal{O}(n)$ clean ancillas in the worst case and gives depth $\mathcal{O}\!\bigl(p\,\chi(G^2)\log\Delta\bigr)$.
Reusing a single register of $\mathcal{O}(\Delta)$ clean ancillas reduces the ancilla overhead but requires the corresponding blocks to be executed sequentially, giving a worst-case depth of $\mathcal{O}(pn\log\Delta)$.

\subsection{Trotter error analysis}
\label[appendix]{sec:Trotter}
We now analyse the error introduced by the trotterisation of the continuous-time quantum walk evolution.  
The exact continuous-time evolution is
\begin{equation*}
U(t)=e^{itH_{\mathrm{MVC}}}.
\end{equation*}

Since the local terms $\hat{X}_i$ do not generally commute, the exact evolution is approximated through a first-order Suzuki--Trotter decomposition:
\begin{equation*}
U_p(t)
=
\left(
\prod_{i \in V}
e^{i\frac{t}{p} \hat{X}_i}
\right)^p ,
\end{equation*}
where $p$ denotes the number of Trotter steps.

To quantify the approximation quality, we evaluate the state error with respect to the exact evolved state initialised in the root state $\ket{\Omega}$:
\begin{equation*}
\epsilon_p(t)
=
\left\|
U_p(t)\ket{\Omega}
-
e^{itH_{\mathrm{MVC}}}\ket{\Omega}
\right\|_2 ,
\end{equation*}.

For a first-order product formula, the operator norm error satisfies the standard bound
\begin{equation*}
\left\|
U_p(t)-e^{itH_{\mathrm{MVC}}}
\right\|
=
\mathcal{O}\!\left(
\frac{t^2}{p}
\sum_{i<j}\|[\hat{X}_i,\hat{X}_j]\|
\right).
\end{equation*}
The scaling of the Trotter error is therefore controlled by the commutator structure of the constrained mixer terms.

Disjoint support is sufficient for two mixer terms to commute, but it is not necessary because nonadjacent vertices at distance two can have overlapping neighbourhood projectors.
The exact commutator in \cref{eq:projected-mixer-commutator} shows that $[\hat{X}_i,\hat{X}_j]=0$ whenever $(i,j)\notin E$.
Thus, at most $|E|$ pairwise commutators contribute to the first-order product-formula bound.
Moreover, \cref{eq:projected-mixer-commutator} gives $\|[\hat{X}_i,\hat{X}_j]\|\leq1$.
Because the state error is bounded by the corresponding operator-norm error, one obtains
\begin{equation*}
    \epsilon_p(t)
    \leq
    \left\|
    U_p(t)-e^{itH_{\mathrm{MVC}}}
    \right\|
    =
    \mathcal{O}\!\left(\frac{|E|t^2}{p}\right).
\end{equation*}

For bounded-degree graph families, this worst-case global-state bound scales as $\mathcal{O}(n/p)$ at fixed evolution time, so maintaining a size-independent bound may require increasing the Trotter depth with system size.
This does not directly determine the accuracy of the greedy algorithm, which depends on the induced vertex ranking rather than on global-state fidelity.
The numerical benchmarks therefore assess the stability of the algorithmic performance under low-depth Trotterisation directly.

This scaling is consistent with the numerical results shown in \cref{fig:trotter_error}.  
The left panel shows the dependence of the state error on the number of Trotter steps $p$ for a fixed 3-regular graph with $n=18$ vertices. The observed behaviour follows the expected power-law decay
\begin{equation*}
\epsilon_p(t)\sim p^{-1},
\end{equation*}
characteristic of first-order Trotter formulas.
The centre panel reports the dependence of the error on the evolution time $t$ for different values of $p$. The growth is approximately quadratic in time for sufficiently small $t$, in agreement with the perturbative prediction
\begin{equation*}
\epsilon_p(t)\propto \frac{t^2}{p}.
\end{equation*}
The right panel shows the dependence of the error on graph size for ensembles of random $3$-regular graphs at fixed evolution time and fixed Trotter depth.
Over the tested size range, the error increases approximately linearly with $n$, consistent with $|E|=3n/2$ and the corresponding product-formula bound.

\begin{figure}
     \centering
%
%
\begin{tikzpicture}

\definecolor{cP2}{RGB}{13,8,135}     
\definecolor{cP4}{RGB}{156,23,158}   
\definecolor{cP16}{RGB}{204,71,120}  
\definecolor{cP32}{RGB}{237,121,83}  
\definecolor{cP64}{RGB}{240,200,40}  

\begin{groupplot}[
  group style={
    group size=3 by 1,
    horizontal sep=1.6cm,
    vertical sep=0.5cm,
  },
  width=5.8cm,
  height=6.0cm,
  grid=both,
  grid style={line width=0.3pt, draw=gray!25},
  major grid style={line width=0.4pt, draw=gray!35},
  tick align=outside,
  tick label style={font=\small},
  label style={font=\small},
  title style={font=\normalsize},
  every axis plot/.append style={
    line width=1.0pt,
    mark size=2.0pt,
    error bars/y dir=both,
    error bars/y explicit=true,
    error bars/error bar style={line width=0.7pt},
    error bars/error mark options={rotate=90,mark size=2.0pt}
  },
]

\nextgroupplot[
  title={Error vs time},
  xlabel={Time  $t$},
  ylabel={$\bigl\|\,U_p(t)\,|\Omega\rangle - e^{itH}\,|\Omega\rangle\,\bigr\|_2$},
  xmin=0, xmax=3.25,
  xtick={0,0.5,1.0,1.5,2.0,2.5,3.0},
  ymin=-0.03, ymax=1.45,
  ytick={0,0.2,0.4,0.6,0.8,1.0,1.2,1.4},
  legend pos=north west,
  legend cell align=left,
  legend style={
    draw=black, rounded corners=1pt, font=\footnotesize,
    inner xsep=2pt, inner ysep=1.5pt, row sep=-1pt
  },
]

\addplot[color=cP2,  mark=*, mark options={fill=cP2,  draw=cP2},  solid]
table[x=beta, y=mean_error, y error=sem_error, col sep=comma]{figures/trotter/err_vs_beta_p2.csv};
\addlegendentry{$p=2$}

\addplot[color=cP4,  mark=*, mark options={fill=cP4,  draw=cP4},  solid]
table[x=beta, y=mean_error, y error=sem_error, col sep=comma]{figures/trotter/err_vs_beta_p4.csv};
\addlegendentry{$p=4$}

\addplot[color=cP16, mark=*, mark options={fill=cP16, draw=cP16}, solid]
table[x=beta, y=mean_error, y error=sem_error, col sep=comma]{figures/trotter/err_vs_beta_p16.csv};
\addlegendentry{$p=16$}

\addplot[color=cP32, mark=*, mark options={fill=cP32, draw=cP32}, solid]
table[x=beta, y=mean_error, y error=sem_error, col sep=comma]{figures/trotter//err_vs_beta_p32.csv};
\addlegendentry{$p=32$}

\addplot[color=cP64, mark=*, mark options={fill=cP64, draw=cP64}, solid]
table[x=beta, y=mean_error, y error=sem_error, col sep=comma]{figures/trotter/err_vs_beta_p64.csv};
\addlegendentry{$p=64$}

\nextgroupplot[
  title={Error vs graph size},
  xlabel={Graph size $n$},
  xmin=5, xmax=19,
  xtick={6,8,10,12,14,16,18},
  ymin=0.080, ymax=0.150,
  ytick={0.08,0.09,0.10,0.11,0.12,0.13,0.14,0.15},
  yticklabel style={/pgf/number format/fixed, /pgf/number format/precision=2},
  scaled y ticks=false,
  legend pos=north west,
  legend cell align=left,
  legend style={
    draw=black, rounded corners=1pt, font=\footnotesize,
    inner xsep=2pt, inner ysep=1.5pt
  },
]

\addplot[color=cP2, mark=*, mark options={fill=cP2, draw=cP2}, solid]
table[x=n, y=mean_error, y error=sem_error, col sep=comma]
{figures/trotter/state_error_vs_n_data.csv};
\addlegendentry{$t=\pi/2,\ p=16$}

\nextgroupplot[
  title={Error vs Trotter layers},
  xlabel={$p$},
  xmode=log, ymode=log,
  log basis x=10, log basis y=10,
  xmin=0.85, xmax=80,
  ymin=7e-3, ymax=7,
  legend pos=north east,
  legend cell align=left,
  legend style={
    draw=black, rounded corners=1pt, font=\footnotesize,
    inner xsep=2pt, inner ysep=1.5pt, row sep=-1pt
  },
]

\addplot[color=cP2,  mark=*, mark options={fill=cP2,  draw=cP2},  solid]
table[x=p, y=mean_error, y error=sem_error, col sep=comma]{figures/trotter/err_vs_p_t0785.csv};
\addlegendentry{$t=0.785$}

\addplot[color=cP4,  mark=*, mark options={fill=cP4,  draw=cP4},  solid]
table[x=p, y=mean_error, y error=sem_error, col sep=comma]{figures/trotter/err_vs_p_t1571.csv};
\addlegendentry{$t=1.571$}

\addplot[color=cP16, mark=*, mark options={fill=cP16, draw=cP16}, solid]
table[x=p, y=mean_error, y error=sem_error, col sep=comma]{figures/trotter/err_vs_p_t2356.csv};
\addlegendentry{$t=2.356$}

\addplot[color=cP64, mark=*, mark options={fill=cP64, draw=cP64}, solid]
table[x=p, y=mean_error, y error=sem_error, col sep=comma]{figures/trotter/err_vs_p_t3142.csv};
\addlegendentry{$t=3.142$}

\end{groupplot}
\end{tikzpicture}
     \caption{
        Trotter error analysis for the constrained continuous-time quantum walk.
        Left: Norm-2 distance between the exact CTQW state and the trotterised state as a function of the number of Trotter steps $p$ for a fixed 3-regular graph with $n=18$ vertices.
        Center: Dependence of the state error on the evolution time $t$ for different Trotter depths $p$ on the same graph.
        Right: Scaling of the Trotter error with graph size for ensembles of random 3-regular graphs at fixed evolution time and fixed Trotter depth.
    }
    \label{fig:trotter_error}
\end{figure}

\subsection{Entanglement analysis}
\label[appendix]{subsec:Entanglement}

\Cref{fig:volume_law} shows the time dependence of the mean von Neumann entropy $S_{\rm blocks}$ for connected blocks of size $k=n/2$ and the scaling of its maximum with system size.
Over the tested system sizes, the approximately linear growth of the maximum block entropy with $n$ is consistent with volume-law entanglement.

The circuit structure provides a mechanism by which such extended correlations can develop.
Each term $\mathrm{C}_{\Nof{i}} RX_i(\beta_i)$ can conditionally entangle qubit $i$ with the qubits in its neighbourhood $\Nof{i}$.
Repeated application of these graph-local operations allows correlations and operator support to spread across progressively larger regions of the graph.

In the Heisenberg picture, this propagation is described by the nested commutator expansion generated by $H_{\mathrm{MVC}}=\sum_i\hat{X}_i$.
As established in \cref{prop:lightcone}, each successive commutation can enlarge the support of an operator by at most one graph-distance shell.
Consequently, local observables such as $\langle Z_i\rangle$ can receive contributions involving vertices at increasing graph distance as the commutator order grows.
The observed entropy scaling provides numerical evidence that the constrained quantum walk generates extended correlations within this graph-local propagation structure.

\begin{figure}[t]
    \centering
%
%
\begin{tikzpicture}

\definecolor{cN8}{RGB}{13,8,135}      
\definecolor{cN10}{RGB}{100,13,157}   
\definecolor{cN12}{RGB}{156,23,158}   
\definecolor{cN14}{RGB}{204,71,120}   
\definecolor{cN16}{RGB}{237,121,83}   
\definecolor{cN18}{RGB}{240,230,40}   
\definecolor{cBlue}{RGB}{31,119,180}  

\begin{groupplot}[
  group style={
    group size=2 by 1,
    horizontal sep=1.7cm,
    vertical sep=0.5cm,
  },
  width=8.0cm,
  height=6.4cm,
  grid=both,
  grid style={line width=0.3pt, draw=gray!25},
  major grid style={line width=0.4pt, draw=gray!35},
  tick align=outside,
  tick label style={font=\small},
  label style={font=\small},
  title style={font=\normalsize},
  every axis plot/.append style={
    line width=1.0pt,
    mark size=1.8pt,
    error bars/y dir=both,
    error bars/y explicit=true,
    error bars/error bar style={line width=0.7pt},
    error bars/error mark options={rotate=90,mark size=1.8pt}
  },
]

\nextgroupplot[
  title={Half-system entropy vs walk time},
  xlabel={Time $t$},
  ylabel={$S(|A|=n/2)$},
  xmin=-0.05, xmax=3.25,
  xtick={0,0.5,1.0,1.5,2.0,2.5,3.0},
  ymin=-0.15, ymax=5.0,
  ytick={0,1,2,3,4},
  legend pos=north west,
  legend cell align=left,
  legend columns=2,
  legend style={
    draw=black, rounded corners=1pt, font=\footnotesize,
    inner xsep=2pt, inner ysep=1.5pt, column sep=4pt, row sep=-1pt
  },
]

\addplot[color=cN8,  mark=*, mark options={fill=cN8,  draw=cN8},  solid]
table[x=t, y=mean_S_n8,  y error=sem_S_n8,  col sep=comma]{figures/entanglement/entanglement_data_pivoted.csv};
\addlegendentry{$n=8$}

\addplot[color=cN10, mark=*, mark options={fill=cN10, draw=cN10}, solid]
table[x=t, y=mean_S_n10, y error=sem_S_n10, col sep=comma]{figures/entanglement/entanglement_data_pivoted.csv};
\addlegendentry{$n=10$}

\addplot[color=cN12, mark=*, mark options={fill=cN12, draw=cN12}, solid]
table[x=t, y=mean_S_n12, y error=sem_S_n12, col sep=comma]{figures/entanglement/entanglement_data_pivoted.csv};
\addlegendentry{$n=12$}

\addplot[color=cN14, mark=*, mark options={fill=cN14, draw=cN14}, solid]
table[x=t, y=mean_S_n14, y error=sem_S_n14, col sep=comma]{figures/entanglement/entanglement_data_pivoted.csv};
\addlegendentry{$n=14$}

\addplot[color=cN16, mark=*, mark options={fill=cN16, draw=cN16}, solid]
table[x=t, y=mean_S_n16, y error=sem_S_n16, col sep=comma]{figures/entanglement/entanglement_data_pivoted.csv};
\addlegendentry{$n=16$}

\addplot[color=cN18, mark=*, mark options={fill=cN18, draw=cN18}, solid]
table[x=t, y=mean_S_n18, y error=sem_S_n18, col sep=comma]{figures/entanglement/entanglement_data_pivoted.csv};
\addlegendentry{$n=18$}

\nextgroupplot[
  title={Half-system peak entropy vs $n$},
  xlabel={Graph size $n$},
  ylabel={Max $S(|A|=n/2)$},
  xmin=7, xmax=19,
  xtick={8,10,12,14,16,18},
  ymin=1.9, ymax=4.85,
  ytick={2.0,2.5,3.0,3.5,4.0,4.5},
  yticklabel style={/pgf/number format/fixed, /pgf/number format/precision=1},
  scaled y ticks=false,
]

\addplot[color=cBlue, mark=*, mark options={fill=cBlue, draw=cBlue}, solid, mark size=2.6pt]
table[x=n, y=mean_S_at_peak, y error=sem_S_at_peak, col sep=comma]
{figures/entanglement/entanglement_peak_summary.csv};

\end{groupplot}
\end{tikzpicture}
    \caption{
    Mean von Neumann block entropy $S_{\rm blocks}$ averaged over 40 connected subgraphs.
    Left: $S_{\rm blocks}$ as a function of walk time $t$ for connected blocks of size $k=n/2$.
    Right: Scaling of the maximum block entropy $S_{\rm blocks}^{\max}$ with total system size $n$ at $t=\pi/4$.
    }
    \label{fig:volume_law}
\end{figure}

\section{Monte Carlo simulation for \texorpdfstring{$p=1$}{p=1}}
\label[appendix]{sec:montecarlo}

Although the constrained quantum-walk state can exhibit substantial entanglement, a simplification occurs in the single-layer case $p=1$.
At this depth, the circuit admits an efficient classical representation that enables Monte Carlo estimation of expectation values without explicitly constructing the full quantum state.
In what follows, we formalise the quantum tree representation underlying the $p=1$ state, derive the pseudo-Boolean form of the cost function, and describe the Monte Carlo sampling procedure. We then present numerical results for the performance of our QPG and QEG algorithms using the Monte Carlo estimator for regular graphs and Erd\H{o}s--R\'enyi graphs up to size 300.

\subsection{Quantum tree representation of the feasible subspace}
For $p=1$, the evolved quantum state admits a natural interpretation as a weighted superposition over feasible vertex covers.
The final quantum state can be written as
\begin{equation*}
\ket{\psi(t)}
=
\prod_{i \in V}
\mathrm{C}_{\Nof{i}} RX_i(t)
\ket{\Omega},
\end{equation*}
where the ordering of vertices induces a sequential structure in the circuit.

\paragraph{Quantum tree construction.}

The circuit naturally induces a rooted tree structure, which we refer to as the \emph{quantum state tree}. Each level of the tree corresponds to a vertex $i \in V$ in the prescribed ordering, and each node represents a partial assignment consistent with the vertex-cover constraint.

At level $i$, the controlled rotation $\mathrm{C}_{\Nof{i}} RX(t)$ generates a binary branching depending on the constraints:
\begin{itemize}
    \item If all the neighbours of the vertex $i$ are in the cover:
    \begin{itemize}
    \item with amplitude $\cos(t)$, the qubit remains in state $\ket{1}$ (the vertex remains in the cover), corresponding to the left branch if all the neighbours are in the cover;
    \item with amplitude $i\sin(t)$, the qubit flips to $\ket{0}$ (the vertex is removed from the cover), corresponding to the right branch.
    \end{itemize}
    \item Otherwise the multicontrol don't activate and the qubit i stay in the state $\ket{1}$ with amplitude $1$'
    .
\end{itemize}
\Cref{fig:quantum-state-tree} illustrates this construction for a four-vertex graph and shows the corresponding circuit.
\begin{figure*}[t]
    \centering
    
    \begin{minipage}{0.48\textwidth}
        \centering
        \begin{tikzpicture}[
  scale=0.90, transform shape, 
  x=0.85cm, y=0.95cm,          
  v/.style={circle, fill=cyan!35, draw=black, minimum size=5mm, inner sep=1pt},
  e/.style={draw, line width=0.9pt}
]

\node[v] (r)   at (-1,4)    {1111};

\node[v] (a)   at (-4,2.5) {1111};

\node[v] (c)   at (-7,1.2) {1111};

\node[v] (e1)  at (-9,0)   {1111};
\node[v] (e2)  at (-5,0)   {1011};

\node[v] (f1)  at (-10.5,-1.2) {1111};
\node[v] (f2)  at (-7.5,-1.2)  {1101};

\node[v] (g1)  at (-6.2,-1.2)  {1011};
\node[v] (g2)  at (-3.8,-1.2)  {1001};

\node[v] (d)   at (-2,1.2) {1110};
\node[v] (d2)  at (-2,0)   {1110};
\node[v] (d3)  at (-2,-1.2){1110};

\node[v] (b)   at (1,2.5)  {0111};
\node[v] (b2)  at (1,1.2)  {0111};
\node[v] (b3)  at (1,0)    {0111};
\node[v] (b4)  at (1,-1.2) {0111};

\draw[e] (r) -- node[midway, sloped, above] {$\cos\theta_1$} (a);
\draw[e] (r) -- node[midway, sloped, above] {$\sin\theta_1$} (b);

\draw[e] (a) -- node[midway, sloped, above] {$\cos\theta_2$} (c);
\draw[e] (a) -- node[midway, sloped, above] {$\sin\theta_2$} (d);

\draw[e] (c) -- node[midway, sloped, above] {$\cos\theta_3$} (e1);
\draw[e] (c) -- node[midway, sloped, above] {$\sin\theta_3$} (e2);

\draw[e] (e1) -- node[midway, sloped, above,yshift=3pt] {$\cos\theta_4$} (f1);
\draw[e] (e1) -- node[midway, sloped, above,yshift=3pt] {$\sin\theta_4$} (f2);

\draw[e] (e2) -- node[midway, sloped, above,yshift=3pt] {$\cos\theta_4$} (g1);
\draw[e] (e2) -- node[midway, sloped, above,yshift=3pt] {$\sin\theta_4$} (g2);

\draw[e] (d)  -- (d2) -- (d3);
\draw[e] (b)  -- (b2) -- (b3) -- (b4);

\end{tikzpicture}
    \end{minipage}
    \hfill
    \begin{minipage}{0.48\textwidth}
        \centering
        \begin{tikzpicture}[scale=1, every node/.style={circle, draw=none, fill=cyan!35, draw=black, minimum size=7mm, inner sep=0pt}]
  \node (v2) at (-0.3, 1.3) {$q_2$};
  \node (v1) at ( 0.0, 0.0) {$q_1$};
  \node (v4) at ( 1.6, 0.9) {$q_4$};
  \node (v3) at ( 2.1,-0.6) {$q_3$};

  \draw (v2) -- (v1);
  \draw (v2) -- (v4);
  \draw (v1) -- (v4);
  \draw (v1) -- (v3);
  \draw (v4) -- (v3);
\end{tikzpicture}
    
        \vspace{0.5cm}
    
        \input{figures/circuit_1}
    \end{minipage}
    
    \caption{
    Quantum-state-tree representation for a four-vertex example.
    The unbalanced tree is shown on the left, while the corresponding graph and quantum circuit are shown on the right.
    The vertex-cover constraints prune infeasible branches and thereby produce the unbalanced tree structure.
    }
    \label{fig:quantum-state-tree}
\end{figure*}

Due to the vertex-cover constraints, not all branches correspond to valid configurations. Branches that would violate feasibility are effectively \emph{pruned} by the projectors implicit in the controlled rotations. As a result, the quantum tree is generally \emph{unbalanced}, with branching occurring only along paths corresponding to valid vertex-cover assignments.
This construction is closely related to the class of \emph{quantum tree generators} introduced in the context of Quantum Branch-and-Bound (QBNB) and constrained variants of QAOA, where feasible solutions are encoded as amplitudes along root-to-leaf paths of a decision tree \cite{Christiansen_2025,wilkening2025constraintorientedbiasedquantumsearch,wilkening2025constraintorientedbiasedquantumsearch}. In those approaches, soft linear constraints are typically enforced through the use of additional ancilla registers and penalty mechanisms, which guide the evolution toward feasible regions of the search space.

In contrast, the present construction implements a \emph{hard constraint embedding} directly at the circuit level: feasibility is enforced locally by the multi-controlled rotations, and invalid branches are naturally pruned without requiring auxiliary qubits or penalty terms. As a result, the quantum tree is generated intrinsically within the computational register, leading to a more resource-efficient representation of the feasible subspace while preserving the same path-amplitude interpretation exploited in QBNB frameworks.

Each leaf of the quantum tree corresponds to a feasible vertex cover $C \in \mathrm{VC}(G)$. The amplitude associated with the computational basis state $\ket{C}$ is uniquely determined by the sequence of left and right branches along the corresponding path $\gamma_C$:
\begin{equation*}
\ket{\psi(t)}
=
\sum_{C \in \mathrm{VC}(G)}
\left(
\prod_{j \in R_{\gamma_C}} i\sin(t)
\prod_{k \in L_{\gamma_C}} \cos(t)
\right)
\ket{C},
\end{equation*}
where $R_{\gamma_C}$ and $L_{\gamma_C}$ denote, respectively, the sets of branching indices at which the path $\gamma_C$ retains or removes the current vertex; indices corresponding to forced retention belong to neither set.

This representation follows from the reachability properties of the constrained mixer: every feasible vertex cover can be obtained from the reference state $\ket{\Omega}$ through a sequence of admissible flips, and the circuit assigns a coherent amplitude to each such path.

\subsection{Cost function as pseudo-Boolean function}
The expectation value of the cost Hamiltonian $H_C =\frac{|V|}{2}- \frac{1}{2} \sum_{i \in V}  Z_i$ with respect to the quantum state $\ket{\psi(t)}$ can be expressed as
\begin{equation*}
\bra{\psi(t)} H_C \ket{\psi(t)}
=
\sum_{C \in \mathrm{VC}(G)} \cost(C) 
\prod_{j \in R_{\gamma_C}} \sin^2(t)
\prod_{k \in L_{\gamma_C}} \cos^2(t),
\end{equation*}
where $R_{\gamma_C}$, $L_{\gamma_C}$ correspond to the sets of qubits that can be flipped along the path $\gamma_C$ in the quantum tree.

Introducing the change of variables
$x := \cos^2(t),  x \in [0,1]$,
the expectation value can be expressed as a relaxed pseudo-Boolean function.
\begin{equation*}
F(x)
=
\sum_{C \in \mathrm{VC}(G)}
\cost(C)
\prod_{j \in R_{\gamma_C}} (1 - x)
\prod_{k \in L_{\gamma_C}} x.
\end{equation*}

This form admits a natural reinterpretation as a \emph{path integral over the quantum tree}:
\begin{equation*}
F(x)
=
\sum_{\gamma \in \Gamma} C_\gamma
\prod_{i \in V} f_i(\gamma_i; x),
\end{equation*}
where $\Gamma$ is the set of all feasible root-to-leaf paths in the quantum tree, $C_\gamma$ is the cost associated with the leaf reached by $\gamma$, and
\begin{equation*}
f_i(\gamma_i; x) =
\begin{cases}
x, & \text{if vertex $i$ is retained along path } \gamma,\\
1 - x, & \text{if vertex $i$ is flipped along path } \gamma.
\end{cases}
\end{equation*}

defining the path probabilities $w_\gamma(x)=\prod_{i \in V} f_i(\gamma_i; x)$ we obtain
\begin{equation}
  F(x)=\sum_{\gamma\in\Gamma}\mathcal C_\gamma\, w_\gamma(x),
  \qquad
  \sum_{\gamma\in\Gamma} w_\gamma(x)=1,
  \label{eq:pathintegral}
\end{equation}
with normalisation $\sum_\gamma w_\gamma=1$.
Hence $\{w_\gamma(x)\}_{\gamma\in\Gamma}$ is a probability distribution over
paths, and $F(x)=\mathbb E_{\gamma\sim w}[\mathcal C_\gamma]$ is a genuine
expectation, with \emph{no} hidden normalisation constant. The per-vertex
marginal probability that vertex $v$ lies in the cover is likewise
\begin{equation}
  P_v(x)=\sum_{\gamma\in\Gamma} w_\gamma(x)\,\mathbb I[v\in\gamma],
  \label{eq:marginal}
\end{equation}
which coincides with the QPG score evaluated at $p=1$.

\subsection{Sequential Monte Carlo sampler}

\Cref{eq:pathintegral,eq:marginal} show that both QEG and
QPG scores at $p=1$ are expectations under the path measure $w_\gamma(x)$.
Because $w_\gamma$ factorises over the tree, a path can be drawn by a single
top-down traversal that visits vertices in the fixed order and respects
feasibility, with no global summation over the exponentially many covers.

\begin{figure}[t]
    \refstepcounter{algbox}
    \label{alg:mcsample}
    \noindent\textbf{Algorithm~\thealgbox: Monte Carlo sampler for $p=1$}
    \vspace{2pt}
    \hrule
    \vspace{4pt}
    \begin{algorithmic}[1]
    \State \textbf{Input:} graph $G=(V,E)$, parameter $x=\cos^2 t$, vertex order
           $1,\dots,n$
    \State Initialise $C\gets V$ \Comment{all vertices in the cover}
    \For{$i=1$ to $n$}
      \If{removing $i$ keeps $C\setminus\{i\}$ a valid cover}
           \Comment{branching node}
        \State draw $u\sim \mathcal U(0,1)$
        \If{$u\ge x$} $C\gets C\setminus\{i\}$ \Comment{remove with probability $1-x$}
        \EndIf
      \EndIf \Comment{otherwise $i$ is retained deterministically}
    \EndFor
    \State \textbf{return} feasible cover $C$ and its $\cost(C) = \sum_{i\in C}c_i$
    \end{algorithmic}
\end{figure}

A vertex is removed only when both (i) a uniform draw exceeds $x$ and (ii) the
removal preserves feasibility; the feasibility test at line~4 is exactly the
pruning of forced nodes, so the leaf returned is sampled with probability
$w_{\gamma_C}(x)=\abs{a_C(t)}^2$. The traversal touches each vertex once and the
feasibility check inspects only the neighbours of $i$, giving $\mathcal{O}(n+|E|)$ time
per sample.

\paragraph{Estimators.}
From $N$ independent samples $C^{(1)},\dots,C^{(N)}$ drawn by
\cref{alg:mcsample}, the cost expectation and the per-vertex marginals
are estimated by
\begin{equation*}
  \widehat F
  =\frac1N\sum_{s=1}^{N}\mathcal C_{C^{(s)}},
  \qquad
  \widehat P_v
  =\frac1N\sum_{s=1}^{N}\mathbb I\!\left[v\in C^{(s)}\right].
\end{equation*}
Since each $C^{(s)}\sim w(x)$, both estimators are unbiased,
$\mathbb E[\widehat F]=F(x)$ and $\mathbb E[\widehat P_v]=P_v(x)$, with variance
$\mathcal{O}(1/N)$. Crucially the procedure
scales in general quadratically in $n$ and linearly in $N$, making it tractable for graphs with hundreds
of vertices, where exact preparation of $\ket{\psi(t)}$ is infeasible.

\paragraph{Greedy scores at $p=1$.}
The QPG-LDF/SDF rules use $\widehat P_v$ directly in place of the measured
marginals. The QEG rules require the conditional cost obtained by fixing a
vertex; at $p=1$ this is implemented by running \cref{alg:mcsample}
on the graph with the corresponding mixer term removed (vertex $v$ pinned to the
cover), and estimating $\widehat F$ on the resulting tree. As with the full
algorithm, this incurs one estimator per candidate vertex per greedy step, i.e. 
$\mathcal{O}(n)$ traversals per step and $\mathcal{O}(n^2)$ per run.

\subsection{Numerical results up to \texorpdfstring{$n=300$}{n=300}}

\Cref{fig:montecarlo} reports the mean approximation ratio of the $p=1$ surrogate for QPG and QEG, in both LDF and SDF variants, on $20$ instances per size for $3$-regular, $4$-regular, and Erd\H{o}s--R\'enyi ($\rho=0.3$) graphs, with sizes up to $n=300$.
Each score is estimated from $N=10^{4}$ Monte Carlo samples drawn using \cref{alg:mcsample}, with the vertices visited in descending-degree order and the walk time fixed at the family-specific value $t^\star$ selected separately for the $p=1$ Monte Carlo experiment.
The qualitative ordering established in the main text persists in this larger-size study: QEG-LDF remains the strongest variant across all three families, with no observed degradation over the simulated range.
These results demonstrate the performance of the Monte Carlo surrogate at larger graph sizes; they do not, on their own, establish a quantum advantage.

\begin{figure}
    \centering
    \pgfplotsset{compat=1.18}
\usepgfplotslibrary{groupplots}

\begin{tikzpicture}
\definecolor{G}{RGB}{0,0,0}
\definecolor{p1}{RGB}{0,114,178}
\definecolor{p2}{RGB}{213,94,0}
\definecolor{p3}{RGB}{0,158,115}
\definecolor{p4}{RGB}{204,121,167}
\definecolor{p5}{RGB}{86,180,233}
\definecolor{p6}{RGB}{230,159,0}

\pgfkeys{/pgf/number format/.cd,1000 sep={}}

\pgfplotsset{
  qpgldf/.style={
    color=p1,
    mark=o,
    mark options={draw=p1,fill=white},
    solid,
  },
  qegldf/.style={
    color=p2,
    mark=square,
    mark options={draw=p2,fill=p2},
    solid,
  },
  qpgsdf/.style={
    color=p5,
    mark=o,
    mark options={draw=p5,fill=white,solid},
    dashed,
    error bars/error bar style={solid},
    error bars/error mark options={rotate=90,mark size=1.5pt,solid},
  },
  qegsdf/.style={
    color=p6,
    mark=square,
    mark options={draw=p6,fill=p6,solid},
    dashed,
    error bars/error bar style={solid},
    error bars/error mark options={rotate=90,mark size=1.5pt,solid},
  },
}

\begin{groupplot}[
  group style={
    group size=3 by 1,
    horizontal sep=0.5cm,
    y descriptions at=edge left,
  },
  width=6cm,
  height=6cm,
  xlabel={Graph size $n$},
  ylabel={Mean ratio},
  xmin=8,
  ymin=1,
  ymax=1.15,
  grid=both,
  xtick={50,100,150,200,250,300},
  tick align=inside,
  tick label style={font=\normalsize},
  label style={font=\normalsize},
  title style={font=\normalsize},
  every axis plot/.append style={
    line width=1pt,
    mark size=2pt,
    error bars/y dir=both,
    error bars/y explicit=true,
    error bars/error bar style={line width=1.0pt},
    error bars/error mark options={rotate=90,mark size=1.5pt}
  },
]

\nextgroupplot[title={3-regular}]

\addplot+[
  qpgldf,
  error bars/.cd,
  y dir=both,
  y explicit
]
table[
  x=graph_nodes,
  y=mean,
  y error expr=\thisrow{std}/sqrt(20),
  col sep=comma
]
{figures/final_reduced/MC/3-regular/reduced_conditional_smc.csv};

\addplot+[
  qegldf,
  error bars/.cd,
  y dir=both,
  y explicit
]
table[
  x=graph_nodes,
  y=mean,
  y error expr=\thisrow{std}/sqrt(20),
  col sep=comma
]
{figures/final_reduced/MC/3-regular/reduced_n2_reduced.csv};

\addplot+[
  qpgsdf,
  error bars/.cd,
  y dir=both,
  y explicit
]
table[
  x=graph_nodes,
  y=mean,
  y error expr=\thisrow{std}/sqrt(20),
  col sep=comma
]
{figures/final_reduced/MC/3-regular/reduced_conditional_smc_reverse.csv};

\addplot+[
  qegsdf,
  error bars/.cd,
  y dir=both,
  y explicit
]
table[
  x=graph_nodes,
  y=mean,
  y error expr=\thisrow{std}/sqrt(20),
  col sep=comma
]
{figures/final_reduced/MC/3-regular/reduced_n2_reduced_reverse.csv};

\nextgroupplot[title={4-regular}]

\addplot+[
  qpgldf,
  error bars/.cd,
  y dir=both,
  y explicit
]
table[
  x=graph_nodes,
  y=mean,
  y error expr=\thisrow{std}/sqrt(20),
  col sep=comma
]
{figures/final_reduced/MC/4-regular/reduced_conditional_smc.csv};

\addplot+[
  qegldf,
  error bars/.cd,
  y dir=both,
  y explicit
]
table[
  x=graph_nodes,
  y=mean,
  y error expr=\thisrow{std}/sqrt(20),
  col sep=comma
]
{figures/final_reduced/MC/4-regular/reduced_n2_reduced.csv};

\addplot+[
  qpgsdf,
  error bars/.cd,
  y dir=both,
  y explicit
]
table[
  x=graph_nodes,
  y=mean,
  y error expr=\thisrow{std}/sqrt(20),
  col sep=comma
]
{figures/final_reduced/MC/4-regular/reduced_conditional_smc_reverse.csv};

\addplot+[
  qegsdf,
  error bars/.cd,
  y dir=both,
  y explicit
]
table[
  x=graph_nodes,
  y=mean,
  y error expr=\thisrow{std}/sqrt(20),
  col sep=comma
]
{figures/final_reduced/MC/4-regular/reduced_n2_reduced_reverse.csv};

\nextgroupplot[title={Erdos-Renyi}]

\addplot+[
  qpgldf,
  error bars/.cd,
  y dir=both,
  y explicit
]
table[
  x=graph_nodes,
  y=mean,
  y error expr=\thisrow{std}/sqrt(20),
  col sep=comma
]
{figures/final_reduced/MC/erdos/reduced_conditional_smc.csv};

\addplot+[
  qegldf,
  error bars/.cd,
  y dir=both,
  y explicit
]
table[
  x=graph_nodes,
  y=mean,
  y error expr=\thisrow{std}/sqrt(20),
  col sep=comma
]
{figures/final_reduced/MC/erdos/reduced_n2_reduced.csv};

\addplot+[
  qpgsdf,
  error bars/.cd,
  y dir=both,
  y explicit
]
table[
  x=graph_nodes,
  y=mean,
  y error expr=\thisrow{std}/sqrt(20),
  col sep=comma
]
{figures/final_reduced/MC/erdos/reduced_conditional_smc_reverse.csv};

\addplot+[
  qegsdf,
  error bars/.cd,
  y dir=both,
  y explicit
]
table[
  x=graph_nodes,
  y=mean,
  y error expr=\thisrow{std}/sqrt(20),
  col sep=comma
]
{figures/final_reduced/MC/erdos/reduced_n2_reduced_reverse.csv};

\end{groupplot}

\node[
  anchor=north,
  yshift=-0.35cm,
  draw=black,
  fill=white,
  rounded corners,
  inner xsep=7pt,
  inner ysep=5pt,
  font=\small
] (manuallegend) at (current bounding box.south) {%
\begin{tabular}{@{}ll@{\hspace{0.7cm}}ll@{}}
\tikz[baseline=-0.6ex]{
  \begin{axis}[
    hide axis,
    scale only axis,
    width=0.9cm,
    height=0.25cm,
    xmin=0,
    xmax=1,
    ymin=0,
    ymax=1,
    every axis plot/.append style={line width=1pt,mark size=2pt},
  ]
  \addplot+[qpgldf] coordinates {(0,0.5) (1,0.5)};
  \end{axis}
}
&
QPG-LDF
&
\tikz[baseline=-0.6ex]{
  \begin{axis}[
    hide axis,
    scale only axis,
    width=0.9cm,
    height=0.25cm,
    xmin=0,
    xmax=1,
    ymin=0,
    ymax=1,
    every axis plot/.append style={line width=1pt,mark size=2pt},
  ]
  \addplot+[qegldf] coordinates {(0,0.5) (1,0.5)};
  \end{axis}
}
&
QEG-LDF
\\[0.25em]
\tikz[baseline=-0.6ex]{
  \begin{axis}[
    hide axis,
    scale only axis,
    width=0.9cm,
    height=0.25cm,
    xmin=0,
    xmax=1,
    ymin=0,
    ymax=1,
    every axis plot/.append style={line width=1pt,mark size=2pt},
  ]
  \addplot+[qpgsdf] coordinates {(0,0.5) (1,0.5)};
  \end{axis}
}
&
QPG-SDF
&
\tikz[baseline=-0.6ex]{
  \begin{axis}[
    hide axis,
    scale only axis,
    width=0.9cm,
    height=0.25cm,
    xmin=0,
    xmax=1,
    ymin=0,
    ymax=1,
    every axis plot/.append style={line width=1pt,mark size=2pt},
  ]
  \addplot+[qegsdf] coordinates {(0,0.5) (1,0.5)};
  \end{axis}
}
&
QEG-SDF
\end{tabular}%
};

\useasboundingbox (current bounding box.north west) (manuallegend.south east);

\end{tikzpicture}
    \caption{
        Mean approximation ratio obtained using the $p=1$ Monte Carlo surrogate on $20$ graph instances of each size and family.
        The panels show $3$-regular graphs (left), $4$-regular graphs (centre), and connected Erd\H{o}s--R\'enyi graphs with $\rho=0.3$ (right).
        Results are shown for the QPG and QEG scores with both LDF and SDF reduction rules, for graph sizes up to $n=300$. The error bars show the standard error of the mean.
    }
\label{fig:montecarlo}
\end{figure}

\end{document}